\title{Seven kinds of equivalent models \\ for generalized coalition logics\footnote{
This paper is an extension of the conference paper \cite{chen_each_2025}.
}}
\author{
Zixuan Chen${}^{1,2}$ and Fengkui Ju${}^{1,3}$\footnote{Corresponding author} \vspace{5pt} \\
{\small {$^1$School of Philosophy, Beijing Normal University, China}} \vspace{2.5pt} \\
{\small {$^2$\href{mailto:zixuan.chen21@outlook.com}{zixuan.chen21@outlook.com}}} \vspace{2.5pt} \\
{\small {$^3$\href{mailto:fengkui.ju@bnu.edu.cn}{fengkui.ju@bnu.edu.cn}}}
}
\date{}
\begin{document}

\maketitle

\setlength{\parskip}{0.5em}


\begin{abstract}

\noindent Coalition Logic is an important logic in logical research on strategic reasoning.
In two recent papers, Li and Ju argued that generally, concurrent game models, models of Coalition Logic, have three too strong assumptions: seriality, independence of agents, and determinism.
They presented eight coalition logics based on eight classes of general concurrent game models, determined by which of the three assumptions are met.
In this paper, we show that each of the eight sets of valid formulas of the eight logics is determined by six other kinds of models, that is, single-coalition-first action models, single-coalition-first actual neighborhood models, clear grand-coalition-first action models, clear single-coalition-first actual neighborhood models, tree-like grand-coalition-first action models, and tree-like single-coalition-first actual neighborhood models.

\medskip

\noindent \textbf{Keywords:} coalition logics, action models, neighborhood models, seriality, independence of agents, determinism

\end{abstract}

\section{Introduction}
\label{sec:Introduction}

\subsection{Coalition Logic and its two kinds of semantics}

Coalition Logic $\FCL$, proposed by Pauly \cite{pauly_modal_2002}, is an important logic in logical research on strategic reasoning.
Many logics in this field are extensions of $\FCL$. For example, Alternating-time Temporal Logic $\FATL$ \cite{alur_alternating-time_2002} is a temporal extension of $\FCL$, and Strategy Logic $\FSL$ \cite{mogavero_reasoning_2014} is an extension of $\FCL$, whose language has quantifiers for and names of strategies. We refer to \cite{benthem_models_2015}, \cite{agotnes_knowledge_2015} and \cite{sep-logic-power-games} for overviews of the area.

The language of $\FCL$ is a modal language with the featured operator $\Fclo{\FCC} \phi$, indicating \emph{some available joint action of coalition $\FCC$ ensures $\phi$}.
$\FCL$ has two kinds of equivalent semantics~\cite{pauly_modal_2002, goranko_strategic_2013}: action semantics and neighborhood semantics.

In action semantics, models are \emph{concurrent game models}. Roughly, in a concurrent game model:
there are some states;
there are some agents, who can form coalitions;
for every coalition, there is an availability function, specifying available joint actions of the coalition at states;
for every coalition, there is an outcome function, specifying possible outcome states of joint actions of the coalition.
The formula $\Fclo{\FCC} \phi$ is true at a state in a concurrent game model if $\FCC$ has an available joint action such that $\phi$ is true at every possible outcome state of the action.

In neighborhood semantics, models are \emph{alpha neighborhood models}. Alpha neighborhood models do not have actions. Consequently, they do not have availability functions and outcome functions. Instead, for every coalition, there is a neighborhood function that specifies the alpha powers of the coalition.
An alpha power is a set of states into which the coalition can force the next moment to fall~\cite{moulin_cores_1982}.
The formula $\Fclo{\FCC} \phi$ is true at a state in an alpha neighborhood model if $\FCC$ has an alpha power such that $\phi$ is true at every state of the power.

Concurrent game models and alpha neighborhood models of $\FCL$ have their respective advantages and disadvantages. Concurrent game models are more intuitive. However, they contain information for which the language of $\FCL$ is blind, that is, actions. Neighborhood models are simpler: they only keep the needed information for evaluating formulas of $\FCL$. Some work in the line of $\FCL$, such as \cite{agotnes_coalition_2018}, uses neighborhood models

\subsection{Eight coalition logics determined by eight classes of general concurrent game models}

Concurrent game models have three assumptions, among others.
The first one is \emph{seriality}: every coalition always has an available joint action.
The second one is \emph{the independence of agents}: the merge of two available joint actions of two disjoint coalitions is always an available joint action of the union of the two coalitions.
The third one is \emph{determinism}: every available joint action of the grand coalition has a unique outcome. The three assumptions are also reflected in the alpha neighborhood models of $\FCL$.

In the agency literature, some authors have used models in which one or more of the three assumptions do not hold.
In the tradition of coalition logic, Jiang and Naumov \cite{JIANG2022103727} used non-deterministic concurrent game models. 
In the tradition of STIT (see to it that) logic, Sergot \cite{sergot_examples_2014, sergot_actual_2022} argued that neither the independence of agents nor determinism commonly holds.  
At the intersection of $\FATL$ and STIT logic, Boudou and Lorini \cite{10.5555/3237383.3237443} utilized non-deterministic concurrent game models.
In the tradition of Propositional Dynamic Logic, Royakkers and Hughes \cite{royakkers_blame_2020} argued that the independence of agents does not hold generally and used models in which the independence of agents and determinism fail.

Against the agency literature, Li and Ju \cite{li_minimal_2025} argued that the three assumptions are too strong.
Here, we briefly mention their arguments.
Usually, games terminate when reaching specific states. For example, a paper-scissors-rock game terminates when a winner is decided. Intuitively, in terminating states, players have no available actions. Thus, seriality might not hold.
There are many situations in which whether a coalition can perform an action is conditional on other agents' actions at the same time. Here is an example. There are two agents in a room, $a$ and $b$, and only one chair. Agent $a$ can sit, and agent $b$ can sit, but they cannot sit at the same time. The independence of agents fails in these situations.
In many situations, some joint actions of all behaving agents have more than one outcome state. The following example, from \cite{sergot_examples_2014}, can illustrate this.
A vase stands on a table. There is an agent $a$ who can lift or lower the end of the table. If the table tilts, the vase might fall, and if it falls, it might break. Therefore, determinism might not hold.

Based on \emph{general concurrent game models}, which do not have the three assumptions, Li and Ju \cite{li_minimal_2025} proposed a Minimal Coalition Logic $\FMCL$.
Although the three assumptions are generally too strong, we might want to keep some of them when constructing logics for strategic reasoning in some situations. Considering which of the three properties we want to keep, there are eight coalition logics in total. Li and Ju \cite{li_completeness_2024} showed the completeness of the eight logics in a uniform way.

\subsection{Our work}

The eight coalition logics have eight sets of valid formulas. In this work, we show the following: for each of the eight sets of valid formulas, (1) it is also determined by three other classes of \emph{action models}, and (2) it is determined by three other classes of \emph{actual neighborhood models}.
These different kinds of models make good sense for representing coalitional powers. Our work makes it clear what the logics with respect to these kinds of models are.

Action models are those like concurrent game models. In actual neighborhood models, for every coalition, there is a neighborhood function that specifies \emph{actual powers} of the coalition.
An actual power is a set of states such that the coalition has a choice such that (1) the choice forces the next moment to fall into the set, and (2) every state in the set is compatible with the choice \cite{benthem_new_2019}.

To ease understanding the differences among the six kinds of models, in this paper, we call general concurrent game models \emph{grand-coalition-first action models}. The reason is that in general concurrent game models, the outcome functions of all coalitions can be derived from the outcome functions of the grand coalition.

The six kinds of models are as follows:
\begin{enumerate}

\item 

\begin{enumerate}

\item 

The first kind is \emph{single-coalition-first action models}. In this kind of models, the outcome functions for single coalitions determine those for other coalitions.
This kind of models makes good sense for situations where every agent can change a component of the state, and different components of the state that can be changed by different agents are independent of each other.

\item 

The second kind is \emph{single-coalition-first actual neighborhood models}. In this kind of models, the neighborhood functions for single coalitions determine those for other coalitions.

\end{enumerate}

\item 

\begin{enumerate}

\item 

The third kind is \emph{clear grand-coalition-first action models}. In this kind of models, different joint actions of the grand coalition have different outcomes. Clear grand-coalition-first action models are quite common in game theory.

\item 

The fourth kind is \emph{clear single-coalition-first actual neighborhood models}. In this kind of models, the actual powers of an agent are pairwise disjoint.

\end{enumerate}

\item 

\begin{enumerate}

\item 

The fifth kind is \emph{tree-like grand-coalition-first action models}. In this kind of models, every state has a determined history.

\item 

The sixth kind is \emph{tree-like single-coalition-first actual neighborhood models}. In this kind of models, every state has a determined history, too.

\end{enumerate}

\end{enumerate}

This work offers new models for logics like Alternating-time Temporal Logic or STIT logic, which will be our future work.

\paragraph{Structure of the paper}

\begin{itemize}
    
\item 

In Section \ref{section:General settings of coalition logics}, we present some general settings of coalition logics, including their language, their action semantics and neighborhood semantics, and the transformation from action semantics to neighborhood semantics.

\item 

In Section \ref{section:Eight coalition logics determined by eight classes of grand-coalition-first action models}, we introduce the eight coalition logics determined by eight classes of grand-coalition-first action models.

\item 

In Section \ref{section:Single-coalition-first action models and single-coalition-first neighborhood models}, we define single-coalition-first action models and show that they are grand-coalition-first action models. 

We define single-coalition-first actual neighborhood models and show that they can \emph{represent} single-coalition-first action models.

\item 

In Section \ref{section:Clear grand-coalition-first action models and clear single-coalition-first neighborhood models}, we define clear grand-coalition-first action models and show that they are single-coalition-first action models.

We define clear single-coalition-first actual neighborhood models and show that they can represent clear grand-coalition-first action models.

\item 

In Section \ref{section:Tree-like grand-coalition-first action models and tree-like single-coalition-first neighborhood models}, we define tree-like grand-coalition-first action models and show that they are clear grand-coalition-first action models.

We define tree-like single-coalition-first actual neighborhood models and show that they can represent tree-like grand-coalition-first action models.

\item 

In Section \ref{section:Each of the eight coalition logics is determined by the six kinds of models, too}, we show that each of the eight coalition logics is determined by the six kinds of models with respective properties.

\item 

We conclude in Section \ref{section:Conclusion}.

\end{itemize}

\newcommand{\ja}[1]{\sigma_{{#1}}}

\newcommand{\FAE}{\mathtt{AE}}
\newcommand{\FLE}{\mathtt{LE}}

\newcommand{\FYY}{\mathrm{Y}}

\newcommand{\AM}{\mathtt{AM}}
\newcommand{\FAM}{\mathtt{AM}}

\newcommand{\NM}{\mathtt{NM}}
\newcommand{\FNM}{\mathtt{NM}}

\newcommand{\GAM}{\mathtt{G}\textbf{-}\mathtt{AM}}
\newcommand{\FGAM}{\mathtt{G}\textbf{-}\mathtt{AM}}

\newcommand{\SAM}{\mathtt{S}\textbf{-}\mathtt{AM}}
\newcommand{\FSAM}{\mathtt{S}\textbf{-}\mathtt{AM}}

\newcommand{\SNM}{\mathtt{S}\textbf{-}\mathtt{NM}}
\newcommand{\FSNM}{\mathtt{S}\textbf{-}\mathtt{NM}}

\newcommand{\FES}{\mathrm{ES}}

\newcommand{\Fnei}{\mathtt{nei}}

\newcommand{\Fsuc}{\Fred{\mathtt{suc}}}

\section{General settings of coalition logics}
\label{section:General settings of coalition logics}

\subsection{Language}

Let $\FAP$ be a countable set of atomic propositions, and $\FAG$ be a finite \emph{nonempty} set of agents. The subsets of $\FAG$ are called \Fdefs{coalitions}, and $\FAG$ is called the \Fdefs{grand coalition}.
In the sequel, when no confusion arises, we often write $a$ instead of $\{a\}$, where $a \in \FAG$.

\begin{definition}[The language $\Phi$]
\label{definition:The language Phi CL}

The language $\Phi$ is defined as follows, where $p$ ranges over $\FAP$ and $\FCC \subseteq \FAG$:
\[
\phi ::=\top \mid p \mid \neg \phi \mid (\phi \wedge \phi) \mid \Fclo{\FCC} \phi
\]

\end{definition}

The formula $\Fclo{\FCC} \phi$ indicates \emph{some available joint action of $\FCC$ ensures $\phi$}. Here, the notation of modalities differs from their notation in the literature. We do this to indicate the iteration of quantifiers in their meaning.
The propositional connectives $\bot, \lor, \rightarrow$ and $\leftrightarrow$ are defined as usual.

\subsection{Action semantics}

Let $\FAC$ be a \emph{nonempty} set of actions. For every $\FCC \subseteq \FAG$, we define $\FJA_\FCC = \{\sigma_\FCC \mid \sigma_\FCC: \FCC \rightarrow \FAC\}$, which is a set of \Fdefs{joint actions} of $\FCC$. Joint actions of the grand coalition are called \Fdefs{action profiles}. In the sequel, we sometimes use sequences of actions to represent joint actions of coalitions, where an order among agents is implicitly presupposed. We define $\FJA = \bigcup \{\FJA_\FCC \mid \FCC \subseteq \FAG\}$.

For every $\FCC, \FDD \subseteq \FAG$ such that $\FDD \subseteq \FCC$ and $\sigma_\FCC \in \FJA_\FCC$, we use $\sigma_\FCC |_\FDD$ to indicate the \emph{restriction} of $\sigma_\FCC$ to $\FDD$, which is in $\FJA_\FDD$.

\begin{definition}[Action models]
\label{definition:??}

An \Fdefs{action model} is a tuple $\FAM = (\FST, \FAC, \{\Fout_\FCC \mid \FCC \subseteq \FAG\}, \{\Faja_\FCC \mid \FCC \subseteq \FAG\}, \Flab)$, where:
\begin{itemize}

\item

$\FST$ is a nonempty set of states.

\item

$\FAC$ is a nonempty set of actions.

\item

for every $\FCC \subseteq \FAG$, $\Fout_\FCC: \FST \times \FJA_\FCC \rightarrow \mathcal{P}(\FST)$ is an \Fdefs{outcome function} for $\FCC$.

\item

for every $\FCC \subseteq \FAG$, $\Faja_\FCC: \FST \rightarrow \mathcal{P} (\FJA_\FCC)$ is an \Fdefs{availability function} for $\FCC$ meeting the following condition: for all $s \in \FST$, $\Faja_\FCC (s) = \{\ja{\FCC} \in \FJA_\FCC \mid \Fout_\FCC (s, \ja{\FCC}) \neq \emptyset\}$.

\item

$\Flab: \FST \rightarrow \mathcal{P}(\FAP)$ is a \Fdefs{labeling function}.

\end{itemize}
\end{definition}


Different coalition logics might put various constraints on action models.

\begin{definition}[Action semantics]
\label{definition:??}

Let $\FAM = (\FST, \FAC, \{\Fout_\FCC \mid \FCC \subseteq \FAG\}, \{\Faja_\FCC \mid \FCC \subseteq \FAG\}, \Flab)$ be an action model. Then truth conditions of formulas in $\Phi$ at states $s$
are defined recursively as follows:

\begin{tabular}{lll}
$\FAM, s \Vdash \top$ & & \\
$\FAM, s \Vdash p$ & $\Leftrightarrow$ & \parbox[t]{28em}{$p \in \Flab (s)$} \\
$\FAM, s \Vdash \neg \phi$ & $\Leftrightarrow$ & \parbox[t]{28em}{not $\FAM, s \Vdash \phi$} \\
$\FAM, s \Vdash \phi \land \psi$ & $\Leftrightarrow$ & \parbox[t]{28em}{$\FAM, s \Vdash \phi$ and $\FAM, s \Vdash \psi$} \\
$\FAM, s \Vdash \Fclo{\FCC} \phi$ & $\Leftrightarrow$ & \parbox[t]{28em}{there is $\ja{\FCC} \in \Faja_\FCC (s)$ such that for all $t \in \Fout_\FCC (s, \ja{\FCC})$, $\FAM, t \Vdash \phi$}
\end{tabular}

\end{definition}

\subsection{Neighborhood semantics}

\begin{definition}[Actual neighborhood models]
\label{definition:??}

An \Fdefs{actual neighborhood model} is a tuple $\FNM = (\FST, \{\Fnei_\FCC \mid \FCC \subseteq \FAG\}, \Flab)$, where:
\begin{itemize}

\item

$\FST$ is a nonempty set of states.

\item

for every $\FCC \subseteq \FAG$, $\Fnei_\FCC: \FST \rightarrow \mathcal{P}(\mathcal{P}(\FST))$ is a \Fdefs{neighborhood function} for $\FCC$.

\item

$\Flab: \FST \rightarrow \mathcal{P}(\FAP)$ is a labeling function.

\end{itemize}

\end{definition}

Intuitively, $\Fnei_\FCC (s)$ consists of \Fdefs{actual powers} of $\FCC$ at $s$.


\begin{definition}[Alpha neighborhood models]
\label{definition:??}

An \Fdefs{alpha neighborhood model} is a tuple $\FNM^\alpha = (\FST, \{\Fnei^\alpha_\FCC \mid \FCC \subseteq \FAG\}, \Flab)$, where:
\begin{itemize}

\item

$\FST$ is a nonempty set of states.

\item

for every $\FCC \subseteq \FAG$, $\Fnei^\alpha_\FCC: \FST \rightarrow \mathcal{P}(\mathcal{P}(\FST))$ is a \Fdefs{neighborhood function} for $\FCC$ such that for every $s \in \FST$, $\Fnei^\alpha_\FCC (s)$ is closed under supersets.

\item

$\Flab: \FST \rightarrow \mathcal{P}(\FAP)$ is a labeling function.

\end{itemize}

\end{definition}

Intuitively, $\Fnei^\alpha_\FCC (s)$ consists of \Fdefs{alpha powers} of $\FCC$ at $s$.

Different coalition logics might put various constraints on neighborhood models.

\begin{definition}[Neighborhood semantics]
\label{definition:??}

Let $\FNM = (\FST, \{\Fnei_\FCC \mid \FCC \subseteq \FAG\}, \Flab)$ be an actual/alpha neighborhood model. Then truth conditions of formulas in $\Phi$ at states $s$
are defined recursively as follows:

\begin{tabular}{lll}
$\FNM, s \Vdash \top$ & & \\
$\FNM, s \Vdash p$ & $\Leftrightarrow$ & \parbox[t]{28em}{$p \in \Flab (s)$} \\
$\FNM, s \Vdash \neg \phi$ & $\Leftrightarrow$ & \parbox[t]{28em}{not $\FNM, s \Vdash \phi$} \\
$\FNM, s \Vdash \phi \land \psi$ & $\Leftrightarrow$ & \parbox[t]{28em}{$\FNM, s \Vdash \phi$ and $\FNM, s \Vdash \psi$} \\
$\FNM, s \Vdash \Fclo{\FCC} \phi$ & $\Leftrightarrow$ & \parbox[t]{28em}{there is $\FYY \in \Fnei_\FCC (s)$ such that for all $t \in \FYY$, $\FNM, t \Vdash \phi$}
\end{tabular}

\end{definition}

\subsection{Transformation of action semantics to neighborhood semantics}

\begin{definition}[Actual effectivity functions and alpha effectivity functions of action models]
\label{definition:??}

Let $\FAM = (\FST, \FAC, \{\Fout_\FCC \mid \FCC \subseteq \FAG\}, \{\Faja_\FCC \mid \FCC \subseteq \FAG\}, \Flab)$ be an action model.

\begin{itemize}

\item 

For every $\FCC \subseteq \FAG$, the function $\FAE_\FCC$ defined as follows is called the \Fdefs{actual effectivity function} for $\FCC$ in $\FAM$: for every $s \in \FST$,
\[
\FAE_\FCC (s) = \{\Fout_\FCC (s, \sigma_\FCC) \mid \sigma_\FCC \in \Faja_\FCC (s)\}.
\]

\item

For every $\FCC \subseteq \FAG$, the function $\FLE_\FCC$ defined as follows is called the \Fdefs{alpha effectivity function} for $\FCC$ in $\FAM$: for every $s \in \FST$,
\[
\FLE_\FCC (s) = \{\FYY \subseteq \FST \mid \Fout_\FCC (s, \sigma_\FCC) \subseteq \FYY \text{ for some } \sigma_\FCC \in \Faja_\FCC (s)\}.
\]

\end{itemize}

\end{definition}

Actual effectivity functions are more transparent than alpha effectivity functions in the following sense: every element $\FYY$ of $\FAE_\FCC (s)$ can be understood as a class of equivalent actions such that $\FYY$ is the set of their possible outcome states at $s$.

The following fact, which is easy to verify, indicates that we can transform an action model to an actual/alpha neighborhood model without changing the satisfaction of formulas:

\begin{fact}[]
\label{fact:actual alpha effectivity functions}

Let $\FAM = (\FST, \FAC, \{\Fout_\FCC \mid \FCC \subseteq \FAG\}, \{\Faja_\FCC \mid \FCC \subseteq \FAG\}, \Flab)$ be an action model.
\begin{itemize}

\item 

Then $\FNM = (\FST, \{\FAE_\FCC \mid \FCC \subseteq \FAG\}, \Flab)$ is an actual neighborhood model such that for every state $s$ of $\FAM$ and formula $\phi$ in $\Phi$, $\FAM,s \Vdash \phi$ if and only if $\FNM,s \Vdash \phi$.

\item

Then $\FNM^\alpha = (\FST, \{\FLE_\FCC \mid \FCC \subseteq \FAG\}, \Flab)$ is an alpha neighborhood model such that for every state $s$ of $\FAM$ and formula $\phi$ in $\Phi$, $\FAM,s \Vdash \phi$ if and only if $\FNM^\alpha,s \Vdash \phi$.

\end{itemize}

\end{fact}

\begin{definition}[Action models $z$-representable by actual neighborhood models and $\alpha$-representable by alpha neighborhood models]
\label{definition:??}
~

\begin{itemize}

\item 

An action model $\FAM = (\FST, \FAC, \{\Fout_\FCC \mid \FCC \subseteq \FAG\}, \{\Faja_\FCC \mid \FCC \subseteq \FAG\}, \Flab)$ is \Fdefs{$z$-representable} by an actual neighborhood model $\FNM = (\FST, \{\Fnei_\FCC \mid \FCC \subseteq \FAG\}, \Flab)$ if for every $\FCC \subseteq \FAG$, $\FAE_\FCC = \Fnei_\FCC$.

A class of action models $\mathbf{AM}$ is \Fdefs{$z$-representable} by a class of actual neighborhood models $\mathbf{NM}$ if (1) every action model in $\mathbf{AM}$ is $z$-representable by an actual neighborhood model in $\mathbf{NM}$, and (2) every actual neighborhood model in $\mathbf{NM}$ $z$-represents an action model in $\mathbf{AM}$.

\item

An action model $\FAM = (\FST, \FAC, \{\Fout_\FCC \mid \FCC \subseteq \FAG\}, \{\Faja_\FCC \mid \FCC \subseteq \FAG\}, \Flab)$ is \Fdefs{$\alpha$-representable} by an alpha neighborhood model $\FNM^\alpha = (\FST, \{\Fnei^\alpha_\FCC \mid \FCC \subseteq \FAG\}, \Flab)$ if for every $\FCC \subseteq \FAG$, $\FLE_\FCC = \Fnei^\alpha_\FCC$.

A class of action models $\mathbf{AM}$ is \Fdefs{$\alpha$-representable} by a class of alpha neighborhood models $\mathbf{NM}^\alpha$ if (1) every action model in $\mathbf{AM}$ is $\alpha$-representable by an alpha neighborhood model in $\mathbf{NM}^\alpha$, and (2) every alpha neighborhood model in $\mathbf{NM}^\alpha$ $\alpha$-represents an action model in $\mathbf{AM}$.

\end{itemize}

\end{definition}

The following result can be easily shown:

\begin{theorem}[Representability implies equivalence]
\label{theorem:classes representable}
~

\begin{enumerate}[label=(\arabic*),leftmargin=3.33em]

\item

For every class of action models $\mathbf{AM}$ and class of neighborhood models $\mathbf{NM}$, if $\mathbf{AM}$ is $z$-representable by $\mathbf{NM}$, then for every formula $\phi$ in $\Phi$, $\phi$ is valid with respect to $\mathbf{AM}$ if and only if $\phi$ is valid with respect to $\mathbf{NM}$.

\item 

For every class of action models $\mathbf{AM}$ and class of alpha neighborhood models $\mathbf{NM}^\alpha$, if $\mathbf{AM}$ is $\alpha$-representable by $\mathbf{NM}^\alpha$, then for every formula $\phi$ in $\Phi$, $\phi$ is valid with respect to $\mathbf{AM}$ if and only if $\phi$ is valid with respect to $\mathbf{NM}^\alpha$.

\end{enumerate}

\end{theorem}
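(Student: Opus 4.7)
The plan is to prove the state-level equivalences first; the consequences for validity on classes then follow immediately by quantifying over models and states. Both state-level claims proceed by a single induction on the structure of $\phi \in \Phi$, with the work concentrated in the modal clause.

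For part (1), I would fix an action model $\FAM$ and a neighborhood model $\FNM$ that $z$-represents it, so that $\FAM$ and $\FNM$ share the same state set $\FST$ and labeling $\Flab$, and $\FAE_\FCC = \Fnei_\FCC$ for every $\FCC \subseteq \FAG$. The cases $\top$, $p$, $\neg \phi$, and $\phi \land \psi$ are immediate from the shared $\Flab$ and the induction hypothesis. For $\Fclo{\FCC}\phi$, I would unfold the action semantics: $\FAM, s \Vdash \Fclo{\FCC}\phi$ iff there exists $\sigma_\FCC \in \Faja_\FCC(s)$ with $\FAM, t \Vdash \phi$ for all $t \in \Fout_\FCC(s, \sigma_\FCC)$. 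By the induction hypothesis applied to $\phi$, this is equivalent to the existence of such a $\sigma_\FCC$ with $\FNM, t \Vdash \phi$ for every $t \in \Fout_\FCC(s, \sigma_\FCC)$. Since $\FAE_\FCC(s) = \{\Fout_\FCC(s, \sigma_\FCC) \mid \sigma_\FCC \in \Faja_\FCC(s)\}$ coincides with $\Fnei_\FCC(s)$, ranging over outcome sets of available joint actions is the same as ranging over neighborhoods, so the condition is equivalent to $\FNM, s \Vdash \Fclo{\FCC}\phi$.

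For part (2), the only real change is in the modal clause, which uses $\FLE_\FCC = \Fnei^\alpha_\FCC$ instead. Here I would use the two directions of the ``exists a superset'' structure: from left to right, if $\sigma_\FCC \in \Faja_\FCC(s)$ witnesses $\FAM, s \Vdash \Fclo{\FCC}\phi$, then $Y := \Fout_\FCC(s, \sigma_\FCC)$ lies in $\FLE_\FCC(s) = \Fnei^\alpha_\FCC(s)$ and, by the induction hypothesis, every $t \in Y$ satisfies $\phi$ in $\FNM^\alpha$, so $\FNM^\alpha, s \Vdash \Fclo{\FCC}\phi$. Conversely, if some $\FYY \in \Fnei^\alpha_\FCC(s) = \FLE_\FCC(s)$ satisfies $\FNM^\alpha, t \Vdash \phi$ for all $t \in \FYY$, then by definition of $\FLE_\FCC$ there exists $\sigma_\FCC \in \Faja_\FCC(s)$ with $\Fout_\FCC(s, \sigma_\FCC) \subseteq \FYY$, hence every outcome $t$ lies in $\FYY$ and, by the induction hypothesis, satisfies $\phi$ in $\FAM$.

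I do not expect any genuine obstacle. The only point that needs mild care is to be explicit that in the alpha case the superset-closure of $\FLE_\FCC(s)$ is compatible with the ``some $\FYY$'' existential, whereas in the $z$ case the matching of $\FAE_\FCC(s)$ with $\Fnei_\FCC(s)$ is direct because each element of $\FAE_\FCC(s)$ is literally an outcome set. The consequences for classes then follow from the state-level equivalences by, in each direction, picking an arbitrary model and state of one class, invoking the representability hypothesis to produce a matching model in the other class, and applying the state-level result to preserve the truth value of $\phi$.
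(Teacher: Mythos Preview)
Your proposal is correct. The paper does not actually spell out a proof of this theorem---it only remarks that the result ``can be easily shown''---and the structural induction you outline (trivial base and Boolean cases via the shared $\FST$ and $\Flab$; modal case via $\FAE_\FCC = \Fnei_\FCC$ in (1) and via the superset structure of $\FLE_\FCC = \Fnei^\alpha_\FCC$ in (2)) is exactly the standard argument one would expect, so there is nothing substantive to compare.
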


Assume we have a coalition logic and the class of its models is $\mathbf{AM}$, a class of action models. By this result, if $\mathbf{AM}$ is representable by a class of neighborhood models $\mathbf{NM}$, then we can transform the action semantics of the coalition logic to the neighborhood semantics based on $\mathbf{NM}$.

From Fact \ref{fact:actual alpha effectivity functions}, from every action model in $\FAM$, we can get a neighborhood model $\FNM$ representing $\FAM$. Let $\mathbf{NM}$ be the collection of all these neighborhood models. Clearly, $\mathbf{AM}$ is representable by $\mathbf{NM}$. Note that this transformation is not very useful. To make a transformation meaningful, $\mathbf{NM}$ shall be defined \emph{directly}.

\medskip

\emph{In this paper, we mainly consider actual neighborhood models. In the sequel, by neighborhood models, we mean actual neighborhood models, unless explicitly stated.}

\section{Eight coalition logics determined by eight classes of \\ grand-coalition-first action models}
\label{section:Eight coalition logics determined by eight classes of grand-coalition-first action models}

\subsection{Eight classes of grand-coalition-first action models}

\begin{definition}[Grand-coalition-first action models]
\label{definition:??}

An action model $\FGAM = (\FST, \FAC, \{\Fout_\FCC \mid \FCC \subseteq \FAG\}, \{\Faja_\FCC \mid \FCC \subseteq \FAG\}, \Flab)$ is a \Fdefs{grand-coalition-first action model} if for all $\FCC \subseteq \FAG$, $s \in \FST$ and $\sigma_\FCC \in \FJA_\FCC$: $\Fout_\FCC (s, \ja{\FCC}) = \bigcup \{\Fout_\FAG (s, \ja{\FAG}) \mid \ja{\FAG} \in \FJA_\FAG \text{ and } \ja{\FCC} \subseteq \ja{\FAG}\}$.

\end{definition}

It can be seen that for every $\FCC \subseteq \FAG$, $\Fout_\FCC$ can be derived from $\Fout_\FAG$, which is why we use ``grand-coalition-first''.

\begin{definition}[Seriality, independence, and determinism of grand-coalition-first action models]
\label{definition:Seriality, independence, and determinism of grand-coalition-first action models}

Let $\FGAM = (\FST, \FAC, \{\Fout_\FCC \mid \FCC \subseteq \FAG\}, \{\Faja_\FCC \mid \FCC \subseteq \FAG\}, \Flab)$ be a grand-coalition-first action model.

We say:
\begin{itemize}

\item

$\GAM$ is \Fdefs{serial} if $\Faja_\FCC (s) \neq \emptyset$ for all $\FCC \subseteq \FAG$ and $s \in \FST$;

\item

$\GAM$ is \Fdefs{independent} if $\ja{\FCC} \cup \ja{\FDD} \in \Faja_{\FCC \cup \FDD} (s)$ for all $s\in \FST$, $\FCC, \FDD \subseteq \FAG$ such that $\FCC\cap \FDD = \emptyset$, $\ja{\FCC} \in \Faja_\FCC (s)$, and $\ja{\FDD} \in \Faja_\FDD (s)$;

\item

$\GAM$ is \Fdefs{deterministic} if ${\Fout_\FAG (s, \ja{\FAG})}$ is a singleton for all $s\in \FST$ and $\ja{\FAG}\in \Faja_\FAG (s)$.

\end{itemize}

\end{definition}

We let the three symbols $\mathtt{S}$, $\mathtt{I}$ and $\mathtt{D}$ correspond to the three properties, respectively, and let the eight strings $\epsilon$, $\mathtt{S}$, $\mathtt{I}$, $\mathtt{D}$, $\mathtt{SI}$, $\mathtt{SD}$, $\mathtt{ID}$, and $\mathtt{SID}$ correspond to the eight combinations of the three properties, respectively. We use $\FES$ to indicate the set of the eight strings.

For every $\FXX \in \FES$ and grand-coalition-first action model $\FGAM$, we say $\GAM$ is a \Fdefs{$\FXX$-model} if $\GAM$ has the properties corresponding to $\FXX$.

For every $\FXX \in \FES$, we use $\FXL$ to refer to the coalition logic determined by the class of grand-coalition-first action $\FXX$-models.

\subsection{Eight axiomatic systems}

\begin{definition}[An axiomatic system for $\FMCL$]
~

\noindent \emph{Axioms}:

\vspace{5pt}

\begin{tabular}{rl}
Tautologies ($\mathtt{A}\text{-}\mathtt{Tau}$): & all propositional tautologies \vspace{5pt} \\
No absurd available actions ($\mathtt{A}\text{-}\mathtt{NAAA}$): & $\neg \Fclo{\FCC} \bot$ \vspace{5pt} \\
Monotonicity of goals ($\mathtt{A}\text{-}\mathtt{MG}$): & $\Fclo{\emptyset} (\phi \rightarrow \psi) \rightarrow (\Fclo{\FCC} \phi \rightarrow \Fclo{\FCC} \psi)$ \vspace{5pt} \\
Monotonicity of coalitions ($\mathtt{A}\text{-}\mathtt{MC}$): & $\Fclo{\FCC} \phi \rightarrow \Fclo{\FDD} \phi$, where $\FCC \subseteq \FDD$
\end{tabular}

\vspace{5pt}

\noindent \emph{Inference rules}:

\vspace{5pt}

\begin{tabular}{rl}
Modus ponens ($\mathtt{MP}$): & $\dfrac{\phi, \phi \rightarrow \psi}
{\psi}$ \vspace{5pt} \\
Conditional necessitation ($\mathtt{CN}$): & $\dfrac{\phi}
{\Fclo{\FCC} \psi \rightarrow \Fclo{\emptyset} \phi}$
\end{tabular}

\end{definition}

We let the following formulas respectively correspond to $\mathtt{S}$, $\mathtt{I}$ and $\mathtt{D}$:

\medskip

\begin{tabular}{rl}
Seriality ($\mathtt{A}\text{-}\mathtt{Ser}$): & $\Fclo{\FCC} \top$ \vspace{5pt} \\
Independence of agents ($\mathtt{A}\text{-}\mathtt{IA}$): & $(\Fclo{\FCC} \phi \land \Fclo{\FDD} \psi) \rightarrow \Fclo{\FCC \cup \FDD} (\phi \land \psi)$, where $\FCC \cap \FDD = \emptyset$ \vspace{5pt} \\
Determinism ($\mathtt{A}\text{-}\mathtt{Det}$): & $\Fclo{\FCC} (\phi \lor \psi) \rightarrow (\Fclo{\FCC} \phi \lor \Fclo{\FAG} \psi)$
\end{tabular}

\begin{definition}[Axiomatic systems for $\FXL$]
\label{definition:Axiomatic systems for XL}

For every $\FXX$ in $\FES$, the axiomatic system for $\FXL$ consists of the axioms and inference rules of $\FMCL$, and the axioms corresponding to the elements of $\FXX$.

\end{definition}

\begin{theorem}[Soundness and completeness of $\FXL$~\cite{li_completeness_2024}]

For every $\FXX$ in $\FES$, the axiomatic system for $\FXL$ given in Definition \ref{definition:Axiomatic systems for XL} is sound and complete with respect to the set of valid formulas of $\FXL$.

\end{theorem}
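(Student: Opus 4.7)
The plan is to prove the two directions separately and uniformly in $\FXX$. For soundness, I would check that each axiom of $\FMCL$ is valid in every grand-coalition-first action model, and that each of the three optional axioms is valid exactly when the corresponding property is assumed. The axiom $\mathtt{A}\text{-}\mathtt{NAAA}$ is immediate from the definition of $\Faja_\FCC$, which only admits joint actions with a nonempty outcome. For $\mathtt{A}\text{-}\mathtt{MG}$, note that every outcome of an available action for $\FCC$ at $s$ is contained in $\Fsuc(s)$, while an outcome of an action for the empty coalition is a general cover of $\Fsuc(s)$, so propagating $\phi \to \psi$ along successors yields the conclusion. The axiom $\mathtt{A}\text{-}\mathtt{MC}$ follows from the recipe $\Fout_\FCC(s,\sigma_\FCC) = \bigcup\{\Fout_\FAG(s,\sigma_\FAG) \mid \sigma_\FCC \subseteq \sigma_\FAG\}$, which makes $\Fout_\FDD(s,\sigma_\FDD) \subseteq \Fout_\FCC(s,\sigma_\FDD|_\FCC)$ whenever $\FCC \subseteq \FDD$. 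For the property-indexed axioms, $\mathtt{A}\text{-}\mathtt{Ser}$ is read off directly, $\mathtt{A}\text{-}\mathtt{IA}$ combines independence with Fact \ref{fact:one direction}(1), and $\mathtt{A}\text{-}\mathtt{Det}$ uses that the unique outcome of an available action profile validates exactly one disjunct. Rule $\mathtt{MP}$ is standard, and $\mathtt{CN}$ is sound because any outcome of an available action for $\FCC$ is contained in $\Fsuc(s)$, which is itself covered by the outcomes of actions for the empty coalition.

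For completeness, the plan is a canonical-model construction parameterized by $\FXX$. States are the maximal $\FXL$-consistent subsets of $\Phi$, and I would first define candidate neighborhoods $N_\FCC(s) = \{|\phi| \mid \Fclo{\FCC}\phi \in s\}$, where $|\phi| = \{t \mid \phi \in t\}$. The axioms $\mathtt{A}\text{-}\mathtt{NAAA}$, $\mathtt{A}\text{-}\mathtt{MG}$ and $\mathtt{A}\text{-}\mathtt{MC}$ give, respectively, $\emptyset \notin N_\FCC(s)$, closure under provable equivalents, and coalition monotonicity $N_\FCC(s) \subseteq N_\FDD(s)$ when $\FCC \subseteq \FDD$. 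I would then manufacture the grand-coalition outcome function $\Fout_\FAG$ by indexing each grand action profile by a pair $(\sigma,Y)$ with $Y \in N_\FAG(s)$ and setting $\Fout_\FAG(s,(\sigma,Y)) = Y$, so that through the union recipe of Definition for sub-coalitions, the induced $\Fout_\FCC$ reproduces $N_\FCC$ coalition by coalition. The truth lemma then proceeds by induction on $\phi$, with the modal step reduced to this matching.

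The main obstacle will be enforcing whichever of $\mathtt{S}$, $\mathtt{I}$, $\mathtt{D}$ are included in $\FXX$ \emph{simultaneously} while still recovering $N_\FCC$ for every coalition. Seriality follows from $\mathtt{A}\text{-}\mathtt{Ser}$, which forces each $N_\FCC(s)$ to be nonempty, so the indexing scheme above produces available joint actions for every coalition. Independence is the subtler combinatorial step: using $\mathtt{A}\text{-}\mathtt{IA}$ I would show that whenever $X \in N_\FCC(s)$ and $Y \in N_\FDD(s)$ with $\FCC \cap \FDD = \emptyset$, there is some $Z \in N_{\FCC \cup \FDD}(s)$ with $Z \subseteq X \cap Y$, and then choose grand-coalition labels so that $\sigma_\FCC \cup \sigma_\FDD$ lifts to a profile witnessing $Z$. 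Determinism is the most delicate: I would refine the canonical frame to a state-duplicating variant, introducing one copy of an MCS for each singleton outcome required by a profile and invoking $\mathtt{A}\text{-}\mathtt{Det}$ to distribute $\Fclo{\FCC}(\phi \lor \psi)$ into a choice of disjunct per copy, thereby preserving the truth lemma. Since this uniform construction is precisely what is carried out in \cite{li_completeness_2024}, I would cite that work for the detailed verification while presenting the outline above.
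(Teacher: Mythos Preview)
The paper does not prove this theorem at all: it is stated with the attribution \cite{li_completeness_2024} in the theorem header and no proof environment follows. Your proposal therefore already exceeds what the paper does, and since you close by deferring to the same citation, there is no divergence to report.

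That said, a few remarks on your sketch. The soundness outline is essentially fine, though the wording for $\mathtt{A}\text{-}\mathtt{MG}$ is slightly off: the empty coalition has exactly one joint action (the empty function), whose outcome is $\Fsuc(s)$ itself, not a ``general cover'' of it. The completeness sketch is considerably more hand-wavy. The phrase ``indexing each grand action profile by a pair $(\sigma,Y)$'' conflates actions in $\FAC$ with profiles in $\FJA_\FAG$, and the real work in \cite{li_completeness_2024} lies precisely in designing $\FAC$ and $\Fout_\FAG$ so that the \emph{derived} outcome functions $\Fout_\FCC$ (via the union recipe) reproduce $N_\FCC$ for \emph{every} coalition simultaneously; your outline asserts this matching without indicating how it is achieved. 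Similarly, the determinism case requires more than state duplication: one must ensure the truth lemma survives the refinement. None of this is a problem given that both you and the paper ultimately point to the cited source, but if you intend the sketch to stand on its own, the completeness half would need substantial expansion.
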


\subsection{Some facts about grand-coalition-first action models}

In this subsection, we present some facts about grand-coalition-first action models that will be used later.

For every set $X$ of states, we say $\Delta \subseteq \mathcal{P} (X)$ is a \Fdefs{general cover} of $X$ if $\bigcup \Delta = X$. A general cover $\Delta$ is a \Fdefs{cover} if $\emptyset \notin \Delta$. It is easy to verify the following:

\begin{fact}
\label{fact:domain}

Let $\GAM = (\FST, \FAC, \{\Fout_\FCC \mid \FCC \subseteq \FAG\}, \{\Faja_\FCC \mid \FCC \subseteq \FAG\}, \Flab)$ be a grand-coalition-first action model. 

Then, for every $\FCC \subseteq \FAG$ and $s \in \FST$, $\{ \Fout_\FCC (s, \sigma_\FCC) \mid \sigma_\FCC \in \FJA_\FCC\}$ is a general cover of $\Fout_\emptyset (s,\emptyset)$.

\end{fact}

\begin{fact}
\label{fact:one direction}

Let $\GAM = (\FST, \FAC, \{\Fout_\FCC \mid \FCC \subseteq \FAG\}, \{\Fav_\FCC \mid \FCC \subseteq \FAG\}, \Flab)$ be a grand-coalition-first action model.

Then:
\begin{enumerate}[label=(\arabic*),leftmargin=3.33em]

\item 

for all $s\in \FST$, $\FCC, \FDD \subseteq \FAG$ such that $\FCC\cap \FDD = \emptyset$, $\ja{\FCC} \in \FJA_\FCC$, and $\ja{\FDD} \in \FJA_\FDD$: $\Fout_{\FCC \cup \FDD} (s,\sigma_\FCC \cup \sigma_\FDD) \subseteq \Fout_{\FCC} (s,\sigma_\FCC) \cap \Fout_{\FDD} (s,\sigma_\FDD)$;

\item 

for all $s\in \FST$, $\FCC \subseteq \FAG$, where $\FCC = \{a_1, \dots, a_n\}$, and $\ja{\FCC} \in \FJA_\FCC$, where $\sigma_\FCC = \sigma_{a_1} \cup \dots \cup \sigma_{a_n}$: $\Fout_\FCC (s, \ja{\FCC}) \subseteq \Fout_{a_1} (s, \ja{a_1}) \cap \dots \cap \Fout_{a_n} (s, \ja{a_n})$.

\end{enumerate}

\end{fact}

Neither of the converses of the two statements holds. What follows is a counterexample for both of them:

\begin{example}
\label{example:not the intersection}
~

\begin{figure}[H]
\begin{center}
\begin{tikzpicture}
[
->=stealth,
scale=1,
every node/.style={transform shape},
]

\tikzstyle{every state}=[minimum size=10mm]

\node[state] (s-0) {$s_0$};
\node[state,position=135:{20mm} from s-0] (s-1) {$s_1$};
\node[state,position=45:{20mm} from s-0] (s-2) {$s_2$};

\node[below=5mm] (a-s-0) at (s-0) {$\{p,q\}$};
\node[above=5mm] (a-s-1) at (s-1) {$\{p\}$};
\node[above=5mm] (a-s-2) at (s-2) {$\{q\}$};

\path
(s-0) edge [] node {$(\alpha_1, \beta_1) \atop (\alpha_2, \beta_2)$} (s-1)
(s-0) edge [] node {$(\alpha_1, \beta_2) \atop (\alpha_2, \beta_1)$} (s-2)
(s-1) edge [loop left] node {$(\alpha_1, \beta_1)$} (s-1)
(s-2) edge [loop right] node {$(\alpha_1, \beta_1)$} (s-2)
;

\end{tikzpicture}

\caption{
This figure is used to show that in grand-coalition-first action models, the set of outcome states of a joint action might not contain the intersection of the sets of outcome states of the joint action's parts.
}

\label{figure:gam models outcome not intersection-compositonal}

\end{center}

\end{figure}

We suppose $\FAG = \{a,b\}$. A grand-coalition-first action model is indicated in Figure \ref{figure:gam models outcome not intersection-compositonal}, where $(\alpha_i, \beta_j)$ represents the joint action of $a$ doing $\alpha_i$ and $b$ doing $\beta_j$. 
It can be seen: (1) $\Fout_{a} (s_0, \alpha_1)$, the set of outcome states of $a$ performing $\alpha_1$ at $s_0$, is equal to $\{s_1,s_2\}$, (2) $\Fout_{b} (s_0, \beta_1)$, the set of outcome states of $b$ performing $\beta_1$ at $s_0$, is equal to $\{s_1,s_2\}$, but (3) $\Fout_\FAG (s_0, (\alpha_1, \beta_1))$, the set of outcome states of $a$ and $b$ respectively performing $\alpha_1$ and $\beta_1$ at $s_0$, does not contain the intersection of $\Fout_{a} (s_0, \alpha_1)$ and $\Fout_{b} (s_0, \beta_1)$.

\end{example}

\paragraph{Remarks}

The class of grand-coalition-first action $\mathsf{SID}$-models is $\alpha$-representable by a class of alpha neighborhood models meeting certain properties (\cite{pauly_modal_2002, goranko_strategic_2013}).
The class of grand-coalition-first action $\mathsf{SD}$-models is $\alpha$-representable by a class of alpha neighborhood models meeting certain properties (\cite{shi_representation_2024}).
Are the other six classes of grand-coalition-first action models $\alpha$-representable in a similar sense? Are the eight classes of grand-coalition-first action models $z$-representable in a similar sense? All this is yet unknown.

\newcommand{\XX}{\mathrm{X}}

\section{Single-coalition-first action models and single-coalition-first neighborhood models}
\label{section:Single-coalition-first action models and single-coalition-first neighborhood models}

\subsection{Outcome functions are compositional in many situations}
\label{section:About compositionality of availability functions and outcome functions in action models}

We say that outcome functions of a class of action models are \Fdefs{compositional} if there is a way such that for every action model in the class, the outcome function of a coalition is determined by the outcome functions of its members in this way.

Outcome functions in grand-coalition-first action models are not compositional. What follows is a counterexample:

\begin{example}
\label{example:not compositional}

We suppose $\FAG = \{a,b\}$.
Two grand-coalition-first action models $\GAM_1$ and $\GAM_2$ are depicted in Figure \ref{figure:gam models outcome not compositonal}, where $\{\Fout^1_\FCC \mid \FCC \subseteq \FAG\}$ is the set of outcome functions of $\GAM_1$ and $\{\Fout^2_\FCC \mid \FCC \subseteq \FAG\}$ is the set of outcome functions of $\GAM_2$.

In $\GAM_1$: $\Fout^1_a (s_0,\alpha_1) = \{s_1,s_2\}$, $\Fout^1_b (s_0,\beta_1) = \{s_1,s_2\}$, $\Fout^1_\FAG (s_0,(\alpha_1, \beta_1)) = \{s_1\}$.

In $\GAM_2$: $\Fout^2_a (s_0,\alpha_1) = \{s_1,s_2\}$, $\Fout^2_b (s_0,\beta_1) = \{s_1,s_2\}$, $\Fout^2_\FAG (s_0,(\alpha_1, \beta_1)) = \{s_2\}$.

Note: $\Fout^1_a (s_0,\alpha_1) = \Fout^2_a (s_0,\alpha_1)$, $\Fout^1_b (s_0,\beta_1) = \Fout^2_b (s_0,\beta_1)$, but $\Fout^1_\FAG (s_0,(\alpha_1, \beta_1)) \neq \Fout^2_\FAG$ $(s_0,(\alpha_1, \beta_1))$. This implies that outcome functions in grand-coalition-first action models are not generally compositional: there is no way such that for every grand-coalition-first action model, the outcome function of a coalition is determined by the outcome functions of its members in this way.

\begin{figure}
\begin{center}
\begin{tikzpicture}
[
->=stealth,
scale=1,
every node/.style={transform shape},
]

\tikzstyle{every state}=[minimum size=10mm]

\node[state] (s-0) {$s_0$};
\node[state,position=135:{20mm} from s-0] (s-1) {$s_1$};
\node[state,position=45:{20mm} from s-0] (s-2) {$s_2$};

\node[below=5mm] (a-s-0) at (s-0) {$\{p,q\}$};
\node[above=5mm] (a-s-1) at (s-1) {$\{p\}$};
\node[above=5mm] (a-s-2) at (s-2) {$\{q\}$};

\node[below=12mm] (m-s-0) at (s-0) {$\GAM_1$};

\path
(s-0) edge [] node {$(\alpha_1, \beta_1) \atop (\alpha_2, \beta_2)$} (s-1)
(s-0) edge [] node {$(\alpha_1, \beta_2) \atop (\alpha_2, \beta_1)$} (s-2)
(s-1) edge [loop left] node {$(\alpha_1, \beta_1)$} (s-1)
(s-2) edge [loop right] node {$(\alpha_1, \beta_1)$} (s-2)
;

\end{tikzpicture}

\vspace{10pt}

\begin{tikzpicture}
[
->=stealth,
scale=1,
every node/.style={transform shape},
]

\tikzstyle{every state}=[minimum size=10mm]

\node[state] (s-0) {$s_0$};
\node[state,position=135:{20mm} from s-0] (s-1) {$s_1$};
\node[state,position=45:{20mm} from s-0] (s-2) {$s_2$};

\node[below=5mm] (a-s-0) at (s-0) {$\{p,q\}$};
\node[above=5mm] (a-s-1) at (s-1) {$\{p\}$};
\node[above=5mm] (a-s-2) at (s-2) {$\{q\}$};

\node[below=12mm] (m-s-0) at (s-0) {$\GAM_2$};

\path
(s-0) edge [] node {$(\alpha_1, \beta_2) \atop (\alpha_2, \beta_1)$} (s-1)
(s-0) edge [] node {$(\alpha_1, \beta_1) \atop (\alpha_2, \beta_2)$} (s-2)
(s-1) edge [loop left] node {$(\alpha_1, \beta_1)$} (s-1)
(s-2) edge [loop right] node {$(\alpha_1, \beta_1)$} (s-2)
;

\end{tikzpicture}

\caption{
This figure indicates two grand-coalition-first action models for Example \ref{example:not compositional}.
}

\label{figure:gam models outcome not compositonal}

\end{center}

\end{figure}

\end{example}

There are many situations in which every agent can change a component of the state, and different components of the state that can be changed by different agents are independent of each other.
In these situations, the set of outcomes of a joint action is the intersection of the sets of outcomes of individual actions of the joint action.

What follows are two examples, and the second is from \cite{alur_alternating-time_2002}.

\begin{example}
\label{example:two}
Adam and Bob control a ship lock with two doors: a front door and a back door. Adam can make the front door closed or open, Bob can make the back door closed or open, but they cannot make both doors open at the same time.
This situation can be represented by the grand-coalition-first action model depicted in Figure \ref{figure:two}, where the set of outcomes of a joint action is the intersection of the sets of outcomes of individual actions of the joint action.

\begin{figure}
\begin{center}
\begin{tikzpicture}
[
->=stealth,
scale=1,
every node/.style={transform shape},
]

\tikzstyle{every state}=[minimum size=10mm]

\node[state] (s-1) {$s_1$};
\node[state,position=0:{30mm} from s-1] (s-2) {$s_2$};
\node[state,position=270:{30mm} from s-1] (s-3) {$s_3$};

\node[above=7mm] (a-s-1) at (s-1) {$\{c_f,c_b\}$};
\node[above=7mm] (a-s-2) at (s-2) {$\{c_f\}$};
\node[below=7mm] (a-s-3) at (s-3) {$\{c_b\}$};

\path
(s-1) edge [loop left] node {$(\mathtt{skip}, \mathtt{skip})$} (s-1)
(s-2) edge [loop right] node {$(\mathtt{skip}, \mathtt{skip})$} (s-2)
(s-3) edge [loop left] node {$(\mathtt{skip}, \mathtt{skip})$} (s-3)
(s-1) edge [above,bend left] node {$(\mathtt{skip}, \mathtt{open}\text{-}\mathtt{b})$} (s-2)
(s-2) edge [below,bend left] node {$(\mathtt{skip}, \mathtt{close}\text{-}\mathtt{b})$} (s-1)
(s-1) edge [left,bend right] node {$(\mathtt{open}\text{-}\mathtt{f}, \mathtt{skip})$} (s-3)
(s-3) edge [right,bend right] node {$(\mathtt{close}\text{-}\mathtt{f}, \mathtt{skip})$} (s-1)
;

\end{tikzpicture}

\caption{
This figure indicates a grand-coalition-first action model for the situation of Example \ref{example:two}.
Here: $c_f$ and $c_b$, respectively, express \emph{the front door is closed} and \emph{the back door is closed}; $(\mathtt{open}\text{-}\mathtt{f}, \mathtt{skip})$ represents a joint action where \emph{Adam closes the front door} and \emph{Bob does nothing}, and so on. 
It can be checked that in this model, the set of outcomes of a joint action is the intersection of the sets of outcomes of individual actions of the joint action.
For example:
(1) $\Fout_\FAG (s_1, (\mathtt{skip}, \mathtt{skip})) = \Fout_a (s_1, \mathtt{skip}) \cap \Fout_b (s_1, \mathtt{skip}) = \{s_1,s_2\} \cap \{s_1,s_3\} = \{s_1\}$; 
(2) $\Fout_\FAG (s_2, (\mathtt{skip},\mathtt{close}\text{-}\mathtt{b})) = \Fout_a (s_2, \mathtt{skip}) \cap \Fout_b (s_2, \mathtt{close}\text{-}\mathtt{b}) = \{s_1,s_2\} \cap \{s_1\} = \{s_1\}$; 
(3) $\Fout_{\{a,b\}} (s_3, (\mathtt{close}\text{-}\mathtt{f}, \mathtt{skip})) = \Fout_a (s_3, \mathtt{close}\text{-}\mathtt{f}) \cap \Fout_b (s_3, \mathtt{skip}) = \{s_1\} \cap \{s_1,s_3\} = \{s_1\}$.
}

\label{figure:two}

\end{center}

\end{figure}

\end{example}

\begin{example}
\label{example:two processes}

A system has two processes, $a$ and $b$. The process $a$ can assign $1$ or $0$ to the variable $x$, and the process $b$ can assign $1$ or $0$ to the variable $y$.
When $x = 0$, $a$ can leave the value of $x$ unchanged or change it to $1$. When $x = 1$, $a$ leaves the value of $x$ unchanged.
Similarly, when $y = 0$, $b$ can either leave the value of $y$ unchanged or change it to $1$. When $y = 1$, $b$ leaves the value of $y$ unchanged.
Figure \ref{figure:gam models outcome compositonal} indicates a grand-coalition-first model for this scenario, where the set of outcomes of a joint action is the intersection of the sets of outcomes of individual actions of the joint action.

\begin{figure}
\begin{center}
\begin{tikzpicture}
[
->=stealth,
scale=1,
every node/.style={transform shape},
]

\tikzstyle{every state}=[minimum size=10mm]

\node[state] (s-1) {$s_1$};
\node[state,position=0:{25mm} from s-1] (s-2) {$s_2$};
\node[state,position=270:{25mm} from s-1] (s-3) {$s_3$};
\node[state,position=0:{25mm} from s-3] (s-4) {$s_4$};

\node[above=7mm] (a-s-1) at (s-1) {$\{x_1,y_1\}$};
\node[above=7mm] (a-s-2) at (s-2) {$\{x_1,y_0\}$};
\node[below=7mm] (a-s-3) at (s-3) {$\{x_0,y_1\}$};
\node[below=7mm] (a-s-3) at (s-4) {$\{x_0,y_0\}$};

\path
(s-1) edge [loop left] node {$(\mathtt{skip}, \mathtt{skip})$} (s-1)
(s-2) edge [loop right] node {$(\mathtt{skip}, \mathtt{skip})$} (s-2)
(s-3) edge [loop left] node {$(\mathtt{skip}, \mathtt{skip})$} (s-3)
(s-4) edge [loop right] node {$(\mathtt{skip}, \mathtt{skip})$} (s-4)
(s-2) edge [above] node {$(\mathtt{skip}, \mathtt{y:=1})$} (s-1)
(s-3) edge [left] node {$(\mathtt{x:=1}, \mathtt{skip})$} (s-1)
(s-4) edge [] node {$(\mathtt{x:=1}, \mathtt{y:=1})$} (s-1)
(s-4) edge [right] node {$(\mathtt{x:=1}, \mathtt{skip})$} (s-2)
(s-4) edge [below] node {$(\mathtt{skip}, \mathtt{y:=1})$} (s-3)
;

\end{tikzpicture}

\caption{
This figure indicates a grand-coalition-first action model for the situation of Example \ref{example:two processes}.
It can be checked that in this model, the set of outcomes of a joint action is the intersection of the sets of outcomes of individual actions of the joint action.
For example:
(1) $\Fout_\FAG (s_1, (\mathtt{skip}, \mathtt{skip})) = \Fout_{a}(s_1, \mathtt{skip}) \cap \Fout_{b}(s_1, \mathtt{skip}) = \{s_1\} \cap \{s_1\} = \{s_1\}$;
(2) $\Fout_\FAG (s_2, (\mathtt{skip}, \mathtt{y:=1})) = \Fout_{a}(s_2, \mathtt{skip}) \cap \Fout_{b}(s_2, \mathtt{y:=1}) = \{s_1, s_2\} \cap \{s_1\} = \{s_1\}$;
(3) $\Fout_\FAG (s_4, (\mathtt{x:=1}, \mathtt{y:=1})) = \Fout_{a}(s_4, \mathtt{x:=1}) \cap \Fout_{b}(s_4, \mathtt{y:=1}) = \{s_1,s_2\} \cap \{s_1,s_3\} = \{s_1\}$.
}

\label{figure:gam models outcome compositonal}

\end{center}

\end{figure}

\end{example}

\paragraph{Remarks}

We want to mention that some literature, such as \cite{harrenstein_boolean_2001} and \cite{hoek_logic_2005}, deals with agency by \emph{propositional control}: there are some agents; every atomic proposition is controlled by at most one agent. In these settings, outcome functions are compositional.

\subsection{Single-coalition-first action models}

\begin{definition}[Single-coalition-first action models]
\label{definition:Single-coalition-first action models}

An action model $\SAM = (\FST, \FAC, \{\Fout_\FCC \mid \FCC \subseteq \FAG\}, \{\Faja_\FCC \mid \FCC \subseteq \FAG\}, \Flab)$ is a single-coalition-first action model if the following conditions are met:
\begin{itemize}

\item 

for all $a \in \FAG$ and $s \in \FST$: $\{ \Fout_a (s, \sigma_a) \mid \sigma_a \in \FJA_a \}$ is a general cover of $\Fout_\emptyset (s,\emptyset)$.

\item 

for all nonempty $\FCC \subseteq \FAG$, $s \in \FST$ and $\sigma_\FCC \in \FJA_\FCC$: $\Fout_\FCC (s, \ja{\FCC}) = \bigcap \{\Fout_a (s, \ja{\FCC}|_a) \mid a \in \FCC\}$.

\end{itemize}

\end{definition}

It can be observed that for every $\FCC \subseteq \FAG$, $\Fout_\FCC$ can be derived from $\Fout_\emptyset$ and all $\Fout_a$ where $a \in \FCC$. This is why we use ``single-coalition-first''.

Clearly, the outcome functions in single-coalition-first action models are compositional.

For all $\sigma_\FCC, \sigma'_\FCC, \sigma''_\FCC \in \FJA_\FCC$, we say that $\sigma''_\FCC$ is a \Fdefs{fusion} of $\sigma_\FCC$ and $\sigma'_\FCC$, if for all $a \in \FCC$, $\sigma''_\FCC |_a$ is equal to $\sigma_\FCC |_a$ or $\sigma'_\FCC |_a$.

The following result offers two alternative definitions of single-coalition-first action models.

\begin{theorem}
\label{theorem:equivalent condition}

Let $\FAM = (\FST, \FAC, \{\Fout_\FCC \mid \FCC \subseteq \FAG\}, \{\Faja_\FCC \mid \FCC \subseteq \FAG\}, \Flab)$ be an action model and $s \in \FST$.

The following three sets of conditions are equivalent:
\begin{enumerate}[label=(\arabic*),leftmargin=3.33em]

\item
\begin{enumerate}

\item

for all $a \in \FAG$, $\{ \Fout_a (s, \sigma_a) \mid \sigma_a \in \FJA_a \}$ is a general cover of $\Fout_\emptyset (s, \emptyset)$;

\item

for all nonempty $\FCC \subseteq \FAG$ and $\sigma_\FCC \in \FJA_\FCC$, $\Fout_\FCC (s, \ja{\FCC}) = \bigcap \{\Fout_a (s, \ja{\FCC}|_a) \mid a \in \FCC \}$.

\end{enumerate}

\item
\begin{enumerate}

\item 

for all $a \in \FAG$, $\{ \Fout_a (s, \sigma_a) \mid \sigma_a \in \FJA_a \}$ is a general cover of $\Fout_\emptyset (s, \emptyset)$;

\item

for all $\FCC,\FDD \subseteq \FAG$ such that $\FCC \cap \FDD = \emptyset$, $\sigma_\FCC \in \FJA_\FCC$, and $\sigma_\FDD \in \FJA_\FDD$: $\Fout_{\FCC \cup \FDD} (s, \sigma_\FCC \cup \sigma_\FDD) =\Fout_\FCC (s, \sigma_\FCC) \cap \Fout_\FDD (s, \sigma_\FDD)$.

\end{enumerate}

\item 
\begin{enumerate}

\item 

for all $\FCC \subseteq \FAG$ and $\sigma_\FCC \in \FJA_\FCC$, $\Fout_\FCC (s, \ja{\FCC}) = \bigcup \{\Fout_\FAG (s, \ja{\FAG}) \mid \ja{\FAG} \in \FJA_\FAG \text{ and } \ja{\FCC} \subseteq \ja{\FAG}\}$;

\item 

for all $\FCC \subseteq \FAG$ and $\sigma_\FCC, \sigma_\FCC', \sigma_\FCC'' \in \FJA_\FCC$, if $\sigma''_\FCC$ is a fusion of $\sigma_\FCC$ and $\sigma'_\FCC$, then $\Fout_\FCC (s,\sigma_\FCC) \cap \Fout_\FCC (s,\sigma'_\FCC) \subseteq \Fout_\FCC (s,\sigma''_\FCC)$.

\end{enumerate}

\end{enumerate}

\end{theorem}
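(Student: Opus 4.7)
The plan is to go around a hub, establishing (1) $\Leftrightarrow$ (2) and (2) $\Leftrightarrow$ (3). Since clause (a) is identical in (1) and (2), the first equivalence reduces to relating the binary composition rule (2)(b) and the full intersection formula (1)(b). The direction (1)(b) $\Rightarrow$ (2)(b) will be immediate by partitioning the intersection indexed by $\FCC \cup \FDD$ into one indexed by $\FCC$ and one by $\FDD$; the direction (2)(b) $\Rightarrow$ (1)(b) I would handle by induction on $|\FCC|$, peeling off one agent at each step. The only subtlety is the corner case where $\FCC$ or $\FDD$ is empty in (2)(b), which is dealt with by first observing that clause (a) forces $\Fout_\FCC (s, \sigma_\FCC) \subseteq \Fout_\emptyset (s, \emptyset)$ for every nonempty $\FCC$ and every $\sigma_\FCC \in \FJA_\FCC$.

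For (2) $\Rightarrow$ (3), clause (3)(b) falls straight out of the intersection formula: if $\sigma''_\FCC$ is a fusion of $\sigma_\FCC$ and $\sigma'_\FCC$, then for each $a \in \FCC$ the component $\sigma''_\FCC |_a$ coincides with $\sigma_\FCC |_a$ or $\sigma'_\FCC |_a$, so any $t$ in $\Fout_\FCC (s, \sigma_\FCC) \cap \Fout_\FCC (s, \sigma'_\FCC)$ lies in $\Fout_a (s, \sigma''_\FCC |_a)$ for every $a \in \FCC$ and therefore in $\Fout_\FCC (s, \sigma''_\FCC)$. For (3)(a), the $\supseteq$ direction is a direct application of (2)(b) to the partition $\FAG = \FCC \cup (\FAG \setminus \FCC)$; for the $\subseteq$ direction I would fix $t \in \Fout_\FCC (s, \sigma_\FCC)$ and, for each $b \in \FAG \setminus \FCC$, pick some $\sigma_b \in \FJA_b$ with $t \in \Fout_b (s, \sigma_b)$, whose existence is guaranteed by clause (a) together with the inclusion $t \in \Fout_\emptyset (s, \emptyset)$ noted above. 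Assembling these choices into an action profile $\sigma_\FAG$ extending $\sigma_\FCC$ and invoking (2)(b) yields $t \in \Fout_\FAG (s, \sigma_\FAG)$, as required.

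The main obstacle is the implication (3) $\Rightarrow$ (2)(b) for pairs $(\FCC, \FDD)$ whose union is a proper subset of $\FAG$, because (3)(a) only lets us lift outcomes up to full action profiles, so the natural witnesses for $t \in \Fout_\FCC (s, \sigma_\FCC)$ and $t \in \Fout_\FDD (s, \sigma_\FDD)$ need not agree outside $\FCC \cup \FDD$. After deriving (2)(a) from (3)(a) by taking $\FCC = a$ and $\FCC = \emptyset$, and the $\subseteq$ half of (2)(b) routinely from (3)(a), the $\supseteq$ half would be handled as follows. Given $t$ in the intersection, (3)(a) supplies witnesses $\sigma_\FAG \supseteq \sigma_\FCC$ and $\tau_\FAG \supseteq \sigma_\FDD$ with $t$ in both outcomes; I glue them by defining $\pi_\FAG$ to use $\sigma_\FAG$'s move on $\FCC \cup (\FAG \setminus (\FCC \cup \FDD))$ and $\tau_\FAG$'s move on $\FDD$. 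By construction $\pi_\FAG$ is a fusion of $\sigma_\FAG$ and $\tau_\FAG$, so (3)(b) gives $t \in \Fout_\FAG (s, \pi_\FAG)$, and since $\sigma_\FCC \cup \sigma_\FDD \subseteq \pi_\FAG$, a final appeal to (3)(a) places $t$ in $\Fout_{\FCC \cup \FDD} (s, \sigma_\FCC \cup \sigma_\FDD)$. This fusion step is precisely where (3)(b) earns its place: without it, (3)(a) alone would not suffice to recover compositionality.
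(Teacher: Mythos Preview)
Your proposal is correct and follows the same overall scaffolding as the paper's proof: both establish $(1)\Leftrightarrow(2)$ and $(2)\Leftrightarrow(3)$, and the arguments for $(1)\Leftrightarrow(2)$ and $(2)\Rightarrow(3)$ match the paper's essentially verbatim (peeling off agents for $(2)\Rightarrow(1)$, handling the empty-coalition corner cases via the inclusion $\Fout_\FCC(s,\sigma_\FCC)\subseteq\Fout_\emptyset(s,\emptyset)$, and extending $\sigma_\FCC$ to a profile agent-by-agent using the cover property for $(3)(a)$).

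The one genuine difference is in the $\supseteq$ half of $(3)\Rightarrow(2)(b)$. The paper argues by contradiction and applies the fusion condition $(3)(b)$ at the level of $\FCC\cup\FDD$: from the two witnessing profiles $\sigma_\FAG\supseteq\sigma_\FCC$ and $\sigma'_\FAG\supseteq\sigma_\FDD$ it extracts $\sigma'_\FCC\subseteq\sigma'_\FAG$ and $\sigma'_\FDD\subseteq\sigma_\FAG$, notes that $\sigma_\FCC\cup\sigma_\FDD$ is a fusion of $\sigma'_\FCC\cup\sigma_\FDD$ and $\sigma_\FCC\cup\sigma'_\FDD$, and obtains a contradiction. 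Your route is more direct: you apply $(3)(b)$ once at the level of $\FAG$, fusing the two full profiles into a single $\pi_\FAG$ that extends $\sigma_\FCC\cup\sigma_\FDD$, and then read off the conclusion from $(3)(a)$. Your argument is shorter and avoids the contradiction detour; the paper's version has the minor conceptual advantage of showing that $(3)(b)$ is only needed for the specific coalition $\FCC\cup\FDD$ rather than for $\FAG$, but since the statement quantifies over all coalitions anyway this buys nothing here.
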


\begin{proof}~

(1) $\Rightarrow$ (2)

Assume (1). We only need to show (2b). Let $\FCC, \FDD \subseteq \FAG$ such that $\FCC \cap \FDD = \emptyset$, $\sigma_\FCC \in \FJA_\FCC$, and $\sigma_\FDD \in \FJA_\FDD$. We want to show $\Fout_{\FCC \cup \FDD} (s, \sigma_\FCC \cup \sigma_\FDD) = \Fout_\FCC (s, \sigma_\FCC) \cap \Fout_\FDD (s, \sigma_\FDD)$.

Assume $\FCC = \emptyset$ and $\FDD = \emptyset$. Then, $\Fout_{\FCC \cup \FDD} (s, \sigma_\FCC \cup \sigma_\FDD) = \Fout_\emptyset (s,\emptyset) = \Fout_\emptyset (s, \emptyset) \cap \Fout_\emptyset (s,\emptyset) = \Fout_\FCC (s, \sigma_\FCC) \cap \Fout_\FDD (s, \sigma_\FDD)$.

Assume $\FCC = \emptyset$ and $\FDD \neq \emptyset$. Then $\Fout_{\FCC \cup \FDD} (s, \sigma_\FCC \cup \sigma_\FDD) = \Fout_\FDD (s, \sigma_\FDD)$ and $\Fout_\FCC (s, \sigma_\FCC) \cap \Fout_\FDD (s, \sigma_\FDD) = \Fout_\emptyset (s, \emptyset) \cap \Fout_\FDD (s, \sigma_\FDD)$. It suffices to show $\Fout_\FDD (s, \sigma_\FDD) \subseteq \Fout_\emptyset (s,\emptyset)$.
Let $t \in \Fout_\FDD (s, \sigma_\FDD)$. Let $a \in \FDD$. By (1b), $t \in \Fout_a (s, \sigma_\FDD|_a)$. Then, $t \in \bigcup \{ \Fout_a (s, \sigma_a) \mid \sigma_a \in \FJA_a \} = \Fout_\emptyset (s,\emptyset)$.
Then, $t \in \Fout_\emptyset (s,\emptyset)$.
Then, $\Fout_\FDD (s, \sigma_\FDD) \subseteq \Fout_\FCC (s,\sigma_\FCC)$.
Then, $\Fout_\FCC (s, \sigma_\FCC) \cap \Fout_\FDD (s, \sigma_\FDD)$ $ = \Fout_\FDD (s, \sigma_\FDD)$.
Then, $\Fout_{\FCC \cup \FDD} (s, \sigma_\FCC \cup \sigma_\FDD) = \Fout_\FCC (s, \sigma_\FCC) \cap \Fout_\FDD (s, \sigma_\FDD)$.

The case that $\FCC \neq \emptyset$ and $\FDD = \emptyset$ can be handled similarly.

Assume $\FCC \neq \emptyset$ and $\FDD \neq \emptyset$. 
Let $\FCC = \{a_1, \dots, a_n\}$, $\FDD = \{b_1, \dots, b_m\}$, $\sigma_\FCC = \sigma_{a_1} \cup \dots \cup \sigma_{a_n}$, and $\sigma_\FDD = \sigma_{b_1} \cup \dots \cup \sigma_{b_m}$.
By (1b), $\Fout_{\FCC \cup \FDD} (s, \sigma_\FCC \cup \sigma_\FDD) = \Fout_{a_1} (s,\sigma_{a_1}) \cap \dots \cap \Fout_{a_n} (s,\sigma_{a_n}) \cap \Fout_{b_1} (s,\sigma_{b_1}) \cap \dots \cap \Fout_{b_m} (s,\sigma_{b_m}) = \Fout_\FCC (s,\sigma_\FCC) \cap \Fout_\FDD (s,\sigma_\FDD)$.

(2) $\Rightarrow$ (1)

Assume (2). It suffices to show (1b). Let $\FCC$ be a nonempty subset of $\FAG$ and $\sigma_\FCC \in \FJA_\FCC$. Let $\FCC = \{a_1, \dots, a_n\}$ and $\sigma_\FCC = \sigma_{a_1} \cup \dots \cup \sigma_{a_n}$. 
 
Note $\FCC = \{a_1\} \cup \FCC - \{a_1\}$. By (2b), $\Fout_\FCC (s, \sigma_\FCC) = \Fout_{a_1} (s, \sigma_{a_1}) \cap \Fout_{\FCC - \{a_1\}} (s, \sigma_\FCC|_{\FCC - \{a_1\}})$.

Note $\FCC - \{a_1\} = \{a_2\} \cup \FCC - \{a_1, a_2\}$. By (2b), $\Fout_{\FCC - \{a_1\}} (s, \sigma_\FCC|_{\FCC - \{a_1\}}) = \Fout_{a_2} (s, \sigma_{a_2}) \cap \Fout_{\FCC - \{a_1, a_2\}} (s, \sigma_\FCC|_{\FCC - \{a_1, a_2\}})$.
 
\dots. 
 
Finally, we have $\Fout_\FCC (s, \sigma_\FCC) = \Fout_{a_1} (s, \sigma_{a_1}) \cap \dots \cap \Fout_{a_n} (s, \sigma_{a_n}) = \bigcap \{\Fout_a (s, \ja{a}) \mid a \in \FCC, \ja{a} \in \FJA_a \text{ and } \ja{a} \subseteq \ja{\FCC}\}$.

(2) $\Rightarrow$ (3)

Assume (2).

First, we show (3a). Let $\FCC \subseteq \FAG$ and $\sigma_\FCC \in \FJA_\FCC$. It suffices to show $\Fout_\FCC (s, \sigma_\FCC) = \bigcup \{\Fout_\FAG (s, \ja{\FAG}) \mid \ja{\FAG} \in \FJA_\FAG \text{ and } \ja{\FCC} \subseteq \ja{\FAG}\}$.

Let $w \in \Fout_\FCC (s, \sigma_\FCC)$. We want to show $w \in \bigcup \{\Fout_\FAG (s, \ja{\FAG}) \mid \ja{\FAG} \in \FJA_\FAG \text{ and } \ja{\FCC} \subseteq \ja{\FAG}\}$.
It suffices to show there is $\sigma_\FAG \in \FJA_\FAG$ such that $\sigma_\FCC \subseteq \sigma_\FAG \text{ and}$ $w \in \Fout_\FAG (s, \sigma_\FAG)$.
By (2b), it suffices to show that there is $\sigma_\FCCb \in \FJA_\FCCb$ such that $w \in \Fout_\FCCb (s, \sigma_\FCCb)$.

By (1a) and (1b), it is easy to verify $w \in \Fout_\emptyset (s,\emptyset)$.
Assume $\FCCb = \emptyset$. Clearly, what we want holds.
Assume $\FCCb \neq \emptyset$. Let $\FCCb = \{a_1, \dots, a_n\}$. 
Note for every $a\in \FAG$, $\bigcup \{ \Fout_a (s, \sigma_a)\mid\sigma_a \in \FJA_a \} = \Fout_\emptyset (s,\emptyset)$.
Then, there is $\sigma_{a_1} \in \FJA_{a_1}$ such that $w \in \Fout_{a_1} (s, \sigma_{a_1})$, \dots, there is $\sigma_{a_n} \in \FJA_{a_n}$ such that $w \in \Fout_{a_n} (s, \sigma_{a_n})$.
Let $\sigma_\FCCb = \sigma_{a_1} \cup \dots \cup \sigma_{a_n}$. By (1b), $w \in \Fout_\FCCb (s, \sigma_\FCCb)$.

Let $w \in \bigcup \{\Fout_\FAG (s, \ja{\FAG}) \mid \ja{\FAG} \in \FJA_\FAG \text{ and } \ja{\FCC} \subseteq \ja{\FAG}\}$. Then, there is $\sigma_\FAG \in \FJA_\FAG$ such that $\sigma_\FCC \subseteq \sigma_\FAG \text{ and}$ $w \in \Fout_\FAG (s, \sigma_\FAG)$. Let $\sigma_\FAG = \sigma_\FCC \cup \sigma_\FCCb$.
By (2b), $w \in \Fout_\FCC (s, \sigma_\FCC)$.

Second, we show (3b). Let $\FCC \subseteq \FAG$ and $\sigma_\FCC, \sigma_\FCC', \sigma_\FCC'' \in \FJA_\FCC$ such that $\sigma''_\FCC$ is a fusion of $\sigma_\FCC$ and $\sigma'_\FCC$. We want to show $\Fout_\FCC (s,\sigma_\FCC) \cap \Fout_\FCC (s,\sigma'_\FCC) \subseteq \Fout_\FCC (s,\sigma''_\FCC)$.
Let $t \in \Fout_\FCC (s,\sigma_\FCC) \cap \Fout_\FCC (s,\sigma'_\FCC)$.

Assume $\FCC = \emptyset$. It is easy to see $t \in \Fout_\FCC (s,\sigma''_\FCC)$.

Assume $\FCC \neq \emptyset$. Let $\FCC = \{c_1, \dots, c_n\}$.
Let $\sigma_\FCC = \sigma_{c_1} \cup \dots \cup \sigma_{c_n}$, $\sigma'_\FCC = \sigma'_{c_1} \cup \dots \cup \sigma'_{c_n}$, and $\sigma''_\FCC=\sigma''_{c_1} \cup \dots \cup \sigma''_{c_n}$, where $\sigma''_{c_i} = \sigma_{c_i}$ or $\sigma''_{c_i} = \sigma'_{c_i}$ for every $i$ such that $1 \leq i \leq n$.
By (1b), $t \in \Fout_{c_1} (s,\sigma_{c_1})$, $t \in \Fout_{c_1} (s,\sigma'_{c_1})$, \dots, $t \in \Fout_{c_n} (s,\sigma_{c_n})$,  $t \in \Fout_{c_n} (s,\sigma'_{c_n})$.
Then, $t \in \Fout_{c_1} (s,\sigma''_{c_1})$, \dots, $t \in \Fout_{c_n} (s,\sigma''_{c_n})$. By (1b), $t \in \Fout_\FCC (s,\sigma''_\FCC)$.

(3) $\Rightarrow$ (2)

Assume (3).

First, we show (2a).
Let $a \in \FAG$. We want to show $\bigcup\{ \Fout_a (s, \sigma_a) \mid \sigma_a \in \FJA_a \}= \Fout_\emptyset (s,\emptyset)$.

Let $t\in \bigcup\{ \Fout_a (s, \sigma_a) \mid \sigma_a \in \FJA_a \}$.
Then, there is $\sigma_a \in \FJA_a$ such that $t \in \Fout_a (s, \sigma_a)$. By (3a), there is $\sigma_\FAG$ such that $\sigma_a \subseteq \sigma_\FAG$ and $t \in \Fout_\FAG(s, \sigma_\FAG)$. Then, $t \in \bigcup \{\Fout_\FAG (s, \ja{\FAG}) \mid \ja{\FAG} \in \FJA_\FAG \text{ and } \emptyset \subseteq \ja{\FAG}\}$, which is equal to $\Fout_\emptyset (s,\emptyset)$ by (3a).

Let $t \in \Fout_\emptyset (s,\emptyset)$. By (3a), $t \in \bigcup \{\Fout_\FAG (s, \ja{\FAG}) \mid \ja{\FAG} \in \FJA_\FAG \text{ and } \emptyset \subseteq \ja{\FAG}\}$. Then, there is $\sigma_\FAG$ such that $t \in \Fout_\FAG (s,\sigma_\FAG)$.
By (3a), $t \in \Fout_a (s,\sigma_\FAG|_a)$.
Then, $t\in \bigcup\{ \Fout_a (s, \sigma_a) \mid \sigma_a \in \FJA_a \}$.

Second, we show (2b). 
Let $\FCC, \FDD \subseteq \FAG$ such that $\FCC \cap \FDD = \emptyset$, $\sigma_\FCC \in \FJA_\FCC$, and $\sigma_\FDD \in \FJA_\FDD$.
We want to show $\Fout_{\FCC \cup \FDD} (s, \sigma_\FCC \cup \sigma_\FDD) =\Fout_\FCC (s, \sigma_\FCC) \cap \Fout_\FDD (s, \sigma_\FDD)$.

Assume $t \in \Fout_{\FCC \cup \FDD} (s, \sigma_\FCC \cup \sigma_\FDD)$. By (3a), $t \in \Fout_\FAG (s, \sigma_\FAG)$ for some $\sigma_\FAG \in \FJA_\FAG$ such that $\sigma_\FCC \cup \sigma_\FDD \subseteq \sigma_\FAG$. Note $\sigma_\FCC \subseteq \sigma_\FAG$ and $\sigma_\FDD \subseteq \sigma_\FAG$. By (3a), $t \in \Fout_\FCC (s, \sigma_\FCC)$ and $t \in \Fout_\FDD (s, \sigma_\FDD)$. Then, $t \in \Fout_\FCC (s, \sigma_\FCC) \cap \Fout_\FDD (s, \sigma_\FDD)$.

Assume $t \in \Fout_\FCC (s, \sigma_\FCC) \cap \Fout_\FDD (s, \sigma_\FDD)$.
Then, there is $\sigma_\FAG$ and $\sigma'_\FAG$ such that $\sigma_\FCC \subseteq \sigma_\FAG$, $\sigma_\FDD \subseteq \sigma'_\FAG$, $t \in  \Fout_\FAG (s, \sigma_\FAG)$ and $t \in \Fout_\FAG (s, \sigma'_\FAG)$.
Assume $t \notin \Fout_{\FCC \cup \FDD} (s, \sigma_\FCC \cup \sigma_\FDD)$. We want to get a contradiction.

We claim $\sigma_\FCC \nsubseteq \sigma'_\FAG$ and $\sigma_\FDD \nsubseteq \sigma_\FAG$. Assume $\sigma_\FCC \subseteq \sigma'_\FAG$. Note $\FCC \cap \FDD = \emptyset$. Then $\sigma_\FCC \cup \sigma_\FDD \subseteq \sigma'_\FAG$. By (3a), $t \in \Fout_{\FCC \cup \FDD} (s, \sigma_\FCC \cup \sigma_\FDD)$, which is impossible. Similarly, we can show $\sigma_\FDD \nsubseteq \sigma_\FAG$.

Let $\sigma'_\FCC \in \FJA_\FCC$ and $\sigma'_\FDD \in \FJA_\FDD$ such that $\sigma'_\FCC \subseteq \sigma'_\FAG$ and $\sigma'_\FDD \subseteq \sigma_\FAG$. Then, $\sigma'_\FCC \cup \sigma_\FDD \subseteq \sigma'_\FAG$ and $\sigma_\FCC \cup \sigma'_\FDD \subseteq \sigma_\FAG$.
By (3a), $t \in \Fout_{\FCC \cup \FDD} (s, \sigma'_\FCC \cup \sigma_\FDD)$ and $t \in \Fout_{\FCC \cup \FDD} (s, \sigma_\FCC \cup \sigma'_\FDD)$.
Note $\sigma_\FCC \cup \sigma_\FDD$ is a fusion of $\sigma'_\FCC \cup \sigma_\FDD$ and $\sigma_\FCC \cup \sigma'_\FDD$. By (3b), $t \in \Fout_{\FCC \cup \FDD} (s, \sigma_\FCC \cup \sigma_\FDD)$.
We have a contradiction.
 
\end{proof}

\begin{theorem}
\label{theorem:implication}

Every single-coalition-first action model is a grand-coalition-first action model, but not vice versa.

\end{theorem}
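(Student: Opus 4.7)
The plan is to handle the two directions separately. For the positive direction, given any single-coalition-first action model $\SAM = (\FST, \FAC, \Fsuc, \{\Fout_a \mid a \in \FAG\}, \Flab)$, I will define a grand-coalition outcome function by the intersection formula supplied in Definition \ref{definition:Single-coalition-first action models}, namely $\Fout_\FAG(s, \sigma_\FAG) = \bigcap \{\Fout_a(s, \sigma_\FAG|_a) \mid a \in \FAG\}$, and then view the tuple $(\FST, \FAC, \Fout_\FAG, \Flab)$ as the associated grand-coalition-first action model. The content of the claim is that this identification is coherent: the single-coalition outcome functions that the derived GAM produces via the union-over-extensions formula agree with the primitive $\Fout_a$'s of the original SAM.

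The main verification is to show, for each agent $a$, that
\[
\Fout_a(s, \sigma_a) = \bigcup\{\Fout_\FAG(s, \sigma_\FAG) \mid \sigma_\FAG \in \FJA_\FAG \text{ and } \sigma_a \subseteq \sigma_\FAG\}.
\]
The $\supseteq$ inclusion is immediate, because for any $\sigma_\FAG$ extending $\sigma_a$ the intersection defining $\Fout_\FAG(s, \sigma_\FAG)$ contains the term $\Fout_a(s, \sigma_a)$. For $\subseteq$, I will invoke the general-cover condition built into the SAM signature: if $t \in \Fout_a(s, \sigma_a) \subseteq \Fsuc(s)$, then for every other agent $b$ some action $\sigma_b \in \FJA_b$ witnesses $t \in \Fout_b(s, \sigma_b)$, and assembling these witnesses with $\sigma_a$ yields a profile $\sigma_\FAG$ placing $t$ in every $\Fout_b(s, \sigma_\FAG|_b)$, hence in their intersection $\Fout_\FAG(s, \sigma_\FAG)$. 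This is exactly the $(1) \Rightarrow (3a)$ direction already carried out in Theorem \ref{theorem:equivalent condition}, so I will simply cite that theorem rather than redo the calculation.

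For the ``not vice versa'' clause, I will point to $\GAM_1$ in Figure \ref{figure:gam models outcome not compositonal}: there $\Fout_a(s_0, \alpha_1) = \Fout_b(s_0, \beta_1) = \{s_1, s_2\}$, yet $\Fout_\FAG(s_0, (\alpha_1, \beta_1)) = \{s_1\}$. If $\GAM_1$ arose from a SAM via the construction above, the intersection formula would force $\Fout_\FAG(s_0, (\alpha_1, \beta_1)) = \{s_1, s_2\} \cap \{s_1, s_2\} = \{s_1, s_2\}$, contradicting the value $\{s_1\}$. So $\GAM_1$ witnesses that not every grand-coalition-first action model comes from a single-coalition-first one.

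The only non-routine point is the $\subseteq$ inclusion in the coherence check, which relies on both the general-cover assumption in the SAM definition and the ability to piece together independent single-agent witnesses into a grand profile; since Theorem \ref{theorem:equivalent condition} already bundles this argument, the proof collapses to that citation plus the explicit counterexample above.
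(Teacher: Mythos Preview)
Your argument is correct and matches the paper's own proof: the forward direction is exactly the citation of Theorem~\ref{theorem:equivalent condition} (the implication $(1)\Rightarrow(3a)$ says precisely that the single-coalition-first outcome functions satisfy the grand-coalition-first derivation rule for every coalition $\FCC$), and your counterexample $\GAM_1$ from Figure~\ref{figure:gam models outcome not compositonal} is literally the same model the paper invokes via Figure~\ref{figure:gam models outcome not intersection-compositonal}. One small clarification: the coherence you describe as the ``main verification'' must hold for \emph{every} coalition $\FCC$, not only for singletons $a$, since the two models must induce the same full family $\{\Fout_\FCC\}_{\FCC\subseteq\FAG}$---but your cited implication $(1)\Rightarrow(3a)$ already delivers this for all $\FCC$, so nothing is actually missing.
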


The first part of the result follows from the previous theorem. For the second part, the grand-coalition-first action model given in Example \ref{example:not the intersection} is a counterexample.

\paragraph{Remarks}

Goranko and Jamroga \cite{goranko_comparing_2004} discussed \emph{convex concurrent game models}, which are closely related to \emph{rectangular game forms} in game theory~\cite{abdou_rectangularity_1998}. It can be verified that convex concurrent game models are single-coalition-first action $\mathtt{SID}$-models.

\subsection{Single-coalition-first neighborhood models}

Let $\FST$ be a nonempty set of states and $\Delta_1, \Delta_2 \subseteq \mathcal{P} (\FST) - \{\emptyset\}$. Define $\Delta_1 \odot \Delta_2 = \{\FYY_1 \cap \FYY_2 \mid \FYY_1 \in \Delta_1, \FYY_2 \in \Delta_2, \text{and } \FYY_1 \cap \FYY_2 \neq \emptyset\}$.
Let $I$ be a nonempty set of indices. Define $\bigodot \{\Delta_i \mid i \in I \text{ and } \Delta_i \subseteq \mathcal{P} (\FST) - \{\emptyset\} \}$ as expected\footnote{Strictly speaking, we should write $\bigodot \Big\{(i, \Delta_i) \mid i \in I \text{ and } \Delta_i \subseteq \mathcal{P} (\FST) - \{\emptyset\} \Big\}$. Here, we abuse the notation a bit.}.
Note $\bigodot \{\Delta\} = \Delta$.

\begin{definition}[Single-coalition-first neighborhood models]
\label{definition:??}

A neighborhood model $\SNM = (\FST,$ $\{\Fnei_\FCC \mid \FCC \subseteq \FAG\}, \Flab)$ is a \Fdefs{single-coalition-first neighborhood model} if the following conditions are met:
\begin{itemize}

\item

for all $s \in \FST$, $\Fnei_\emptyset (s)$ is either empty or a singleton with a nonempty element.

\item 

for all $a \in \FAG$ and $s \in \FST$, $\Fnei_a (s)$ is a cover of $\bigcup \Fnei_\emptyset (s)$.

\item 

for every nonempty $\FCC \subseteq \FAG$ and $s \in \FST$: $\Fnei_\FCC (s) = \bigodot \{\Fnei_a (s) \mid a \in \FCC\}$.

\end{itemize}

\end{definition}

The following fact will be used later.

\begin{fact}
\label{fact:single-coalition-first neighborhood models cover}

Let $\SNM = (\FST, \{\Fnei_\FCC \mid \FCC \subseteq \FAG\}, \Flab)$ be a single-coalition-first neighborhood model.

Then, for all $\FCC \subseteq \FAG$ and $s \in \FST$, $\Fnei_\FCC (s)$ is a cover of $\bigcup \Fnei_\emptyset (s)$.

\end{fact}

\begin{proof}~

Let $\FCC \subseteq \FAG$ and $s \in \FST$.
Assume $\bigcup \Fnei_\emptyset (s) = \emptyset$. Then $\Fnei_\FCC (s) = \emptyset$, which is a cover of $\bigcup \Fnei_\emptyset (s)$.
Assume $\bigcup \Fnei_\emptyset (s) \neq \emptyset$ and $\FCC = \emptyset$. Then $\Fnei_\FCC (s) = \{\bigcup \Fnei_\emptyset (s)\}$, which is a cover of $\bigcup \Fnei_\emptyset (s)$.
Assume $\bigcup \Fnei_\emptyset (s) \neq \emptyset$ and $\FCC \neq \emptyset$. Let $\FCC = \{a_1, \dots, a_n\}$.
Let $t \in \bigcup \Fnei_\emptyset (s)$. We want to show $t \in \bigcup \Fnei_\FCC (s)$.
Note $\Fnei_a (s)$ is a cover of $\bigcup \Fnei_\emptyset (s)$ for every $a \in \FAG$.
Then, there is $\XX_{a_1} \in \Fnei_{a_1} (s)$ such that $t \in \XX_{a_1}$, \dots, there is $\XX_{a_n} \in \Fnei_{a_n} (s)$ such that $t \in \XX_{a_n}$. Then $t \in \XX_{a_1} \cap \dots \cap \XX_{a_n}$. Note $\XX_{a_1} \cap \dots \cap \XX_{a_n} \in \Fnei_\FCC (s)$. Then $t \in \bigcup \Fnei_\FCC (s)$.

\end{proof}

\begin{definition}[Seriality, independence, and determinism of single-coalition-first neighborhood models]
\label{definition:??}

Let $\SNM = (\FST, \{\Fnei_\FCC \mid \FCC \subseteq \FAG\}, \Flab)$ be a single-coalition-first neighborhood model.

We say:
\begin{itemize}

\item

$\SNM$ is \Fdefs{serial} if for all $s \in \FST$ and $\FCC \subseteq \FAG$, $\Fnei_\FCC (s) \neq \emptyset$.

\item

$\SNM$ is \Fdefs{independent} if for all $s\in \FST$, $\FCC, \FDD \subseteq \FAG$ such that $\FCC \cap \FDD = \emptyset$, $\FYY_1 \in \Fnei_\FCC (s)$ and $\FYY_2 \in \Fnei_\FDD (s)$: $\FYY_1 \cap \FYY_2 \neq \emptyset$.

\item

$\SNM$ is \Fdefs{deterministic} if for all $s \in \FST$ and $\FYY \in \Fnei_\FAG (s)$, $\FYY$ is a singleton.

\end{itemize}

\end{definition}

As above, we let the eight strings $\epsilon$, $\mathtt{S}$, $\mathtt{I}$, $\mathtt{D}$, $\mathtt{SI}$, $\mathtt{SD}$, $\mathtt{ID}$, and $\mathtt{SID}$ in the set $\FES$ correspond to the eight combinations of the three properties, respectively.

For any $\FXX \in \FES$ and single-coalition-first neighborhood model $\SNM$, we say $\FSNM$ is an \Fdefs{$\FXX$-model} if it has the properties corresponding to $\FXX$.

\paragraph{Remarks}

Alur, Henzinger, and Kupferman \cite{alur_alternating-time_1998} proposed a semantics for $\FATL$, where models are based on \emph{alternating transition systems}. Goranko and Jamroga \cite{goranko_comparing_2004} discussed \emph{tight} alternating transition systems, which we can verify are single-coalition-first neighborhood $\mathtt{SID}$-models.

\subsection{Representation of single-coalition-first action models by single-coalition-first neighborhood models}

\begin{theorem}
\label{theorem:representation single-coalition-first action models TO single-coalition-first neighborhood models}

Every single-coalition-first action model is $z$-representable by a single-coalition-first neighborhood model.

\end{theorem}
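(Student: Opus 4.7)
The plan is to take the single-coalition-first action model $\SAM = (\FST, \FAC, \Fsuc, \{\Fout_a \mid a \in \FAG\}, \Flab)$ and define its companion neighborhood model $\SNM = (\FST, \Fsuc, \{\Fnei_a \mid a \in \FAG\}, \Flab)$ by the obvious choice: for each $a \in \FAG$ and $s \in \FST$, let
\[
\Fnei_a(s) = \FAE_a(s) = \{\Fout_a(s, \sigma_a) \mid \sigma_a \in \Faja_a(s)\}.
\]
The labeling function and successor function are inherited. This is the only candidate compatible with the required identity $\FAE_a = \Fnei_a$ for singleton coalitions, so the real content is to verify (i) that $\SNM$ is a legitimate single-coalition-first neighborhood model, and (ii) that the induced neighborhood function $\Fnei_\FCC$ agrees with $\FAE_\FCC$ for every $\FCC \subseteq \FAG$, not only for singletons.

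For (i), I would check that $\Fnei_a(s)$ is a cover of $\Fsuc(s)$. By Definition \ref{definition:Single-coalition-first action models}, $\{\Fout_a(s, \sigma_a) \mid \sigma_a \in \FJA_a\}$ is a general cover of $\Fsuc(s)$. Removing the $\sigma_a$ whose outcome is empty, i.e., restricting to $\Faja_a(s)$, leaves the union unchanged and eliminates $\emptyset$, giving a cover in the sense required.

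For (ii), I would split into three cases. If $\Fsuc(s) = \emptyset$, then by Fact \ref{fact:domain} every $\Fout_\FCC(s, \sigma_\FCC) \subseteq \Fsuc(s) = \emptyset$, so $\Faja_\FCC(s) = \emptyset$ and $\FAE_\FCC(s) = \emptyset = \Fnei_\FCC(s)$. If $\Fsuc(s) \neq \emptyset$ and $\FCC = \emptyset$, then $\Fout_\emptyset(s, \emptyset) = \Fsuc(s) \neq \emptyset$, so $\FAE_\emptyset(s) = \{\Fsuc(s)\} = \Fnei_\emptyset(s)$. The core case is $\Fsuc(s) \neq \emptyset$ and $\FCC \neq \emptyset$. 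There, by definition,
\[
\Fout_\FCC(s, \sigma_\FCC) = \bigcap_{a \in \FCC} \Fout_a(s, \sigma_\FCC|_a),
\]
so a set $Z$ belongs to $\FAE_\FCC(s)$ iff it can be written as such a nonempty intersection. A nonempty intersection forces each $\Fout_a(s, \sigma_\FCC|_a)$ to be nonempty, hence each $\sigma_\FCC|_a \in \Faja_a(s)$, hence each $\Fout_a(s, \sigma_\FCC|_a) \in \Fnei_a(s)$; so $Z \in \bigodot\{\Fnei_a(s) \mid a \in \FCC\} = \Fnei_\FCC(s)$. Conversely, any element of $\Fnei_\FCC(s)$ has the form $\bigcap_{a \in \FCC} Y_a$ with $Y_a = \Fout_a(s, \sigma_a) \in \Fnei_a(s)$ for some $\sigma_a \in \Faja_a(s)$, and intersection nonempty; assembling $\sigma_\FCC = \bigcup_{a \in \FCC}\sigma_a$ yields $\sigma_\FCC \in \Faja_\FCC(s)$ with $\Fout_\FCC(s, \sigma_\FCC) = \bigcap_{a \in \FCC} Y_a$, hence this set lies in $\FAE_\FCC(s)$.

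The main delicate point, rather than a real obstacle, is bookkeeping the interaction between availability (which filters out $\emptyset$-outcome actions) and the $\odot$ operator (which discards empty intersections). Since both are defined precisely so as to discard empties, the two sides match step by step. No non-trivial combinatorial construction is needed; the argument is a straightforward unfolding of Definition \ref{definition:Single-coalition-first action models}, the definition of $\Fnei_\FCC$ in $\SNM$, and the definition of $\FAE_\FCC$.
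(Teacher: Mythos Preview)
Your proposal is correct and follows exactly the natural construction the paper has in mind; indeed the paper itself simply states that ``this result is easy to show'' and skips the proof, so your argument fills in precisely the unfolding that is implicit there. The only minor remark is that your appeal to Fact~\ref{fact:domain} in the case $\Fsuc(s)=\emptyset$ technically goes through Theorem~\ref{theorem:implication} (since that fact is stated for grand-coalition-first models), but this is harmless and could equally be argued directly from the general-cover clause in Definition~\ref{definition:Single-coalition-first action models}.
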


This result is easy to show, and we skip its proof.

\begin{theorem}[]
\label{theorem:representation single-coalition-first neighborhood models TO single-coalition-first action models}

Every single-coalition-first neighborhood model $z$-represents a single-coalition-first action model.

\end{theorem}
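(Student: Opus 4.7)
The plan is to build, from any single-coalition-first neighborhood model $\SNM = (\FST, \Fsuc, \{\Fnei_a \mid a \in \FAG\}, \Flab)$, a single-coalition-first action model $\SAM$ on the same state space, successor function, and labeling, whose actual effectivity function at every coalition recovers the corresponding neighborhood function of $\SNM$. The construction I have in mind uses the neighborhoods themselves as actions: let
\[
\FAC = \bigcup\{\Fnei_a(s) \mid a \in \FAG,\ s \in \FST\} \cup \{*\},
\]
where $*$ is a fresh symbol included only to guarantee $\FAC$ is nonempty, so that every non-dummy action is a subset of $\FST$. For each agent $a$, I would define
\[
\Fout_a(s, \sigma_a) = \begin{cases} \sigma_a(a) & \text{if } \sigma_a(a) \in \Fnei_a(s), \\ \emptyset & \text{otherwise.} \end{cases}
\]
Because $\Fnei_a(s)$ is a cover of $\Fsuc(s)$, the family $\{\Fout_a(s,\sigma_a) \mid \sigma_a \in \FJA_a\}$ equals $\Fnei_a(s) \cup \{\emptyset\}$, whose union is $\Fsuc(s)$, so $\SAM := (\FST, \FAC, \Fsuc, \{\Fout_a \mid a \in \FAG\}, \Flab)$ is a legitimate single-coalition-first action model.

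It then remains to verify $\FAE_\FCC(s) = \Fnei_\FCC(s)$ for every $\FCC \subseteq \FAG$ and $s \in \FST$, which I would split along the three cases in the definition of $\Fnei_\FCC$. If $\Fsuc(s) = \emptyset$, then each $\Fnei_a(s)$ must be empty (a cover of $\emptyset$ contains no nonempty sets), so $\Fout_a(s,\cdot)$ is constantly $\emptyset$, all induced coalitional outcomes are empty, including $\Fout_\emptyset(s,\emptyset) = \Fsuc(s) = \emptyset$, hence $\Faja_\FCC(s) = \emptyset$ and both sides vanish. If $\Fsuc(s) \neq \emptyset$ and $\FCC = \emptyset$, then $\Fout_\emptyset(s,\emptyset) = \Fsuc(s) \neq \emptyset$, so $\FAE_\emptyset(s) = \{\Fsuc(s)\} = \Fnei_\emptyset(s)$.

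The substantive case, and what I expect to be the only step requiring real bookkeeping, is $\Fsuc(s) \neq \emptyset$ and $\FCC \neq \emptyset$. Unfolding the definitions, for $\sigma_\FCC \in \FJA_\FCC$ we have $\Fout_\FCC(s,\sigma_\FCC) = \bigcap_{a \in \FCC} \Fout_a(s, \sigma_\FCC|_a)$; writing $\FYY_a := \sigma_\FCC|_a(a)$, this intersection is nonempty exactly when each $\FYY_a$ lies in $\Fnei_a(s)$ and $\bigcap_{a \in \FCC} \FYY_a \neq \emptyset$. Running through the definition of $\Faja_\FCC$ and $\FAE_\FCC$ therefore yields
\[
\FAE_\FCC(s) = \Bigl\{\,\bigcap_{a \in \FCC} \FYY_a \;\Big|\; \FYY_a \in \Fnei_a(s) \text{ for each } a \in \FCC,\ \bigcap_{a \in \FCC} \FYY_a \neq \emptyset\,\Bigr\},
\]
which is precisely $\bigodot\{\Fnei_a(s) \mid a \in \FCC\} = \Fnei_\FCC(s)$. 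The only small thing to verify, and the main obstacle, is that every tuple $(\FYY_a)_{a \in \FCC}$ with $\FYY_a \in \Fnei_a(s)$ is realized by some $\sigma_\FCC \in \FJA_\FCC$; this is immediate because each $\FYY_a$ is an element of the shared action set $\FAC$ and a joint action is just an arbitrary function $\FCC \to \FAC$. Once this matching is established the three cases combine to give $\FAE_\FCC = \Fnei_\FCC$ as required, completing the representation.
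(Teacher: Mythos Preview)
Your proof is correct and follows essentially the same strategy as the paper: build actions out of the single-agent neighborhoods and let $\Fout_a(s,\cdot)$ return the neighborhood when it belongs to $\Fnei_a(s)$ and $\emptyset$ otherwise, then verify $\FAE_\FCC=\Fnei_\FCC$ by the same three-way case split. The only difference is cosmetic: the paper tags each action with the agent and state it came from, taking $\FAC=\{\alpha_{a\text{-}s\text{-}\XX}\mid \XX\in\Fnei_a(s)\}$, whereas you use the neighborhood sets themselves as actions; both encodings yield the same actual effectivity functions. Your inclusion of the dummy action $*$ is in fact a small improvement, since it guarantees $\FAC\neq\emptyset$ even in the degenerate case where $\Fsuc(s)=\emptyset$ for every $s$ (so every $\Fnei_a(s)=\emptyset$), a corner case the paper's construction does not explicitly cover.
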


\newcommand{\FX}{\mathrm{X}}

\begin{proof}~

Let $\SNM = (\FST, \{\Fnei_\FCC \mid \FCC \subseteq \FAG\}, \Flab)$ be a single-coalition-first neighborhood model.

Define a single-coalition-first action model $\SAM = (\FST, \FAC, \{\Fout_\FCC \mid \FCC \subseteq \FAG\}, \{\Faja_\FCC \mid \FCC \subseteq \FAG\}, \Flab)$ as follows:
\begin{itemize}

\item

$\FAC = \{\alpha_{a-s-\XX} \mid a \in \FAG, s \in \FST, \text{and } \XX \in \Fnei_a (s) \}$.

\item

\begin{itemize}

\item 

for every $s \in \FST$, $\Fout_\emptyset (s,\emptyset) = \bigcup \Fnei_\emptyset (s)$.

\item

for every $a \in \FAG$, $s \in \FST$ and $\alpha_{x-u-\XX} \in \FAC$,
\[
\Fout_a (s, \alpha_{x-u-\XX}) = 
\begin{cases}
\XX & \text{if $x = a$ and $u = s$} \\
\emptyset & \text{otherwise}
\end{cases}
\]

\item

for all nonempty $\FCC \subseteq \FAG$, $s \in \FST$ and $\sigma_\FCC \in \FJA_\FCC$: $\Fout_\FCC (s, \ja{\FCC}) = \bigcap \{\Fout_a (s, \ja{\FCC}|_a) \mid a \in \FCC\}$.

\end{itemize}

\end{itemize}

It is easy to check that for all $a \in \FAG$ and $s \in \FST$, $\{ \Fout_a (s, \sigma_a) \mid \sigma_a \in \FJA_a \}$ is a general cover of $\Fout_\emptyset (s,\emptyset)$. Then $\SAM$ is a single-coalition-first action model.

We claim that $\FSAM$ is $z$-representable by $\FSNM$.

Let $\{\FAE_\FCC \mid \FCC \subseteq \FAG\}$ be the class of actual effectivity functions of $\SAM$. Let $\FCC \subseteq \FAG$ and $s \in \FST$. It suffices to show $\FAE_\FCC (s) = \Fnei_\FCC (s)$. Note $\FAE_\FCC (s) = \{\Fout_\FCC (s, \sigma_\FCC) \mid \sigma_\FCC \in \FJA_\FCC \text{ and } \Fout_\FCC (s, \sigma_\FCC) \neq \emptyset \}$.

Assume $\bigcup \Fnei_\emptyset (s) = \emptyset$. Then, $\Fout_\emptyset (s,\emptyset)=\emptyset$. Note for all $a \in \FAG$ and $s \in \FST$, $\{ \Fout_a (s, \sigma_a) \mid \sigma_a \in \FJA_a \}$ is a general cover of $\Fout_\emptyset (s,\emptyset)$. Then, for all $a \in \FAG$, $\sigma_a \in \FJA_a$ and $s \in \FST$, $ \Fout_a (s, \sigma_a)= \emptyset$. Then, $\FAE_\FCC (s) = \emptyset=\Fnei_\FCC (s)$.

Assume $\bigcup \Fnei_\emptyset (s) \neq \emptyset$ and $\FCC = \emptyset$. Note $\FAE_\FCC (s) = \{\bigcup \Fnei_\emptyset (s)\}$ and $\Fnei_\FCC (s) = \{\bigcup \Fnei_\emptyset (s)\}$. Then, $\FAE_\FCC (s) = \Fnei_\FCC (s)$.

Assume $\bigcup \Fnei_\emptyset (s) \neq \emptyset$ and $\FCC \neq \emptyset$. Let $\FCC = \{a_1, \dots, a_n\}$.
Note $\Fnei_\FCC (s) = \bigodot \{\Fnei_a (s) \mid a \in \FCC\}$.

Let $\XX \in \FAE_\FCC (s)$. Then, there is $\sigma_\FCC \in \Fav_\FCC (s)$ such that $\XX = \Fout_\FCC (s, \sigma_\FCC)$. Let $\sigma_\FCC = \sigma_{a_1} \cup \dots \cup \sigma_{a_n}$.
Note that $\Fout_\FCC (s, \sigma_\FCC) = \Fout_{a_1} (s, \sigma_{a_1}) \cap \dots \cap \Fout_{a_n} (s, \sigma_{a_n})$ and $\XX \neq \emptyset$. 
Then, $\Fout_{a_1} (s, \sigma_{a_1}) \neq \emptyset$, \dots, $\Fout_{a_n} (s, \sigma_{a_n}) \neq \emptyset$.
Let $i$ be such that $1 \leq i \leq n$ and $\sigma_{a_i} = \alpha_{x-u-\XX_i}$. By the definition of $\Fout_{a_i} (s, \sigma_{a_i})$, $\Fout_{a_i} (s, \sigma_{a_i}) = \XX_i$, $x = a_i$, and $u = s$. Then $\sigma_{a_i} = \alpha_{a_i-s-\XX_i}$ and $\XX_i \in \Fnei_{a_i} (s)$.
Then, $\XX = \XX_1 \cap \dots \cap \XX_n$. Then, $\XX \in \Fnei_\FCC (s)$.

Let $\XX \in \Fnei_\FCC (s)$. Then, $\XX = \XX_1 \cap \dots \cap \XX_n$ for some $\XX_1 \in \Fnei_{a_1} (s), \dots, \XX_n \in \Fnei_{a_n} (s)$.
Then, $\alpha_{a_1-s-\XX_1} \dots, \alpha_{a_n-s-\XX_n} \in \FAC$.
Note $\Fout_{a_1} (s, \alpha_{a_1-s-\XX_1}) = \XX_1$, $\dots$, $\Fout_{a_n} (s, \alpha_{a_n-s-\XX_n}) = \XX_n$.
Let $\sigma_\FCC = \alpha_{a_1-s-\XX_1} \cup \dots \cup \alpha_{a_n-s-\XX_n}$. Then, $\Fout_\FCC (s, \sigma_\FCC) = \Fout_{a_1} (s, \alpha_{a_1-s-\XX_1}) \cap \dots \cap \Fout_{a_n} (s, \alpha_{a_n-s-\XX_n}) = \XX_1 \cap \dots \cap \XX_n = \XX$. Note $\XX$ is not empty. Then, $\XX \in \FAE_\FCC (s)$.

\end{proof}

\begin{theorem}
\label{theorem:X iff X}

For every single-coalition-first action model $\SAM$ and single-coalition-first neighborhood model $\SNM$, if $\SAM$ is $z$-representable by $\SNM$, then for every $\FXX \in \FES$, $\SAM$ is an $\FXX$-model if and only if $\SNM$ is an $\FXX$-model.

\end{theorem}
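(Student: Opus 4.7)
The plan is to reduce to showing each of the three properties—seriality, independence, determinism—transfers between $\SAM$ and $\SNM$ separately; since $\FXX \in \FES$ just encodes some conjunction of these, the claim for all eight strings then follows. The workhorse throughout is the identity $\FAE_\FCC = \Fnei_\FCC$ provided by $z$-representability, combined with the characterization $\FAE_\FCC(s) = \{\Fout_\FCC(s, \sigma_\FCC) \mid \sigma_\FCC \in \Faja_\FCC(s)\}$.

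Seriality and determinism are almost immediate. For seriality, I would note that $\Faja_\FCC(s) \neq \emptyset$ iff $\FAE_\FCC(s) \neq \emptyset$ (every available joint action contributes exactly one nonempty outcome set, so $\FAE_\FCC(s)$ is nonempty precisely when some such action exists), iff $\Fnei_\FCC(s) \neq \emptyset$. For determinism, each available action profile at $s$ has singleton outcome iff every element of $\FAE_\FAG(s) = \Fnei_\FAG(s)$ is a singleton, so the action-side and neighborhood-side conditions match directly.

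The substantive step is independence. I would invoke Theorem \ref{theorem:equivalent condition}(2b), which gives, in any single-coalition-first action model, $\Fout_{\FCC \cup \FDD}(s, \sigma_\FCC \cup \sigma_\FDD) = \Fout_\FCC(s, \sigma_\FCC) \cap \Fout_\FDD(s, \sigma_\FDD)$ whenever $\FCC \cap \FDD = \emptyset$. Then any $\FYY_1 \in \Fnei_\FCC(s)$ and $\FYY_2 \in \Fnei_\FDD(s)$ are of the form $\Fout_\FCC(s, \sigma_\FCC)$ and $\Fout_\FDD(s, \sigma_\FDD)$ for some $\sigma_\FCC \in \Faja_\FCC(s)$ and $\sigma_\FDD \in \Faja_\FDD(s)$, with $\FYY_1 \cap \FYY_2 = \Fout_{\FCC \cup \FDD}(s, \sigma_\FCC \cup \sigma_\FDD)$. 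Hence $\FYY_1 \cap \FYY_2 \neq \emptyset$ iff $\sigma_\FCC \cup \sigma_\FDD \in \Faja_{\FCC \cup \FDD}(s)$, and since every available joint action on either side contributes such a $\FYY_i$, $\SAM$ is independent iff $\SNM$ is.

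The hard part, modestly so, is the independence case: it is the only place where one must look past the bare equality $\FAE_\FCC = \Fnei_\FCC$ and exploit the compositional structure of single-coalition-first models to realign the action-side notion (fusion of available joint actions remaining available) with the neighborhood-side notion (pairwise nonempty intersection of neighborhoods). Without that compositionality, this correspondence would fail, as Figure \ref{figure:gam models outcome not intersection-compositonal} illustrates for general grand-coalition-first models.
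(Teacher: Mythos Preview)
Your proposal is correct and follows essentially the same approach as the paper: reduce to the three properties separately and use the identity $\FAE_\FCC = \Fnei_\FCC$ to translate each action-side condition into the corresponding neighborhood-side condition. The only difference is that the paper simply lists the chain of equivalent statements for independence and declares them ``easy to see,'' whereas you make the underlying step explicit by invoking the compositionality $\Fout_{\FCC \cup \FDD}(s,\sigma_\FCC \cup \sigma_\FDD) = \Fout_\FCC(s,\sigma_\FCC) \cap \Fout_\FDD(s,\sigma_\FDD)$ from Theorem~\ref{theorem:equivalent condition}(2b); this is exactly what is needed to justify the paper's second-to-third bullet in the independence case, so your argument is in fact a filled-in version of theirs.
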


\begin{proof}~

Let $\SAM = (\FST, \FAC, \{\Fout_\FCC \mid \FCC \subseteq \FAG\}, \{\Fav_\FCC \mid \FCC \subseteq \FAG\}, \Flab)$ be a single-coalition-first action model, and $\{\FAE_\FCC \mid \FCC \subseteq \FAG\}$ be the class of actual effectivity functions of $\SAM$.
Let $\SNM = (\FST, \{\Fnei_\FCC \mid \FCC \subseteq \FAG\}, \Flab)$ be a single-coalition-first neighborhood model.
Assume $\SAM$ is $z$-representable by $\SNM$. Then, for all $\FCC \subseteq \FAG$, $\FAE_\FCC = \Fnei_\FCC$.

Let $\FXX \in \FES$. It is easy to see that the following three groups of statements are equivalent, respectively.
\begin{enumerate}[label=(\arabic*),leftmargin=3.33em]

\item 
\begin{itemize}

\item

$\SAM$ is serial;

\item 

for all $s \in \FST$ and $\FCC \subseteq \FAG$, $\Faja_\FCC (s) \neq \emptyset$;

\item 

for all $s \in \FST$ and $\FCC \subseteq \FAG$, $ \FAE_\FCC(s) \neq \emptyset$;

\item 

for all $s \in \FST$ and $\FCC \subseteq \FAG$, $\Fnei_\FCC(s) \neq \emptyset$;

\item 

$\SNM$ is serial.

\end{itemize}

\item 

\begin{itemize}

\item

$\SAM$ is independent;

\item

for all $s \in \FST, \FCC, \FDD \subseteq \FAG$ such that $\FCC \cap \FDD = \emptyset, \ja{\FCC} \in \FJA_\FCC$, and $\ja{\FDD} \in \FJA_\FDD$, if $\ja{\FCC} \in \Faja_\FCC (s)$ and $\ja{\FDD} \in \Faja_\FDD (s)$, then $\ja{\FCC} \cup \ja{\FDD} \in \Faja_{\FCC \cup \FDD} (s)$;

\item

for all $s \in \FST, \FCC, \FDD \subseteq \FAG$ such that $\FCC \cap \FDD = \emptyset$, and $\FYY_1, \FYY_2 \subseteq \FST$, if $\FYY_1 \in \FAE_\FCC (s)$ and $\FYY_2 \in \FAE_\FDD (s)$, then $\FYY_1 \cap \FYY_2 \neq \emptyset$;

\item

for all $s \in \FST, \FCC, \FDD \subseteq \FAG$ such that $\FCC \cap \FDD = \emptyset$, and $\FYY_1, \FYY_2 \subseteq \FST$, if $\FYY_1 \in \Fnei_\FCC (s)$ and $\FYY_2 \in \Fnei_\FDD (s)$, then $\FYY_1 \cap \FYY_2 \neq \emptyset$;

\item

$\SNM$ is independent.

\end{itemize}

\item

\begin{itemize}

\item 

$\SAM$ is deterministic;

\item 

for all $s \in \FST$ and $\ja{\FAG} \in \Faja_\FAG (s)$, $\Fout_\FAG (s, \ja{\FAG})$ is a singleton;

\item 

for all $s \in \FST$ and $\FYY \in \FAE_\FAG (s)$, $\FYY$ is a singleton;

\item 

for all $s \in \FST$ and $\FYY \in \Fnei_\FAG (s)$, $\FYY$ is a singleton;

\item 

$\SNM$ is deterministic.

\end{itemize}

\end{enumerate}

\noindent It follows that $\SAM$ is an $\FXX$-model if and only if $\SNM$ is an $\FXX$-model.

\end{proof}

The following result follows from Theorems \ref{theorem:representation single-coalition-first action models TO single-coalition-first neighborhood models}, \ref{theorem:representation single-coalition-first neighborhood models TO single-coalition-first action models}, and \ref{theorem:X iff X}.

\begin{theorem}[Representation of the class of single-coalition-first action $\FXX$-models by the class of single-coalition-first neighborhood $\FXX$-models]
\label{theorem:Representation of single-coalition-first action models by single-coalition-first neighborhood models}

For every $\FXX \in \FES$, the class of single-coalition-first action $\FXX$-models is $z$-representable by the class of single-coalition-first neighborhood $\FXX$-models.

\end{theorem}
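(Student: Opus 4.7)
The plan is to combine the three preceding theorems mechanically by unpacking the definition of $z$-representability for classes of models. Recall that this definition has two clauses: (1) every action model in the class is $z$-representable by some neighborhood model in the target class, and (2) every neighborhood model in the target class $z$-represents some action model in the source class. So I would fix an arbitrary $\FXX \in \FES$ and verify each clause separately.

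For clause (1), I would start with an arbitrary single-coalition-first action $\FXX$-model $\SAM$. Applying Theorem \ref{theorem:representation single-coalition-first action models TO single-coalition-first neighborhood models}, I obtain some single-coalition-first neighborhood model $\SNM$ that $z$-represents $\SAM$. Then, invoking Theorem \ref{theorem:X iff X} in the $\Rightarrow$ direction, since $\SAM$ is an $\FXX$-model, $\SNM$ must also be an $\FXX$-model, so $\SAM$ is $z$-representable by a model in the target class. For clause (2), I would start with an arbitrary single-coalition-first neighborhood $\FXX$-model $\SNM$, apply Theorem \ref{theorem:representation single-coalition-first neighborhood models TO single-coalition-first action models} to get a single-coalition-first action model $\SAM$ that $\SNM$ $z$-represents, and then use Theorem \ref{theorem:X iff X} in the $\Leftarrow$ direction to conclude $\SAM$ is also an $\FXX$-model.

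The only real subtlety to check is that Theorem \ref{theorem:X iff X} applies symmetrically: its hypothesis is simply that $\SAM$ is $z$-representable by $\SNM$, and its conclusion is a biconditional about membership in the $\FXX$-class, so it transfers the property in both directions. There is no obstacle requiring new construction, since the witnessing models in each direction are supplied by the cited representation theorems. The proof is thus a short diagram chase, and I would present it in two or three sentences per direction, closing with a remark that the case analysis over $\FXX \in \FES$ is uniform because Theorem \ref{theorem:X iff X} handles all eight combinations at once.
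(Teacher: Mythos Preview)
Your proposal is correct and matches the paper's approach exactly: the paper simply states that the result follows from Theorems \ref{theorem:representation single-coalition-first action models TO single-coalition-first neighborhood models}, \ref{theorem:representation single-coalition-first neighborhood models TO single-coalition-first action models}, and \ref{theorem:X iff X}, and your write-up is precisely the natural unpacking of that claim along the two clauses of $z$-representability for classes.
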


\section{Clear grand-coalition-first action models and clear single-coalition-first neighborhood models}
\label{section:Clear grand-coalition-first action models and clear single-coalition-first neighborhood models}

\subsection{Clear grand-coalition-first action models}

\begin{definition}[Clear grand-coalition-first action models]
\label{definition:??}

Let $\FGAM = (\FST, \FAC, \{\Fout_\FCC \mid \FCC \subseteq \FAG\}, \{\Faja_\FCC \mid \FCC \subseteq \FAG\}, \Flab)$ be a grand-coalition-first action model.

We say that $\FGAM$ is \Fdefs{clear} if for all $s \in \FST$ and $\sigma_\FAG, \sigma'_\FAG \in \FJA_\FAG$, if $\sigma_\FAG \neq \sigma'_\FAG$, then $\Fout_\FAG (s,\sigma_\FAG) \cap \Fout_\FAG (s,\sigma'_\FAG) = \emptyset$.

\end{definition}

Intuitively, in clear grand-coalition-first action models, different action profiles have different outcome states. The following result indicates that in clear grand-coalition-first action models, different actions have different outcome states.

\begin{theorem}
\label{theorem:three equivalent definitions clear grand-coalition-first action models}

Let $\FGAM = (\FST, \FAC, \{\Fout_\FCC \mid \FCC \subseteq \FAG\}, \{\Faja_\FCC \mid \FCC \subseteq \FAG\}, \Flab)$ be a grand-coalition-first action model, and $s \in \FST$.

The following three conditions are equivalent:
\begin{enumerate}[label=(\arabic*),leftmargin=3.33em]

\item 

for all $a \in \FAG$ and $\sigma_a, \sigma'_a \in \FJA_a$, if $\sigma_a \neq \sigma'_a$, then $\Fout_a (s,\sigma_a) \cap \Fout_a (s,\sigma'_a) = \emptyset$;

\item 

for all $\FCC \subseteq \FAG$ and $\sigma_\FCC, \sigma'_\FCC \in \FJA_\FCC$, if $\sigma_\FCC \neq \sigma'_\FCC$, then $\Fout_\FCC (s,\sigma_\FCC) \cap \Fout_\FCC (s,\sigma'_\FCC) = \emptyset$;

\item

for all $\sigma_\FAG, \sigma'_\FAG \in \FJA_\FAG$, if $\sigma_\FAG \neq \sigma'_\FAG$, then $\Fout_\FAG (s,\sigma_\FAG) \cap \Fout_\FAG (s,\sigma'_\FAG) = \emptyset$.

\end{enumerate}

\end{theorem}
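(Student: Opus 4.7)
The claim is a chain of equivalences with two easy directions and two substantive ones. The plan is to observe that (2) trivially implies both (1) and (3), by instantiating $\FCC = \{a\}$ and $\FCC = \FAG$ respectively, and then to close the cycle by proving (1) $\Rightarrow$ (3) $\Rightarrow$ (2). Throughout I will use only the definition of $\Fout_\FCC$ in grand-coalition-first action models, namely
\[
\Fout_\FCC (s, \sigma_\FCC) = \bigcup \{\Fout_\FAG (s, \sigma_\FAG) \mid \sigma_\FAG \in \FJA_\FAG \text{ and } \sigma_\FCC \subseteq \sigma_\FAG\},
\]
together with the elementary fact that whenever $\sigma_\FCC \subseteq \sigma_\FAG$ we have $\Fout_\FAG (s, \sigma_\FAG) \subseteq \Fout_\FCC (s, \sigma_\FCC)$.

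For (1) $\Rightarrow$ (3), I fix $\sigma_\FAG \neq \sigma'_\FAG$ in $\FJA_\FAG$. Because the two action profiles differ, there is some agent $a \in \FAG$ with $\sigma_\FAG|_a \neq \sigma'_\FAG|_a$. By the containment noted above,
\[
\Fout_\FAG (s, \sigma_\FAG) \subseteq \Fout_a (s, \sigma_\FAG|_a) \quad \text{and} \quad \Fout_\FAG (s, \sigma'_\FAG) \subseteq \Fout_a (s, \sigma'_\FAG|_a).
\]
Hypothesis (1) applied to $\sigma_\FAG|_a$ and $\sigma'_\FAG|_a$ gives that the right-hand intersection is empty, so the left-hand intersection is empty as well.

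For (3) $\Rightarrow$ (2), I fix $\FCC \subseteq \FAG$ and $\sigma_\FCC \neq \sigma'_\FCC$ in $\FJA_\FCC$, and expand
\[
\Fout_\FCC (s, \sigma_\FCC) \cap \Fout_\FCC (s, \sigma'_\FCC) = \bigcup \bigl\{ \Fout_\FAG (s, \sigma_\FAG) \cap \Fout_\FAG (s, \sigma'_\FAG) \bigm| \sigma_\FCC \subseteq \sigma_\FAG,\ \sigma'_\FCC \subseteq \sigma'_\FAG \bigr\}.
\]
For any such pair $\sigma_\FAG, \sigma'_\FAG$, the restrictions to $\FCC$ give $\sigma_\FAG|_\FCC = \sigma_\FCC \neq \sigma'_\FCC = \sigma'_\FAG|_\FCC$, hence $\sigma_\FAG \neq \sigma'_\FAG$. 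Hypothesis (3) then makes every term in the union empty, so (2) follows.

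I do not anticipate a real obstacle here: everything reduces to unpacking the definition of $\Fout_\FCC$ and using the simple fact that distinct restrictions force distinct extensions. The only mildly substantive point is the direction (1) $\Rightarrow$ (3), which needs one to pick a witnessing agent where the two action profiles disagree and then exploit the monotone containment $\Fout_\FAG(s,\sigma_\FAG) \subseteq \Fout_a(s,\sigma_\FAG|_a)$ implicit in the definition of $\Fout_a$.
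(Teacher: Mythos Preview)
Your proposal is correct. The underlying ideas are identical to the paper's proof---both rely only on the containment $\Fout_\FAG(s,\sigma_\FAG)\subseteq\Fout_\FCC(s,\sigma_\FCC)$ whenever $\sigma_\FCC\subseteq\sigma_\FAG$, together with the observation that distinct restrictions force distinct extensions. The only difference is organizational: the paper runs the cycle as $(1)\Rightarrow(2)\Rightarrow(3)\Rightarrow(1)$ (with $(2)\Rightarrow(3)$ trivial), whereas you run $(1)\Rightarrow(3)\Rightarrow(2)\Rightarrow(1)$ (with $(2)\Rightarrow(1)$ trivial), so your step $(3)\Rightarrow(2)$ plays the role of the paper's $(1)\Rightarrow(2)$ and your $(1)\Rightarrow(3)$ is the contrapositive of the paper's $(3)\Rightarrow(1)$.
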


\begin{proof}~

(1) $\Rightarrow$ (2)

Assume (1). Suppose (2) does not hold. Then, there is $\FCC \subseteq \FAG$ and $\sigma_\FCC, \sigma'_\FCC \in \FJA_\FCC$ such that $\sigma_\FCC \neq \sigma'_\FCC$ and $\Fout_\FCC (s,\sigma_\FCC) \cap \Fout_\FCC (s,\sigma'_\FCC) \neq \emptyset$.
Then, there is $a \in \FCC$ and $\sigma_a, \sigma'_a \in \FJA_a$ such that $\sigma_a \subseteq \sigma_\FCC$, $\sigma'_a \subseteq \sigma'_\FCC$, and $\sigma_a \neq \sigma'_a$. 
Let $t \in \Fout_\FCC (s,\sigma_\FCC) \cap \Fout_\FCC (s,\sigma'_\FCC)$. Then, there is $\sigma_\FAG$ and $\sigma'_\FAG$ such that $\sigma_\FCC \subseteq \sigma_\FAG$, $\sigma'_\FCC \subseteq \sigma'_\FAG$, $t \in \Fout_\FAG (s,\sigma_\FAG)$, and $t \in \Fout_\FAG (s,\sigma'_\FAG)$.
Then, $\sigma_a \subseteq \sigma_\FAG$ and $\sigma'_a \subseteq \sigma'_\FAG$. Then, $t \in \Fout_a (s,\sigma_a)$ and $t \in \Fout_a (s,\sigma'_a)$. Then, $\Fout_a (s,\sigma_a) \cap \Fout_a (s,\sigma'_a) \neq \emptyset$, which is impossible.

(2) $\Rightarrow$ (3)

(3) is a special case of (2).

(3) $\Rightarrow$ (1)

Assume (3). Suppose (1) does not hold. Then, there is $a \in \FAG$ and $\sigma_a, \sigma'_a \in \FJA_a$ such that $\sigma_a \neq \sigma'_a$ and $\Fout_a (s,\sigma_a) \cap \Fout_a (s,\sigma'_a) \neq \emptyset$.
Let $t \in \Fout_a (s,\sigma_a) \cap \Fout_a (s,\sigma'_a)$.  Then, there is $\sigma_\FAG$ and $\sigma'_\FAG$ such that $\sigma_a \subseteq \sigma_\FAG$, $\sigma'_a \subseteq \sigma'_\FAG$, $t \in \Fout_\FAG (s,\sigma_\FAG)$, and $t \in \Fout_\FAG (s,\sigma'_\FAG)$. Then, $\Fout_\FAG (s,\sigma_\FAG) \cap \Fout_\FAG (s,\sigma'_\FAG) \neq \emptyset$, which is impossible.

\end{proof}

For any set $\XX$ of states, we say that $\Delta \subseteq \mathcal{P} (\XX)$ is a \Fdefs{general partition} of $\XX$ if (1) $\bigcup \Delta = \XX$, and (2) for every $\FYY, \FYY' \in \Delta$, if $\FYY \neq \FYY'$, then $\FYY \cap \FYY' = \emptyset$.
A general partition $\Delta$ is a partition if $\emptyset \notin \Delta$.

\begin{fact}

Let $\FGAM = (\FST, \FAC, \{\Fout_\FCC \mid \FCC \subseteq \FAG\}, \{\Faja_\FCC \mid \FCC \subseteq \FAG\}, \Flab)$ be a grand-coalition-first action model.

Then, if $\FGAM$ is clear, then for all $\FCC \subseteq \FAG$ and $s \in \FST$, $\{ \Fout_\FCC (s, \sigma_\FCC) \mid \sigma_\FCC \in \FJA_\FCC \}$ is a general partition of $\Fout_\emptyset (s,\emptyset)$.

\end{fact}

\begin{proof}~

Assume $\FGAM$ is clear. Let $\FCC \subseteq \FAG$ and $s \in \FST$. Note $\Fout_\emptyset (s,\emptyset) = \bigcup \{ \Fout_\FAG (s, \sigma_\FAG) \mid \sigma_\FAG \in \FJA_\FAG \}$.
It is easy to show $\bigcup \{ \Fout_\FCC (s, \sigma_\FCC) \mid \sigma_\FCC \in \FJA_\FCC \} = \Fout_\emptyset (s,\emptyset)$.
Let $\XX_1, \XX_2 \in \{ \Fout_\FCC (s, \sigma_\FCC) \mid \sigma_\FCC \in \FJA_\FCC \}$ such that $\XX_1 \neq \XX_2$. Then, $\XX_1 = \Fout_\FCC (s, \sigma_\FCC^1)$ for some $\sigma_\FCC^1 \in \FJA_\FCC$ and $\XX_2 = \Fout_\FCC (s, \sigma_\FCC^2)$ for some $\sigma_\FCC^2 \in \FJA_\FCC$. Note $\sigma^1_\FCC \neq \sigma^2_\FCC$. Then, $\Fout_\FCC (s,\sigma_\FCC) \cap \Fout_\FCC (s,\sigma'_\FCC) = \emptyset$, that is, $\XX_1 \cap \XX_2 = \emptyset$.

\end{proof}

Note that the other direction of the statement of this fact might not hold. What follows is a counterexample.

\begin{example}
\label{example:not clear single-coalition-first action model}

Suppose $\FAG = \{a\}$. Let $\FGAM = (\FST, \FAC, \{\Fout_\FCC \mid \FCC \subseteq \FAG\}, \{\Faja_\FCC \mid \FCC \subseteq \FAG\}, \Flab)$ be a grand-coalition-first action model such that:
\begin{itemize}

\item

$\FST = \{s\}$;

\item 

$\Fout_\emptyset (s,\emptyset) = \FST$;

\item 

$\FAC = \{\alpha_1, \alpha_2\}$;

\item 

$\Fout_a (s, \alpha_1) = \Fout_a (s, \alpha_2) = \FST$.

\end{itemize}

\noindent It is easy to check for all $\FCC \subseteq \FAG$, $\{ \Fout_\FCC (s, \sigma_\FCC) \mid \sigma_\FCC \in \FJA_\FCC \}$ is a general partition of $\Fout_\emptyset (s,\emptyset)$. However, $\FGAM$ is not clear.

\end{example}

\begin{theorem}
\label{theorem:clear grand-coalition-first action models are single-coalition-first action models}

Every clear grand-coalition-first action model is a single-coalition-first action model, but not vice versa.

\end{theorem}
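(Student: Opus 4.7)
The plan is to invoke Theorem \ref{theorem:equivalent condition} with its clause (3), which is the characterization most convenient for grand-coalition-first action models since they are presented through $\Fout_\FAG$. Given a clear grand-coalition-first action model $\FGAM$, condition (3a) is automatic: for $\FCC \subsetneq \FAG$ the identity in (3a) is literally the definition of $\Fout_\FCC$ in $\FGAM$, for $\FCC = \FAG$ it is tautological, and for $\FCC = \emptyset$ it matches the definition of $\Fout_\emptyset$. The entire content therefore lies in establishing condition (3b).

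For (3b), I would fix $\sigma_\FCC, \sigma'_\FCC, \sigma''_\FCC \in \FJA_\FCC$ with $\sigma''_\FCC$ a fusion of $\sigma_\FCC$ and $\sigma'_\FCC$ and examine $\Fout_\FCC(s,\sigma_\FCC) \cap \Fout_\FCC(s,\sigma'_\FCC)$. By Theorem \ref{theorem:three equivalent definitions clear grand-coalition-first action models}, clearness propagates from $\FAG$ to every coalition, so whenever $\sigma_\FCC \neq \sigma'_\FCC$ this intersection is already empty and (3b) holds vacuously. When $\sigma_\FCC = \sigma'_\FCC$, the definition of fusion forces $\sigma''_\FCC$ to agree with both on every agent in $\FCC$, hence $\sigma''_\FCC = \sigma_\FCC = \sigma'_\FCC$, and the containment in (3b) is immediate. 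Combined with (3a), Theorem \ref{theorem:equivalent condition} then delivers that $\FGAM$ qualifies as a single-coalition-first action model.

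For the \emph{not vice versa} half, I would give a tiny counterexample: take $\FAG = \{a\}$, $\FST = \{s_0, s_1\}$ with a suitable successor function and two distinct actions $\alpha_1, \alpha_2$ with $\Fout_a(s_0, \alpha_1) = \Fout_a(s_0, \alpha_2) = \{s_1\}$. The cover condition on single coalitions is trivially satisfied, so this is a single-coalition-first action model; but its induced grand-coalition outcome function coincides with $\Fout_a$, so the two distinct action profiles $\alpha_1$ and $\alpha_2$ share the nonempty outcome $\{s_1\}$ at $s_0$ and clearness fails. The only remotely delicate step in the whole argument is tracking the degenerate subcases in (3b), namely the empty coalition and the case $\sigma_\FCC = \sigma'_\FCC$; beyond that the proof is a direct bookkeeping on the characterizations already established.
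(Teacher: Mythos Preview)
Your proof is correct, but it takes a different route from the paper's. The paper verifies condition~(1) of Theorem~\ref{theorem:equivalent condition} directly: it uses Fact~\ref{fact:domain} for the cover condition and, for the nontrivial inclusion $\bigcap_{a\in\FCC}\Fout_a(s,\sigma_a)\subseteq\Fout_\FCC(s,\sigma_\FCC)$, picks a point $t$ in the intersection, extracts action profiles $\sigma^1_\FAG,\dots,\sigma^n_\FAG$ witnessing $t\in\Fout_\FAG(s,\sigma^i_\FAG)$, and uses clearness of $\FAG$ to force $\sigma^1_\FAG=\dots=\sigma^n_\FAG$, whence $\sigma_\FCC\subseteq\sigma^1_\FAG$. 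You instead go through condition~(3): (3a) is the very definition of $\Fout_\FCC$ in a grand-coalition-first model, and for (3b) you invoke Theorem~\ref{theorem:three equivalent definitions clear grand-coalition-first action models} so that clearness at level $\FCC$ makes the fusion inclusion vacuous whenever $\sigma_\FCC\neq\sigma'_\FCC$. Your path is arguably tidier---it offloads the real work onto two previously established characterizations and leaves almost nothing to compute---while the paper's argument is more self-contained and shows explicitly how clearness of $\FAG$ alone forces the intersection identity without first propagating clearness down to $\FCC$. For the converse direction, the paper merely asserts a counterexample is easy to find; your explicit single-agent example with two actions sharing an outcome is perfectly adequate.
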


\begin{proof}~

Let $\FGAM = (\FST, \FAC, \{\Fout_\FCC \mid \FCC \subseteq \FAG\}, \{\Faja_\FCC \mid \FCC \subseteq \FAG\}, \Flab)$ be a clear grand-coalition-first action model.

To show $\FGAM$ is a single-coalition-first action model, we need to show:
\begin{enumerate}[label=(\arabic*),leftmargin=3.33em]

\item

for all $a \in \FAG$ and $s \in \FST$, $\{ \Fout_a (s, \sigma_a) \mid \sigma_a \in \FJA_a \}$ is a general cover of $\Fout_\emptyset (s,\emptyset)$.

\item

for all nonempty $\FCC \subseteq \FAG$, $s \in \FST$, and $\sigma_\FCC \in \FJA_\FCC$,
\[
\Fout_\FCC (s, \ja{\FCC}) = \bigcap \{\Fout_a (s, \ja{a}) \mid a \in \FCC, \ja{a} \in \FJA_a, \text{and } \ja{a} \subseteq \ja{\FCC}\}
\]

\end{enumerate}

By Fact \ref{fact:domain}, (1) holds. It remains to show (2).
Let $\FCC \subseteq \FAG$, $s \in \FST$, and $\sigma_\FCC \in \FJA_\FCC$. Assume $\FCC \neq \emptyset$

Let $\FCC = \{a_1, \dots, a_n\}$ and $\sigma_\FCC = \sigma_{a_1} \cup \dots \cup \sigma_{a_n}$.
By Fact \ref{fact:one direction}, $\Fout_\FCC (s, \sigma_\FCC) \subseteq \Fout_{a_1} (s, \sigma_{a_1}) \cap \dots \cap \Fout_{a_n} (s, \sigma_{a_n})$.
Let $t \in \Fout_{a_1} (s, \sigma_{a_1}) \cap \dots \cap \Fout_{a_n} (s, \sigma_{a_n})$. Then, there is $\sigma^1_\FAG$ such that $\sigma_{a_1} \subseteq \sigma^1_\FAG$ and $t \in \Fout_\FAG (s,\sigma^1_\FAG)$, \dots, there is $\sigma^n_\FAG$ such that $\sigma_{a_n} \subseteq \sigma^n_\FAG$ and $t \in \Fout_\FAG (s,\sigma^n_\FAG)$. As $\FGAM$ is clear, $\sigma^1_\FAG = \dots = \sigma^n_\FAG$. Then $\sigma_\FCC \subseteq \sigma^1_\FAG$. Then $t \in \Fout_\FCC (s, \sigma_\FCC)$.

It is easy to see that the action model given in Example \ref{example:not clear single-coalition-first action model} is a single-coalition-first action model that is not clear.

\end{proof}

\paragraph{Remarks}

Goranko and Jamroga \cite{goranko_comparing_2004} and \r{A}gotnes and Alechina \cite{agotnes_embedding_2015} discussed \emph{injective concurrent game models}, which are clear grand-coalition-first action $\mathtt{SID}$-models.
As observed by \r{A}gotnes and Alechina \cite{agotnes_embedding_2015}, the assumption behind injective concurrent game models, \emph{different available action profiles have different outcomes}, is very common in game theory.

The reason that Goranko and Jamroga \cite{goranko_comparing_2004} and \r{A}gotnes and Alechina \cite{agotnes_embedding_2015} used ``injective'' is as follows: in injective concurrent game models, the outcome function $\Fout_\FAG$ for $\FAG$ is an injective function with the set of possible states as its range. However, in grand-coalition-first models, the outcome function $\Fout_\FAG$ for $\FAG$ takes the power set of the set of possible states as its range. A grand-coalition-first action model where the outcome function $\Fout_\FAG$ for $\FAG$ is injective does not imply that it is clear. This is why we do not use ``injective'' for clear grand-coalition-first action models.

\subsection{Clear single-coalition-first neighborhood models}

\begin{definition}[Clear single-coalition-first neighborhood models]
\label{definition:??}

Let $\SNM = (\FST, \{\Fnei_\FCC \mid \FCC \subseteq \FAG\}, \Flab)$ be a single-coalition-first neighborhood model.

We say that $\SNM$ is \Fdefs{clear} if for all $a \in \FAG$ and $s \in \FST$, $\Fnei_a (s)$ is a partition of $\bigcup \Fnei_\emptyset (s)$.

\end{definition}

The following result gives an alternative definition of clear single-coalition-first neighborhood models.

\begin{theorem}
\label{theorem:three equivalent definitions clear single-coalition-first action models}

Let $\SNM = (\FST, \{\Fnei_\FCC \mid \FCC \subseteq \FAG\}, \Flab)$ be a single-coalition-first neighborhood model.

The following two conditions are equivalent:
\begin{enumerate}[label=(\arabic*),leftmargin=3.33em]

\item 

for all $a \in \FAG$ and $s \in \FST$, $\Fnei_a (s)$ is a partition of $\bigcup \Fnei_\emptyset (s)$;

\item 

for all $\FCC \subseteq \FAG$ and $s \in \FST$, $\Fnei_\FCC (s)$ is a partition of $\bigcup \Fnei_\emptyset (s)$.

\end{enumerate}

\end{theorem}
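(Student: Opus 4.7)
The plan is to prove the two implications separately, with the nontrivial direction being (1) $\Rightarrow$ (2).

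Direction (2) $\Rightarrow$ (1) is immediate: single agents $a \in \FAG$ give singleton coalitions $\FCC = \{a\}$, and for such $\FCC$ the definition of $\Fnei_\FCC$ gives $\Fnei_\FCC(s) = \bigodot \{\Fnei_a(s)\} = \Fnei_a(s)$, so condition (2) specializes directly to condition (1).

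For (1) $\Rightarrow$ (2), I would first dispose of the easy cases and then attack the main case. Fix $\FCC \subseteq \FAG$ and $s \in \FST$. If $\Fsuc(s) = \emptyset$, then $\Fnei_\FCC(s) = \emptyset$, which is vacuously a partition of $\emptyset$. If $\Fsuc(s) \neq \emptyset$ and $\FCC = \emptyset$, then $\Fnei_\FCC(s) = \{\Fsuc(s)\}$, which is trivially a partition of $\Fsuc(s)$. So assume $\Fsuc(s) \neq \emptyset$ and $\FCC = \{a_1, \dots, a_n\}$ is nonempty.

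By Fact \ref{fact:single-coalition-first neighborhood models cover}, $\Fnei_\FCC(s)$ is already known to be a cover of $\Fsuc(s)$, so I only need to show pairwise disjointness of distinct elements and the absence of the empty set (the latter is automatic from the definition of $\odot$). Take any two elements $\FYY, \FYY' \in \Fnei_\FCC(s)$ that share some state $t$. By the definition of $\bigodot$, we can write $\FYY = \XX_{a_1} \cap \dots \cap \XX_{a_n}$ and $\FYY' = \XX'_{a_1} \cap \dots \cap \XX'_{a_n}$, where each $\XX_{a_i}, \XX'_{a_i} \in \Fnei_{a_i}(s)$. Since $t \in \FYY \cap \FYY'$, we have $t \in \XX_{a_i} \cap \XX'_{a_i}$ for every $i$. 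By (1), $\Fnei_{a_i}(s)$ is a partition of $\Fsuc(s)$, so the blocks $\XX_{a_i}$ and $\XX'_{a_i}$ cannot overlap unless they coincide; hence $\XX_{a_i} = \XX'_{a_i}$ for every $i$, giving $\FYY = \FYY'$. This establishes pairwise disjointness, completing the proof that $\Fnei_\FCC(s)$ is a partition of $\Fsuc(s)$.

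The entire argument is essentially a bookkeeping exercise, so I do not expect any serious obstacle; the only point that requires a little care is making sure the empty set is excluded from $\Fnei_\FCC(s)$ (which is built into the definition of $\odot$) and that the cover property is invoked from the earlier fact rather than reproved from scratch.
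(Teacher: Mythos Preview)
Your proposal is correct and follows essentially the same approach as the paper: invoke Fact~\ref{fact:single-coalition-first neighborhood models cover} for the cover property, then decompose two elements of $\Fnei_\FCC(s)$ as intersections of single-agent neighborhoods and use the partition hypothesis at each agent to force equality. The only cosmetic difference is that the paper phrases the disjointness step as a proof by contradiction (taking distinct $\FYY_1,\FYY_2$ with nonempty intersection and producing a witness $w\in\FYY_1\setminus\FYY_2$), whereas you argue directly that a common point $t$ forces $\XX_{a_i}=\XX'_{a_i}$ for every $i$; your version is marginally cleaner but the substance is identical.
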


\begin{proof}~

(1) $\Rightarrow$ (2)

Assume (1). Let $\FCC \subseteq \FAG$ and $s \in \FST$.
By Fact \ref{fact:single-coalition-first neighborhood models cover}, $\Fnei_\FCC (s)$ is a cover of $\bigcup \Fnei_\emptyset (s)$. Assume $\Fnei_\FCC (s)$ is not a partition of $\bigcup \Fnei_\emptyset (s)$. Then, there is $\FYY_1, \FYY_2 \in \Fnei_\FCC (s)$ such that $\FYY_1 \neq \FYY_2$ and $\FYY_1 \cap \FYY_2 \neq \emptyset$. Without loss of any generality, we can assume there is $w$ such that $w \in \FYY_1$ but $w \notin \FYY_2$. Let $u \in \FYY_1 \cap \FYY_2$.
It is easy to see $\bigcup \Fnei_\emptyset (s) \neq \emptyset$. Then, $\Fnei_\emptyset (s) = \{\bigcup \Fnei_\emptyset (s)\}$, which is a singleton. Note $\Fnei_\FCC (s)$ is not a singleton. Then, $\FCC \neq \emptyset$. Let $\FCC = \{a_1, \dots, a_n\}$.
Then, there is $\XX_{a_1} \in \Fnei_{a_1} (s), \dots, \XX_{a_n} \in \Fnei_{a_n} (s)$ such that $\FYY_1 = \XX_{a_1} \cap \dots \cap \XX_{a_n}$, and there is $\XX'_{a_1} \in \Fnei_{a_1} (s), \dots, \XX'_{a_n} \in \Fnei_{a_n} (s)$ such that $\FYY_2 = \XX'_{a_1} \cap \dots \cap \XX'_{a_n}$.
Note $w \notin \FYY_2$. Then $w \notin \XX'_{a_i}$ for some $a_i$. Note $w \in \XX_{a_i}$. Then $\XX_{a_i} \neq \XX'_{a_i}$. Note $u \in \XX_{a_i}$ and $u \in \XX'_{a_i}$. Then $\XX_{a_i} \cap \XX'_{a_i} \neq \emptyset$. We have a contradiction.

(2) $\Rightarrow$ (1)

(1) is a special case of (2).

\end{proof}

By Theorem \ref{theorem:three equivalent definitions clear grand-coalition-first action models}, clear grand-coalition-first action models can be equivalently defined in three different ways, which respectively concern all agents, all coalitions, and the grand coalition. It might be expected that the following condition, which concerns the grand coalition, is equivalent to the two conditions given in Theorem \ref{theorem:three equivalent definitions clear single-coalition-first action models}: \emph{for all $s \in \FST$, $\Fnei_\FAG (s)$ is a partition of $\bigcup \Fnei_\emptyset (s)$}. Actually, this is not the case. What follows is a counterexample.

\begin{example}
\label{example: AG partition not equivalent to a partition}

Assume $\FAG = \{a, b\}$. Let $\SNM = (\FST, \{\Fnei_\FCC \mid \FCC \subseteq \FAG\}, \Flab)$ be a single-coalition-first neighborhood model such that:
\begin{itemize}

\item

$\FST = \{s_0, s_1, s_2, s_3\}$;

\item

$\Fnei_\emptyset (s_0) = \{ \{s_1, s_2, s_3 \} \}$;

\item 

$\Fnei_a (s_0) = \{\{s_1,s_2\},\{s_2,s_3\}\}$ and $\Fnei_b(s_0)=\{\{s_2\},\{s_1,s_3\}\}$.

\end{itemize}

Note $\Fnei_a (s_0)$ is not a partition of $\bigcup \Fnei_\emptyset (s_0)$. However, it can be checked that $\Fnei_\FAG(s_0)=\{\{s_1\},\{s_2\},\{s_3\}\}$, which is a partition of $\bigcup \Fnei_\emptyset (s_0)$.

\end{example}

\paragraph{Remarks}

Alur, Henzinger, and Kupferman \cite{alur_alternating-time_1998} proposed models for $\FATL$ based on so-called \emph{lock-step synchronous alternating transition systems}. It can be shown that they are clear single-coalition-first neighborhood $\mathtt{SID}$-models.

\subsection{Representation of clear grand-coalition-first action models by clear single-coalition-first neighborhood models}

\begin{theorem}[]
\label{theorem:representation clear grand-coalition-first action models TO clear single-coalition-first neighborhood models}

Every clear grand-coalition-first action model is $z$-representable by a clear single-coalition-first neighborhood model.

\end{theorem}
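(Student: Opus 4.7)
The plan is to build the representing clear single-coalition-first neighborhood model as the natural one given by actual effectivity functions on singletons, and then verify that clearness of the action model forces the relevant partition condition.

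First, I would start with a clear grand-coalition-first action model $\FGAM = (\FST, \FAC, \Fout_\FAG, \Flab)$, let $\{\Fout_\FCC \mid \FCC \subseteq \FAG\}$, $\{\Faja_\FCC \mid \FCC \subseteq \FAG\}$, $\{\FAE_\FCC \mid \FCC \subseteq \FAG\}$, and $\Fsuc$ be the derived functions. By Theorem \ref{theorem:clear grand-coalition-first action models are single-coalition-first action models}, $\FGAM$ is a single-coalition-first action model, so the machinery behind Theorem \ref{theorem:representation single-coalition-first action models TO single-coalition-first neighborhood models} applies. The candidate neighborhood model is $\SNM = (\FST, \Fsuc, \{\Fnei_a \mid a \in \FAG\}, \Flab)$ where for each $a \in \FAG$ and $s \in \FST$ we set $\Fnei_a(s) = \FAE_a(s) = \{\Fout_a(s,\sigma_a) \mid \sigma_a \in \Faja_a(s)\}$.

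Second, I would verify that $\SNM$ is a legitimate single-coalition-first neighborhood model. Fact \ref{fact:domain} gives that $\{\Fout_a(s,\sigma_a) \mid \sigma_a \in \FJA_a\}$ is a general cover of $\Fsuc(s)$; removing the empty set (which is exactly what $\Faja_a(s)$ does via the nonemptiness condition) yields a cover in the sense of Definition \ref{definition:Single-coalition-first action models}. Then I would check that $\SAM$ $z$-represents $\FGAM$: on singletons this is immediate by construction, and on arbitrary coalitions one invokes the equivalence (3a) of Theorem \ref{theorem:equivalent condition} together with the derived-neighborhood clause $\Fnei_\FCC(s) = \bigodot\{\Fnei_a(s) \mid a \in \FCC\}$, exactly as in the proof of Theorem \ref{theorem:representation single-coalition-first neighborhood models TO single-coalition-first action models}. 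This step is essentially bookkeeping given the previous results.

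Third, and this is the point particular to this theorem, I would show that $\SNM$ is clear, i.e., that $\Fnei_a(s)$ is a partition of $\Fsuc(s)$ for every $a \in \FAG$ and $s \in \FST$. The cover part has already been established. For pairwise disjointness, let $\FYY_1, \FYY_2 \in \Fnei_a(s)$ be distinct, with $\FYY_i = \Fout_a(s,\sigma^i_a)$ for some $\sigma^i_a \in \Faja_a(s)$. Since $\FYY_1 \neq \FYY_2$, we must have $\sigma^1_a \neq \sigma^2_a$; and because $\FGAM$ is clear, Theorem \ref{theorem:three equivalent definitions clear grand-coalition-first action models} (the equivalence between clearness at the grand-coalition level and at the single-agent level) gives $\Fout_a(s,\sigma^1_a) \cap \Fout_a(s,\sigma^2_a) = \emptyset$, i.e., $\FYY_1 \cap \FYY_2 = \emptyset$. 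Together with nonemptiness and covering, $\Fnei_a(s)$ is a partition of $\Fsuc(s)$, so $\SNM$ is clear.

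The only genuine content is the transfer of the clearness property from the action side to the neighborhood side, and Theorem \ref{theorem:three equivalent definitions clear grand-coalition-first action models} does most of that work for us; so I expect no serious obstacle here. The one thing to be slightly careful about is the distinction between $\FJA_a$ and $\Faja_a(s)$ when invoking Fact \ref{fact:domain}, which is why I would state $\Fnei_a(s)$ via $\Faja_a(s)$ (to guarantee we end up with a cover rather than merely a general cover).
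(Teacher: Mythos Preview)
Your proposal is correct and follows exactly the route the paper takes: invoke Theorem~\ref{theorem:clear grand-coalition-first action models are single-coalition-first action models} to reduce to the single-coalition-first setting, take $\Fnei_a(s)=\FAE_a(s)$, and then observe that clearness at the single-agent level (via Theorem~\ref{theorem:three equivalent definitions clear grand-coalition-first action models}) makes each $\Fnei_a(s)$ a partition. The paper's own proof is a two-line stub (``it is easy to show''), and what you wrote is precisely the natural way to unpack it; the only cosmetic slips are a stray ``$\SAM$'' where you mean $\SNM$, and that the relevant clause of Theorem~\ref{theorem:equivalent condition} for the $z$-representation step is (1b) (the intersection formula) rather than (3a).
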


\begin{proof}~

Let $\FGAM$ be a clear grand-coalition-first action model. By Theorem \ref{theorem:clear grand-coalition-first action models are single-coalition-first action models}, $\FGAM$ is a single-coalition-first action model. It is easy to show that $\FGAM$ is $z$-representable by a clear single-coalition-first neighborhood model.

\end{proof}

\begin{theorem}[]
\label{theorem:representation clear single-coalition-first neighborhood models TO clear single-coalition-first action models}

Every clear single-coalition-first neighborhood model $z$-represents a clear grand-coalition-first action model.

\end{theorem}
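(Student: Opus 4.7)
The plan is to reuse the construction from the proof of Theorem \ref{theorem:representation single-coalition-first neighborhood models TO single-coalition-first action models}, since every clear single-coalition-first neighborhood model is in particular a single-coalition-first neighborhood model. Concretely, starting from a clear single-coalition-first neighborhood model $\SNM = (\FST, \Fsuc, \{\Fnei_a \mid a \in \FAG\}, \Flab)$, I would form the single-coalition-first action model $\SAM$ with actions $\FAC = \{\alpha_{a\text{-}s\text{-}\XX} \mid a \in \FAG,\, s \in \FST,\, \XX \in \Fnei_a(s)\}$ and $\Fout_a(s, \alpha_{x\text{-}u\text{-}\XX}) = \XX$ if $x = a$ and $u = s$, and $\emptyset$ otherwise. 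By the cited theorem, $\SAM$ is $z$-represented by $\SNM$, and by Theorem \ref{theorem:implication}, $\SAM$ can be viewed as a grand-coalition-first action model via its induced $\Fout_\FAG$ given by the intersection formula in Definition \ref{definition:Single-coalition-first action models}.

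The only remaining task is to verify that this grand-coalition-first action model is clear, i.e., that distinct action profiles have disjoint $\FAG$-outcomes. Take $\sigma_\FAG \neq \sigma'_\FAG$ at a state $s$; if either $\Fout_\FAG(s, \sigma_\FAG)$ or $\Fout_\FAG(s, \sigma'_\FAG)$ is empty, the conclusion is immediate. Otherwise, the construction forces each restriction $\sigma_\FAG|_b$ to have the form $\alpha_{b\text{-}s\text{-}\XX_b}$ with $\XX_b \in \Fnei_b(s)$, and similarly for $\sigma'_\FAG|_b$ with some $\XX'_b$. Pick an agent $a$ where the two profiles differ; then $\XX_a \neq \XX'_a$.

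The main step is to invoke the clearness of $\SNM$: since $\Fnei_a(s)$ is a partition of $\Fsuc(s)$, any two distinct elements of $\Fnei_a(s)$ are disjoint, so $\XX_a \cap \XX'_a = \emptyset$. Because $\Fout_\FAG(s,\sigma_\FAG) = \bigcap_{b \in \FAG} \XX_b \subseteq \XX_a$ and likewise $\Fout_\FAG(s,\sigma'_\FAG) \subseteq \XX'_a$, the two outcome sets are disjoint, as required. I do not expect any serious obstacle here; the construction, the derived grand-coalition outcome function, and the partition property combine cleanly, and the $z$-representation carries over unchanged from the earlier theorem.
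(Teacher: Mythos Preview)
Your proposal is correct and follows essentially the same approach as the paper: reuse the construction from Theorem~\ref{theorem:representation single-coalition-first neighborhood models TO single-coalition-first action models}, invoke Theorem~\ref{theorem:implication} to view the resulting $\SAM$ as a grand-coalition-first action model, and verify clearness. The paper leaves the clearness check as ``easy to check,'' while you spell it out via the partition property of $\Fnei_a(s)$; your argument for that step is sound.
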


\begin{proof}~

Let $\SNM = (\FST, \{\Fnei_\FCC \mid \FCC \subseteq \FAG\}, \Flab)$ be a clear single-coalition-first neighborhood model.
Define a single-coalition-first action model $\SAM = (\FST, \FAC, \{\Fout_\FCC \mid \FCC \subseteq \FAG\}, \{\Faja_\FCC \mid \FCC \subseteq \FAG\}, \Flab)$ as in the proof for Theorem \ref{theorem:representation single-coalition-first neighborhood models TO single-coalition-first action models}. As shown in that proof, $\SNM$ $z$-represents $\SAM$.
It is easy to check that $\SAM$ is clear. By Theorem \ref{theorem:implication}, it is a grand-coalition-first action model.

\end{proof}

\begin{theorem}[Representation of the class of clear grand-coalition-first action $\FXX$-models by the class of clear single-coalition-first neighborhood $\FXX$-models]
\label{theorem:Representation of clear grand-coalition-first action models by clear single-coalition-first neighborhood models}

For every $\FXX \in \FES$, the class of clear grand-coalition-first action $\FXX$-models is $z$-representable by the class of clear single-coalition-first neighborhood $\FXX$-models.

\end{theorem}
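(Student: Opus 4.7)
The plan is to assemble this theorem in direct parallel with the earlier Theorem \ref{theorem:Representation of single-coalition-first action models by single-coalition-first neighborhood models}, by combining the two immediately preceding representation theorems (Theorems \ref{theorem:representation clear grand-coalition-first action models TO clear single-coalition-first neighborhood models} and \ref{theorem:representation clear single-coalition-first neighborhood models TO clear single-coalition-first action models}) with the property-preservation result Theorem \ref{theorem:X iff X}. The key observation enabling this is that, by Theorem \ref{theorem:clear grand-coalition-first action models are single-coalition-first action models}, every clear grand-coalition-first action model is in particular a single-coalition-first action model, and that the notions of seriality, independence, and determinism agree for the two levels (they are both phrased in terms of the derived availability/outcome functions of Definition \ref{definition:Seriality, independence, and determinism of grand-coalition-first action models}), so the $\FXX$-model property is inherited unchanged.

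For the forward inclusion, I would fix $\FXX \in \FES$ and a clear grand-coalition-first action $\FXX$-model $\FGAM$. By Theorem \ref{theorem:representation clear grand-coalition-first action models TO clear single-coalition-first neighborhood models} there is a clear single-coalition-first neighborhood model $\SNM$ that $z$-represents $\FGAM$. Regarding $\FGAM$ as a single-coalition-first action model via Theorem \ref{theorem:clear grand-coalition-first action models are single-coalition-first action models} and applying Theorem \ref{theorem:X iff X}, we conclude that $\SNM$ is also an $\FXX$-model. For the reverse inclusion, given a clear single-coalition-first neighborhood $\FXX$-model $\SNM$, Theorem \ref{theorem:representation clear single-coalition-first neighborhood models TO clear single-coalition-first action models} supplies a clear grand-coalition-first action model $\FGAM$ that $\SNM$ $z$-represents; applying Theorem \ref{theorem:X iff X} again (to the underlying single-coalition-first pair) shows $\FGAM$ is an $\FXX$-model.

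There is no substantive obstacle here; the real work was done in the two representation theorems and in the $\FXX$-preservation Theorem \ref{theorem:X iff X}. The only point that genuinely needs a sentence of justification is that the $\FXX$ condition for a clear grand-coalition-first action model coincides with the $\FXX$ condition it inherits qua single-coalition-first action model. This is immediate because Definition \ref{definition:Seriality, independence, and determinism of grand-coalition-first action models} refers only to the availability function $\Faja_\FCC$ and outcome function $\Fout_\FAG$, which are the same objects regardless of whether one views $\FGAM$ as a grand-coalition-first or as a single-coalition-first action model, and clearness imposes no additional constraints on these three properties. Hence the theorem follows as a one-paragraph corollary, exactly analogous to the derivation of Theorem \ref{theorem:Representation of single-coalition-first action models by single-coalition-first neighborhood models}.
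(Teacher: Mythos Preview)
Your proposal is correct and follows essentially the same approach as the paper's own proof: combine Theorems \ref{theorem:representation clear grand-coalition-first action models TO clear single-coalition-first neighborhood models} and \ref{theorem:representation clear single-coalition-first neighborhood models TO clear single-coalition-first action models} with Theorem \ref{theorem:X iff X}, using Theorem \ref{theorem:clear grand-coalition-first action models are single-coalition-first action models} to regard the clear grand-coalition-first action model as a single-coalition-first action model so that Theorem \ref{theorem:X iff X} applies. Your additional remark about why the $\FXX$ conditions coincide under the two viewpoints is a welcome clarification but does not change the structure of the argument.
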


\begin{proof}~

Let $\FXX \in \FES$.
Let $\GAM$ be a clear grand-coalition-first action $\FXX$-model.
By Theorem \ref{theorem:representation clear grand-coalition-first action models TO clear single-coalition-first neighborhood models}, $\GAM$ is $z$-representable by a clear single-coalition-first neighborhood model $\SNM$.
By Theorem \ref{theorem:clear grand-coalition-first action models are single-coalition-first action models}, $\FGAM$ is a single-coalition-first action model.
By Theorem \ref{theorem:X iff X}, $\SNM$ is a clear single-coalition-first neighborhood $\FXX$-model.
Let $\SNM$ be a clear single-coalition-first neighborhood $\FXX$-model. By Theorem \ref{theorem:representation clear single-coalition-first neighborhood models TO clear single-coalition-first action models}, $\SNM$ $z$-represents a clear grand-coalition-first action model $\GAM$.
By Theorem \ref{theorem:clear grand-coalition-first action models are single-coalition-first action models}, $\FGAM$ is a single-coalition-first action model.
By Theorem \ref{theorem:X iff X}, $\SAM$ is a clear grand-coalition-first action $\FXX$-model.

\end{proof}

\section{Tree-like grand-coalition-first action models and tree-like single-coalition-first neighborhood models}
\label{section:Tree-like grand-coalition-first action models and tree-like single-coalition-first neighborhood models}

\newcommand{\Fhist}{\theta}

\subsection{Tree-like grand-coalition-first action models}

\begin{definition}[Histories in grand-coalition-first action models]
\label{definition:??}

Let $\FGAM = (\FST, \FAC, \{\Fout_\FCC \mid \FCC \subseteq \FAG\}, \{\Faja_\FCC \mid \FCC \subseteq \FAG\}, \Flab)$ be a grand-coalition-first action model.

For every $\FCC \subseteq \FAG$, a finite sequence $\Fhist_\FCC = (s_0, \sigma_\FCC^1, s_1, \dots, \sigma_\FCC^n, s_n)$, where every $s_i$ is a state and every $\sigma_\FCC^i$ is a joint action of $\FCC$, is called a \Fdefs{$\FCC$-history} from $s_0$ to $s_n$ if for every $i$ such that $0 \leq i < n$, $\sigma_\FCC^{i+1} \in \Fav_\FCC (s_i)$ and $s_{i+1} \in \Fout_\FCC (s_i, \sigma_\FCC^{i+1})$.

Specially, for every $\FCC \subseteq \FAG$ and $s \in \FST$, $s$ is called a \Fdefs{$\FCC$-history} from $s$ to $s$.

\end{definition}

\begin{definition}[Tree-like grand-coalition-first action models]
\label{definition:??}

Let $\FGAM = (\FST, \FAC, \{\Fout_\FCC \mid \FCC \subseteq \FAG\}, \{\Faja_\FCC \mid \FCC \subseteq \FAG\}, \Flab)$ be a grand-coalition-first action model.

We say that $\FGAM$ is a \Fdefs{tree-like model} if there is $r \in \FST$, called a \Fdefs{root}, such that for every $s \in \FST$, there is a unique $\FAG$-history from $r$ to $s$.

\end{definition}

Note that every tree-like model has a unique root.

The following result gives two alternative definitions of tree-like grand-coalition-first action models.

\begin{theorem}
\label{theorem:three equivalent definitions tree-like action models}

Let $\FGAM = (\FST, \FAC, \{\Fout_\FCC \mid \FCC \subseteq \FAG\}, \{\Faja_\FCC \mid \FCC \subseteq \FAG\}, \Flab)$ be a grand-coalition-first action model.

The following three conditions are equivalent:
\begin{enumerate}[label=(\arabic*),leftmargin=3.33em]

\item 

there is $r \in \FST$ such that for every $a \in \FAG$ and $s \in \FST$, there is a unique $a$-history from $r$ to $s$;

\item 

there is $r \in \FST$ such that for every $\FCC \subseteq \FAG$ and $s \in \FST$, there is a unique $\FCC$-history from $r$ to $s$;

\item

there is $r \in \FST$ such that for every $s \in \FST$, there is a unique $\FAG$-history from $r$ to $s$.

\end{enumerate}

\end{theorem}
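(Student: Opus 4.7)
The plan is to prove the circular chain $(1) \Rightarrow (3) \Rightarrow (2) \Rightarrow (1)$. The step $(2) \Rightarrow (1)$ is immediate by instantiating $\FCC = \{a\}$ for each $a \in \FAG$, so the substantive content lies entirely in $(1) \Rightarrow (3)$ and $(3) \Rightarrow (2)$.

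The key technical device is the correspondence between $\FAG$-histories and $\FCC$-histories under restriction and lifting. From $\Fout_\FCC(s, \sigma_\FCC) = \bigcup \{\Fout_\FAG(s, \sigma_\FAG) \mid \sigma_\FCC \subseteq \sigma_\FAG\}$ I extract two one-step facts: \emph{restriction}, that any $\FAG$-step from $s$ to $t$ via $\sigma_\FAG$ induces a $\FCC$-step from $s$ to $t$ via $\sigma_\FAG|_\FCC$; and \emph{lifting}, that any $\FCC$-step from $s$ to $t$ via $\sigma_\FCC$ extends to an $\FAG$-step from $s$ to $t$ via some $\sigma_\FAG \supseteq \sigma_\FCC$. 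Iterating step-by-step, every $\FAG$-history from $r$ to $s$ restricts to a $\FCC$-history from $r$ to $s$, every $\FCC$-history from $r$ to $s$ lifts to at least one $\FAG$-history from $r$ to $s$, and restricting a lift back to $\FCC$ returns the original $\FCC$-history.

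For $(1) \Rightarrow (3)$, take $r$ as in (1). Existence of an $\FAG$-history from $r$ to $s$ follows by picking any $a \in \FAG$ (nonempty) and lifting the unique $a$-history. Uniqueness follows because any two $\FAG$-histories from $r$ to $s$ restrict, for each $a \in \FAG$, to $a$-histories from $r$ to $s$, which must agree by (1); since an $\FAG$-joint action is determined by its single-agent restrictions, the two $\FAG$-histories coincide. For $(3) \Rightarrow (2)$, take $r$ as in (3). Existence of a $\FCC$-history from $r$ to $s$ is given by restricting the unique $\FAG$-history. Uniqueness follows because any two $\FCC$-histories from $r$ to $s$ each lift to an $\FAG$-history from $r$ to $s$; by (3) the two lifts coincide, and their common $\FCC$-restriction equals both original $\FCC$-histories.

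The main subtlety I anticipate is the case $\FCC = \emptyset$ in $(3) \Rightarrow (2)$: every $\sigma^i_\emptyset$ is trivially the empty function, so uniqueness of the joint-action components is automatic and the entire question reduces to pinning down the state sequence. Here the lifting step is at its most non-canonical, since one freely chooses some $\sigma^i_\FAG$ with $s_i \in \Fout_\FAG(s_{i-1}, \sigma^i_\FAG)$ at each step; it is precisely the uniqueness of the $\FAG$-history from $r$ to $s$ guaranteed by (3) that forces the state sequence of the $\emptyset$-history to be uniquely determined. Outside this case, the argument is routine bookkeeping once the restriction-lifting correspondence is in place.
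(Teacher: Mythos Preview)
Your proposal is correct and follows essentially the same route as the paper: the same cycle $(1)\Rightarrow(3)\Rightarrow(2)\Rightarrow(1)$, driven by the restriction/lifting correspondence between $\FAG$-histories and $\FCC$-histories that you isolate. The only cosmetic difference is that for uniqueness in $(3)\Rightarrow(2)$ the paper fixes one $\FCC$-history as the restriction of the unique $\FAG$-history and lifts any putative competitor, whereas you lift both and invoke that restriction is a left inverse to lifting; these are the same argument.
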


\begin{proof}~

(1) $\Rightarrow$ (3)

Assume (1). Let $r \in \FST$ such that for every $a \in \FAG$ and $s \in \FST$, there is a unique $a$-history from $r$ to $s$. Let $s \in \FST$. We want to show that there is a unique $\FAG$-history from $r$ to $s$. Let $a \in \FAG$.

Assume $s=r$.
Note $r$ is an $\FAG$-history from $r$ to $r$.

Assume there is an $\FAG$-history $\theta_\FAG$ from $r$ to $r$ which is different from $r$. It is impossible that $\Fhist_\FAG = x$ for some $x \in \FST$.
Assume $\Fhist_\FAG = (s_0, \sigma_\FAG^1, s_1, \dots, \sigma_\FAG^n, s_n)$. It is easy to check that $\Fhist_a = (s_0, \sigma_\FAG^1|_a, s_1, \dots, \sigma_\FAG^n|_a, s_n)$ is an $a$-history from $r$ to $r$. Clearly, $\Fhist_a$ is different from $r$. Note $r$ is an $a$-history from $r$ to $r$. We have a contradiction.

Assume $s\neq r$.

Let $\Fhist_a$ be the unique $a$-history from $r$ to $s$. Note that it is impossible that $\Fhist_a = x$ for some $x \in \FST$.
Assume $\Fhist_a = (s_0, \sigma_a^1, s_1, \dots, \sigma_a^n, s_n)$.

For every $i$ such that $1 \leq i \leq n$, let $\sigma^i_\FAG$ be an action profile such that $\sigma^i_a \subseteq \sigma^i_\FAG$ and $s_i \in \Fout_\FAG (s_{i-1},\sigma^i_\FAG)$. Then, $(s_0, \sigma^1_\FAG, s_1, \dots, \sigma_\FAG^n, s_n)$ is an $\FAG$-history from $r$ to $s$.

Assume there is a different $\FAG$-history $\Fhist'_\FAG$ from $r$ to $s$. Note that it is impossible that $\Fhist'_\FAG = x$ for some $x \in \FST$. Let $\theta'_\FAG = (t_0, \lambda^1_\FAG, t_1, \dots, \lambda_\FAG^m, t_m)$.

We want to show $(s_0, \sigma^1_\FAG, s_1, \dots, \sigma_\FAG^n, s_n) = (t_0, \lambda^1_\FAG, t_1, \dots, \lambda_\FAG^m, t_m)$.

Note both $(s_0, \sigma^1_\FAG|_a, s_1, \dots, \sigma_\FAG^n|_a, s_n)$ and $(t_0, \lambda^1_\FAG|_a, t_1, \dots, \lambda_\FAG^m|_a, t_m)$ are $a$-histories from $r$ to $s$. Then $n = m$ and $s_i = t_i$ for all $i$.

Assume $\sigma^i_\FAG \neq \lambda^i_\FAG$ for some $1 \leq i \leq n$. Then, there is $b \in \FAG$ such that $\sigma^i_\FAG|_b \neq \lambda^i_\FAG|_b$.
Note both $(s_0, \sigma^1_\FAG|_b, s_1, \dots, \sigma_\FAG^n|_b, s_n)$ and $(t_0, \lambda^1_\FAG|_b, t_1, \dots, \lambda_\FAG^m|_b, t_m)$ are $b$-histories from $r$ to $s$, which are different. We have a contradiction.

(3) $\Rightarrow$ (2)

Assume (3). Let $r \in \FST$ such that for every $s \in \FST$, there is a unique $\FAG$-history from $r$ to $s$.
Let $s \in \FST$ and $\FCC \subseteq \FAG$. We want to show that there is a unique $\FCC$-history from $r$ to $s$.

Assume $s=r$. 
Note $r$ is a $\FCC$-history from $r$ to $r$.

Assume there is a $\FCC$-history $\Fhist_\FCC$ from $r$ to $r$ which is different from $r$. Note that it is impossible that $\Fhist_\FCC = x$ for some $x \in \FST$. Assume $\Fhist_\FCC = (s_0, \sigma_\FCC^1, s_1, \dots, \sigma_\FCC^n,s_n)$.

For every $1 \leq i \leq n$, let $\lambda^i_\FAG$ be an action profile such that $\sigma^i_\FCC \subseteq \lambda^i_\FAG$ and $s_i \in \Fout_\FAG (s_{i-1},\lambda^i_\FAG)$. Then, $(s_0, \lambda^1_\FAG, s_1, \dots, \lambda_\FAG^n, s_n)$ is an $\FAG$-history from $r$ to $r$, which is different from $r$.
Note $r$ is an $\FAG$-history from $r$ to $r$. We have a contradiction.

Assume $s\neq r$.

Let $\theta_\FAG$ be an $\FAG$-history from $r$ to $s$. Note that it is impossible that $\theta_\FAG = x$ for some $x \in \FST$.
Let $\theta_\FAG = (s_0, \sigma_\FAG^1, s_1, \dots, \sigma_\FAG^n, s_n)$.
It is easy to see $(s_0, \sigma_\FAG^1|_\FCC, s_1, \dots, \sigma_\FAG^n|_\FCC, s_n)$ is a $\FCC$-history from $r$ to $s$.

Assume there is a different $\FCC$-history $\theta'_\FCC$ from $r$ to $s$. Again, it is impossible that $\theta_\FCC' = x$ for some $x \in \FST$. Let $\theta'_\FCC = (t_0, \lambda^1_\FCC, t_1, \dots, \lambda^m_\FCC, t_m)$.

For every $1 \leq i \leq m$, let $\lambda^i_\FAG$ be an action profile such that $\lambda^i_\FCC \subseteq \lambda^i_\FAG$ and $t_i \in \Fout_\FAG (t_{i-1},\lambda^i_\FAG)$. Then, $(t_0, \lambda^1_\FAG, t_1, \dots, \lambda_\FAG^m, t_m)$ is an $\FAG$-history from $r$ to $s$.

It is easy to check that $(t_0, \lambda^1_\FAG, t_1, \dots, \lambda_\FAG^m, t_m)$ is different from $(s_0, \sigma_\FAG^1, s_1, \dots, \sigma_\FAG^n, s_n)$. We have a contradiction.

(2) $\Rightarrow$ (1)

(1) is a special case of (2).

\end{proof}

\begin{theorem}
\label{theorem:tree-like implies clear}

Every tree-like grand-coalition-first action model is a clear grand-coalition-first action model, but not vice versa.

\end{theorem}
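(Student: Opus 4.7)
The plan is to prove the two parts separately. For the forward direction, I would argue by contradiction. Suppose $\FGAM = (\FST, \FAC, \Fout_\FAG, \Flab)$ is tree-like with root $r$, but fails to be clear: there exist $s \in \FST$, distinct action profiles $\sigma_\FAG \neq \sigma'_\FAG$, and a state $t \in \Fout_\FAG(s, \sigma_\FAG) \cap \Fout_\FAG(s, \sigma'_\FAG)$. By the tree-like property, there is a unique $\FAG$-history $\theta$ from $r$ to $s$. I would then extend $\theta$ in two different ways: by appending $(\sigma_\FAG, t)$ and by appending $(\sigma'_\FAG, t)$. Each yields a legitimate $\FAG$-history from $r$ to $t$, since both $\sigma_\FAG$ and $\sigma'_\FAG$ are in $\Faja_\FAG(s)$ (as their outcome sets contain $t$ and are thus nonempty), and $t$ lies in both outcome sets. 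These two histories differ in their final action component, contradicting the uniqueness of the $\FAG$-history from $r$ to $t$. A minor bookkeeping point is the case $s = r$, where $\theta$ is just the one-state sequence $r$; the extensions are still the one-step histories $(r, \sigma_\FAG, t)$ and $(r, \sigma'_\FAG, t)$, which remain distinct.

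For the non-converse direction, I would exhibit a small counterexample: take $\FAG = \{a\}$, $\FAC = \{\alpha\}$, $\FST = \{s\}$, $\Fout_\FAG(s, \alpha) = \{s\}$, and $\Flab(s) = \emptyset$. Clearness holds trivially because there is only one action profile at $s$. However, the model is not tree-like: for the only possible root $r = s$, both the length-$0$ history $s$ and the length-$1$ history $(s, \alpha, s)$ are $\FAG$-histories from $r$ to $s$, so uniqueness fails. Alternatively, any finite model with a cycle reachable from the candidate root will work.

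I do not expect any serious obstacle here. The forward direction is a clean contradiction argument driven entirely by the definition of tree-likeness and the observation that the final joint action of an $\FAG$-history is recoverable from the history itself, so distinct final joint actions automatically produce distinct histories. The counterexample for the non-converse direction is as simple as a single self-looping state.
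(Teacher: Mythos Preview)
Your proposal is correct and follows essentially the same approach as the paper: the forward direction is the same contradiction argument---extend the history from $r$ to $s$ by the two distinct action profiles to obtain two distinct $\FAG$-histories from $r$ to $t$---and for the non-converse the paper likewise points to a clear model containing a loop, of which your single self-looping state is a concrete instance.
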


\begin{proof}~

Let $\FGAM = (\FST, \FAC, \{\Fout_\FCC \mid \FCC \subseteq \FAG\}, \{\Faja_\FCC \mid \FCC \subseteq \FAG\}, \Flab)$ be a tree-like grand-coalition-first action model and $r$ be its root. Let $s \in \FST$ and $\sigma_\FAG, \sigma'_\FAG \in \FJA_\FAG$ such that $\sigma_\FAG \neq \sigma'_\FAG$. It suffices to show $\Fout_\FAG (s,\sigma_\FAG) \cap \Fout_\FAG (s,\sigma'_\FAG) = \emptyset$.
Assume $\Fout_\FAG (s,\sigma_\FAG) \cap \Fout_\FAG (s,\sigma'_\FAG) \neq \emptyset$. Let $t \in \Fout_\FAG (s,\sigma_\FAG) \cap \Fout_\FAG (s,\sigma'_\FAG)$.
Let $\theta_\FAG$ be an $\FAG$-history from $r$ to $s$. It is easy to check that both $(\theta_\FAG, \sigma_\FAG, t)$ and $(\theta_\FAG, \sigma'_\FAG, t)$ are $\FAG$-histories from $r$ to $t$, which are different. We have a contradiction.

It is easy to find a clear grand-coalition-first action model with a loop, which is not tree-like.

\end{proof}

\paragraph{Remarks}

\r{A}gotnes, Goranko, and Jamroga \cite{agotnes_alternating-time_2007} defined \emph{tree-like concurrent game models}, which are different from tree-like grand-coalition-first action $\mathtt{SID}$-models defined in this section. For the former, it is only required that for every state, there is a unique \emph{computation}, which is a sequence of states, from the root to the state.

\newcommand{\ZZ}{\mathrm{Z}}
\newcommand{\UU}{\mathrm{U}}
\newcommand{\VV}{\mathrm{V}}

\subsection{Tree-like single-coalition-first neighborhood models}

\begin{definition}[Histories in single-coalition-first neighborhood models]
\label{definition:??}

Let $\SNM = (\FST, \{\Fnei_\FCC \mid \FCC \subseteq \FAG\}, \Flab)$ be a single-coalition-first neighborhood model.

For every $\FCC \subseteq \FAG$, a finite sequence $\Fhist_\FCC = (s_0, \FYY_1, s_1, \dots, \FYY_n, s_n)$, where every $s_i$ is a state and every $\FYY_i$ is a set of states, is called a \Fdefs{$\FCC$-history} from $s_0$ to $s_n$ if for every $i$ such that $0 \leq i < n$, $\FYY_{i+1} \in \Fnei_\FCC (s_i)$ and $s_{i+1} \in \FYY_{i+1}$.

Specifically, for every $\FCC \subseteq \FAG$ and $s \in \FST$, $s$ is called a \Fdefs{$\FCC$-history} from $s$ to $s$.

\end{definition}

\begin{definition}[Tree-like single-coalition-first neighborhood models]
\label{definition:??}

Let $\SNM = (\FST, \{\Fnei_\FCC \mid \FCC \subseteq \FAG\}, \Flab)$ be a single-coalition-first neighborhood model.

We say that $\SNM$ is a \Fdefs{tree-like model} if there is $r \in \FST$, called a \Fdefs{root}, such that for every $a \in \FAG$ and $s \in \FST$, there is a unique $a$-history from $r$ to $s$.

\end{definition}

The following result is easy to verify, and we skip its proof.

\begin{fact}

Every tree-like single-coalition-first neighborhood model is a clear single-coalition-first neighborhood model, but not vice versa.

\end{fact}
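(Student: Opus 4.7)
The plan is to prove the two halves separately. The forward direction, tree-like implies clear, will follow by a quick contradiction argument that extends a unique $a$-history to the relevant state in two different ways. The converse failure will be witnessed by a tiny explicit counterexample.

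For the forward direction, I will suppose $\SNM$ is tree-like with root $r$ but not clear, and derive a contradiction. Because $\Fnei_a(s)$ is already a cover of $\Fsuc(s)$ by the very definition of single-coalition-first neighborhood models, the partition property can fail only through pairwise non-disjointness: there must exist $a \in \FAG$, $s \in \FST$, and distinct $\FYY_1, \FYY_2 \in \Fnei_a(s)$ sharing some element $t \in \FYY_1 \cap \FYY_2$. Taking the unique $a$-history $\theta$ from $r$ to $s$ guaranteed by tree-likeness, and appending $(\FYY_1, t)$ and $(\FYY_2, t)$ respectively, I obtain two $a$-histories from $r$ to $t$ that differ in their final neighborhood slot since $\FYY_1 \neq \FYY_2$. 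This contradicts the uniqueness clause of tree-likeness. The degenerate case $s = r$ is handled identically, with $\theta$ being the trivial one-state history $s$.

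For the converse, I will exhibit a clear single-coalition-first neighborhood model that is not tree-like. The minimal example I have in mind consists of a single state $s$ with $\Fsuc(s) = \{s\}$ and $\Fnei_a(s) = \{\{s\}\}$ for every $a \in \FAG$, together with any labeling. Each $\Fnei_a(s)$ is trivially a partition of $\Fsuc(s)$, so the model is clear. Yet the only possible root is $s$, and then both the trivial history $s$ and the nontrivial $(s, \{s\}, s)$ are $a$-histories from $s$ to $s$, so uniqueness fails. Any clear model containing a reachable cycle would work equally well.

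I do not anticipate any real obstacle. The one point that deserves explicit mention is that the cover condition built into the definition of single-coalition-first neighborhood models excludes $\emptyset$ from $\Fnei_a(s)$, so the two neighborhoods in the forward-direction argument are genuine nonempty sets, and the element $t \in \FYY_1 \cap \FYY_2$ used to extend $\theta$ actually exists and makes the extended sequences legitimate histories.
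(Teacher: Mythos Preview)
Your argument is correct in both directions. The paper itself omits the proof entirely (``easy to verify and we skip its proof''), so there is nothing to compare against; your contradiction via two extended $a$-histories and your one-state looping counterexample are exactly the kind of routine verification the authors had in mind.
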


The following result offers an alternative definition of tree-like single-coalition-first neighborhood models.

\begin{theorem}
\label{theorem:two equivalent definitions tree-like single-coalition-first neighborhood models}

Let $\SNM = (\FST, \{\Fnei_\FCC \mid \FCC \subseteq \FAG\}, \Flab)$ be a single-coalition-first neighborhood model.

The following two conditions are equivalent:
\begin{enumerate}[label=(\arabic*),leftmargin=3.33em]

\item 

there is $r \in \FST$ such that for every $a \in \FAG$ and $s \in \FST$, there is a unique $a$-history from $r$ to $s$;

\item 

there is $r \in \FST$ such that for every $\FCC \subseteq \FAG$ and $s \in \FST$, there is a unique $\FCC$-history from $r$ to $s$.

\end{enumerate}

\end{theorem}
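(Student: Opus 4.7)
Direction (2) $\Rightarrow$ (1) is immediate, since (1) is the special case of (2) where $\FCC$ ranges over singletons $\{a\}$ with $a \in \FAG$; so my plan is to focus on (1) $\Rightarrow$ (2). The argument will mirror the structure of the proof of Theorem \ref{theorem:three equivalent definitions tree-like action models}, with ``restriction of joint actions'' replaced by ``decomposition of a $\FCC$-neighborhood as an intersection of single-agent neighborhoods'' coming from $\Fnei_\FCC(s) = \bigodot \{\Fnei_a(s) \mid a \in \FCC\}$.

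Assuming (1) with root $r$, and fixing $\FCC \subseteq \FAG$ and $s \in \FST$, I will first establish existence of a $\FCC$-history from $r$ to $s$. If $s = r$ the length-zero history $s$ itself works; otherwise I pick any $a \in \FAG$ (possible since $\FAG$ is nonempty), take the unique $a$-history $(r, \XX_1, s_1, \dots, \XX_n, s)$ promised by (1), and note $s_{i+1} \in \XX_{i+1} \subseteq \Fsuc(s_i)$. Since $\Fnei_\FCC(s_i)$ covers $\Fsuc(s_i)$ by Fact \ref{fact:single-coalition-first neighborhood models cover}, I can select some $\UU_{i+1} \in \Fnei_\FCC(s_i)$ containing $s_{i+1}$, yielding a $\FCC$-history $(r, \UU_1, s_1, \dots, \UU_n, s)$.

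For uniqueness I will take two $\FCC$-histories $\theta = (s_0, \UU_1, s_1, \dots, \UU_n, s_n)$ and $\theta' = (t_0, \VV_1, t_1, \dots, \VV_m, t_m)$ from $r$ to $s$ and show they agree. When $\FCC \neq \emptyset$, the definition $\Fnei_\FCC(s_{i-1}) = \bigodot \{\Fnei_a(s_{i-1}) \mid a \in \FCC\}$ lets me fix, for each $i$, a decomposition $\UU_i = \bigcap_{a \in \FCC} \XX^a_i$ with $\XX^a_i \in \Fnei_a(s_{i-1})$, and analogously $\VV_j = \bigcap_{a \in \FCC} \FYY^a_j$ with $\FYY^a_j \in \Fnei_a(t_{j-1})$. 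For each individual $a \in \FCC$, the sequence $(s_0, \XX^a_1, s_1, \dots, \XX^a_n, s_n)$ is an $a$-history from $r$ to $s$ (the containment $s_i \in \XX^a_i$ follows from $s_i \in \UU_i \subseteq \XX^a_i$), and likewise for $\theta'$; by (1) the two coincide, so $n = m$, $s_i = t_i$, and $\XX^a_i = \FYY^a_i$ for every $a \in \FCC$ and every $i$, whence $\UU_i = \bigcap_a \XX^a_i = \bigcap_a \FYY^a_i = \VV_i$. When $\FCC = \emptyset$, each entry $\UU_i$ is forced to equal $\Fsuc(s_{i-1})$ and each $\VV_j$ to equal $\Fsuc(t_{j-1})$, so the history is determined by its state sequence; I will match the state sequences by converting each $\emptyset$-history into an $a$-history (for some fixed $a \in \FAG$) exactly as in the existence step, and invoking (1) once more.

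The principal subtlety will be that the decomposition $\UU_i = \bigcap_{a \in \FCC} \XX^a_i$ is generally not unique — different families $\{\XX^a_i\}_{a \in \FCC}$ may yield the same intersection. I will navigate this by fixing any one decomposition on each side; uniqueness of the $a$-history from $r$ to $s$, applied agent-by-agent, then forces the chosen families to coincide, and reintersecting recovers $\UU_i = \VV_i$ regardless of which decomposition was initially selected.
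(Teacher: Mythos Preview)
Your proposal is correct. The overall architecture matches the paper's (trivial direction $(2)\Rightarrow(1)$, then existence and uniqueness for $(1)\Rightarrow(2)$, with separate handling of $\FCC=\emptyset$ and $\FCC\neq\emptyset$), but your uniqueness argument for nonempty $\FCC$ is genuinely different. The paper fixes a single agent $a\in\FCC$, coarsens each $\FCC$-history to an $a$-history (by picking, for each $\UU_i\in\Fnei_\FCC(s_{i-1})$, some $\VV_i\in\Fnei_a(s_{i-1})$ with $\UU_i\subseteq\VV_i$), uses uniqueness of $a$-histories to align the state sequences, and then appeals to the earlier Fact that tree-like models are \emph{clear} to conclude that two distinct elements of $\Fnei_\FCC(s_{i-1})$ would be disjoint, forcing $\UU_i=\ZZ_i$. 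You instead decompose each $\UU_i$ as $\bigcap_{a\in\FCC}\XX^a_i$ and run the uniqueness-of-$a$-histories argument simultaneously for every $a\in\FCC$, obtaining $\XX^a_i=\FYY^a_i$ componentwise and re-intersecting. Your route is more self-contained: it never invokes clearness, and the ``subtlety'' you flag about non-uniqueness of the decomposition is indeed harmless, since uniqueness of $a$-histories forces any chosen decomposition on either side to agree with the other. The paper's route is shorter once clearness is on the table and only needs one agent rather than all of $\FCC$. Both are fine.
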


\begin{proof}~

(1) $\Rightarrow$ (2)

Assume (1). Let $r \in \FST$ such that for every $a \in \FAG$ and every $s \in \FST$, there is a unique $a$-history from $r$ to $s$.
Let $s \in \FST$ and $\FCC \subseteq \FAG$. We want to show that there is a unique $\FCC$-history from $r$ to $s$.

Assume $s=r$. 
Note $r$ is a $\FCC$-history from $r$ to $r$.
Assume there is a different $\FCC$-history $\theta_\FCC$ from $r$ to $r$. Note that it is impossible that $\theta_\FCC = x$ for some $x \in \FST$. Let $\theta_\FCC = (s_0, \FYY_1, s_1, \dots, \FYY_n, s_n)$.

Assume $\FCC = \emptyset$. Then for every $i$ such that $1 \leq i \leq n$, $\FYY_i = \bigcup \Fnei_\emptyset (s_{i-1})$.
Let $a \in \FAG$. For every $1 \leq i \leq n$, let $\ZZ_i \in \Fnei_a (s_{i-1})$ such that $\ZZ_i \subseteq \FYY_i$ and $s_i \in \ZZ_i$. Then, $(s_0, \ZZ_1, s_1, \dots, \ZZ_n, s_n)$ is an $a$-history from $r$ to $r$, which is different from $r$.
Note $r$ is an $a$-history from $r$ to $r$. We have a contradiction.

Assume $\FCC \neq \emptyset$. Let $a \in \FCC$. Note $\Fnei_\FCC (s) = \bigodot \{\Fnei_a (s) \mid a \in \FCC\}$ for every $s \in \FST$. For every $1 \leq i \leq n$, let $\ZZ_i \in \Fnei_a (s_{i-1})$ such that $\FYY_i \subseteq \ZZ_i$. Then, $(s_0, \ZZ_1, s_1, \dots, \ZZ_n, s_n)$ is an $a$-history from $r$ to $r$, which is different from $r$.
Note $r$ is an $a$-history from $r$ to $r$. We have a contradiction.

Assume $s\neq r$.

Assume $\FCC = \emptyset$. Let $a \in \FAG$ and $\theta_a$ be the unique $a$-history from $r$ to $s$. It is impossible that $\theta_a = x$ for some $x \in \FST$. Let $\theta_a = (s_0, \FYY_1, s_1, \dots, \FYY_n, s_n)$.
It is easy to see $(s_0, \bigcup \Fnei_\emptyset (s_0), s_1, \dots, \bigcup \Fnei_\emptyset (s_{n-1}),$ $ s_n)$ is a $\FCC$-history from $r$ to $s$.
Let $\theta_\FCC$ be a $\FCC$-history from $r$ to $s$. We want to show $\theta_\FCC = (s_0, \bigcup \Fnei_\emptyset (s_0), s_1, \dots, \bigcup \Fnei_\emptyset (s_{n-1}), s_n)$.

It is impossible that $\theta_\FCC = x$ for some $x \in \FST$. Let $\theta_\FCC = (t_0, \bigcup \Fnei_\emptyset (t_0), t_1, \dots, \bigcup \Fnei_\emptyset (t_{m-1}),$ $t_m)$.
For every $1 \leq i \leq m$, let $\ZZ_i \in \Fnei_a (t_{i-1})$ such that $\ZZ_i \subseteq \bigcup \Fnei_\emptyset (t_{i-1})$ and $t_i \in \ZZ_i$. Then, $(t_0, \ZZ_1, t_1, \dots, \ZZ_m, t_m)$ is an $a$-history from $r$ to $s$.
Then $n = m$ and $s_i = t_i$ for all $0 \leq i \leq n$.
Then, $\theta_\FCC = (s_0, \bigcup \Fnei_\emptyset (s_0), s_1, \dots, \bigcup \Fnei_\emptyset (s_{n-1}), s_n)$.

Assume $\FCC \neq \emptyset$. Let $a \in \FCC$ and $\theta_a$ be the unique $a$-history from $r$ to $s$. Note that it is impossible that $\theta_a = x$ for some $x \in \FST$. Let $\theta_a = (s_0, \FYY_1, s_1, \dots, \FYY_n, s_n)$.
For every $1 \leq i \leq n$, let $\ZZ_i \in \Fnei_\FCC (s_{i-1})$ such that $\ZZ_i \subseteq \FYY_i$ and $s_i \in \ZZ_i$. Then, $(s_0, \ZZ_1, s_1, \dots, \ZZ_n, s_n)$ is a $\FCC$-history from $r$ to $s$.

Let $\theta_\FCC$ be a $\FCC$-history from $r$ to $s$. We want to show $\theta_\FCC = (s_0, \ZZ_1, s_1, \dots, \ZZ_n, s_n)$.

It is impossible that $\theta_\FCC = x$ for some $x \in \FST$. Let $\theta_\FCC = (t_0, \UU_1, t_1, \dots, \UU_m, t_m)$.
For every $1 \leq i \leq m$, let $\VV_i \in \Fnei_a (t_{i-1})$ such that $\UU_i \subseteq \VV_i$. Then, $(t_0, \VV_1, t_1, \dots, \VV_m, t_m)$ is an $a$-history from $r$ to $s$.
Then $n = m$ and $s_i = t_i$ for every $0 \leq i \leq n$.
Assume $\ZZ_i \neq \UU_i$ for some $1 \leq i \leq n$.
Note $\SNM$ is a clear single-coalition-first neighborhood model. Then, $\ZZ_i \cap \UU_i = \emptyset$. 
Then, $s_i \neq t_i$, which is impossible. Then, $\theta_\FCC = (s_0, \ZZ_1, s_1, \dots, \ZZ_n, s_n)$.

(2) $\Rightarrow$ (1)

(1) is a special case of (2).

\end{proof}

By Theorem \ref{theorem:three equivalent definitions tree-like action models}, tree-like grand-coalition-first action models can be equivalently defined in three different ways, which respectively concern all agents, all coalitions, and the grand coalition. It might be expected that the following condition is equivalent to the two conditions in Theorem \ref{theorem:two equivalent definitions tree-like single-coalition-first neighborhood models}: \emph{there is $r \in \FST$ such that for every $s \in \FST$, there is a unique $\FAG$-history from $r$ to $s$.}
In fact, this is not the case. Example \ref{example: AG partition not equivalent to a partition} also works here.

\subsection{Representation of tree-like grand-coalition-first action models by tree-like single-coalition-first neighborhood models}

\begin{theorem}
\label{theorem:representation tree-like grand-coalition-first action models TO tree-like single-coalition-first neighborhood models}

Every tree-like grand-coalition-first action model is $z$-representable by a tree-like single-coalition-first neighborhood model.

\end{theorem}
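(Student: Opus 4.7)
The plan is to take the natural single-coalition-first neighborhood model $z$-representing $\FGAM$ and verify it is tree-like with the same root. More precisely, since a tree-like grand-coalition-first action model is clear (Theorem~\ref{theorem:tree-like implies clear}) and hence a single-coalition-first action model (Theorem~\ref{theorem:clear grand-coalition-first action models are single-coalition-first action models}), Theorem~\ref{theorem:representation single-coalition-first action models TO single-coalition-first neighborhood models} supplies a single-coalition-first neighborhood model $\SNM = (\FST, \Fsuc, \{\Fnei_a \mid a \in \FAG\}, \Flab)$ with $\Fnei_a(s) = \{\Fout_a(s, \sigma_a) \mid \sigma_a \in \Faja_a(s)\}$ that $z$-represents $\FGAM$. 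Taking $r$ to be the root of $\FGAM$, I will show that $r$ is a root of $\SNM$.

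Existence of $a$-histories in $\SNM$ is straightforward: given $s \in \FST$ and $a \in \FAG$, Theorem~\ref{theorem:three equivalent definitions tree-like action models}(1) yields a unique $a$-history $(s_0, \sigma_a^1, s_1, \dots, \sigma_a^n, s_n)$ in $\FGAM$ from $r$ to $s$; setting $\FYY_i = \Fout_a(s_{i-1}, \sigma_a^i)$ produces an $a$-history $(s_0, \FYY_1, s_1, \dots, \FYY_n, s_n)$ in $\SNM$ from $r$ to $s$ (the degenerate case $s = r$ being handled by the length-zero history $r$ itself).

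The main obstacle is uniqueness, and it hinges crucially on clearness. Given an arbitrary $\SNM$-$a$-history $(s_0, \FYY_1, s_1, \dots, \FYY_n, s_n)$ from $r$ to $s$, each $\FYY_i$ is of the form $\Fout_a(s_{i-1}, \sigma_a^i)$ for some $\sigma_a^i \in \Faja_a(s_{i-1})$. Because $\FGAM$ is clear, Theorem~\ref{theorem:three equivalent definitions clear grand-coalition-first action models}(1) tells us that distinct available single-agent actions at a state have disjoint (necessarily non-empty) outcomes, so $\FYY_i$ uniquely determines $\sigma_a^i$. Thus the $\SNM$-$a$-history corresponds to the $\FGAM$-$a$-history $(s_0, \sigma_a^1, s_1, \dots, \sigma_a^n, s_n)$ from $r$ to $s$, which is unique by Theorem~\ref{theorem:three equivalent definitions tree-like action models}(1); consequently the $\sigma_a^i$'s, and hence the $\FYY_i$'s and the length $n$, are determined by the endpoints, so the $\SNM$-$a$-history is unique as well. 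This establishes that $\SNM$ is tree-like, completing the representation.
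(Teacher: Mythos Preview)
Your proposal is correct and follows essentially the same route as the paper: reduce to a single-coalition-first action model via Theorems~\ref{theorem:tree-like implies clear} and~\ref{theorem:clear grand-coalition-first action models are single-coalition-first action models}, take the canonical $z$-representing neighborhood model, and check it is tree-like. The paper dismisses this last verification as ``easy to show,'' whereas you spell it out explicitly, using clearness (Theorem~\ref{theorem:three equivalent definitions clear grand-coalition-first action models}(1)) to recover a unique action from each neighborhood and thereby set up a bijection between $\FGAM$-$a$-histories and $\SNM$-$a$-histories; this is exactly the intended argument.
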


\begin{proof}~

Let $\FGAM$ be a tree-like grand-coalition-first action model. By Theorems \ref{theorem:tree-like implies clear} and \ref{theorem:clear grand-coalition-first action models are single-coalition-first action models}, $\FGAM$ is a tree-like single-coalition-first action model. It is easy to show that $\FGAM$ is $z$-representable by a tree-like single-coalition-first neighborhood model.

\end{proof}

\begin{theorem}
\label{theorem:representation tree-like single-coalition-first neighborhood models TO tree-like single-coalition-first action models}

Every tree-like single-coalition-first neighborhood model $z$-represents a tree-like grand-coalition-first action model.

\end{theorem}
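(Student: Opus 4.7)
The plan is to reuse the construction from the proof of Theorem~\ref{theorem:representation single-coalition-first neighborhood models TO single-coalition-first action models}. Given $\SNM = (\FST, \Fsuc, \{\Fnei_a \mid a \in \FAG\}, \Flab)$ with root $r$, I form the single-coalition-first action model $\SAM$ with $\FAC = \{\alpha_{a-s-\XX} \mid a \in \FAG, s \in \FST, \XX \in \Fnei_a(s)\}$ and $\Fout_a(s, \alpha_{a-s-\XX}) = \XX$ (with $\Fout_a$ empty on all other inputs). By that theorem, $\SNM$ $z$-represents $\SAM$, and by Theorem~\ref{theorem:implication}, $\SAM$ is equivalently a grand-coalition-first action model $\GAM$ for which $\SNM$ is still a $z$-representation. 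It only remains to verify that $\GAM$, taken with $r$, is tree-like.

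To do this, I would invoke condition~(1) of Theorem~\ref{theorem:three equivalent definitions tree-like action models} and show that for every $a \in \FAG$ and $s \in \FST$ there is a unique $a$-history from $r$ to $s$ in $\GAM$. The main step is to set up a length-preserving bijection between $a$-histories in $\GAM$ and $a$-histories in $\SNM$ sharing the same endpoints. By the construction of $\FAC$ and $\Fout_a$, an action $\sigma_a \in \FJA_a$ has $\Fout_a(s', \sigma_a) \neq \emptyset$ precisely when $\sigma_a = \alpha_{a-s'-\XX}$ for some $\XX \in \Fnei_a(s')$, and then $\Fout_a(s', \sigma_a) = \XX$. Hence every $a$-history $(s_0, \sigma_a^1, s_1, \dots, \sigma_a^n, s_n)$ in $\GAM$ must be of the form $\sigma_a^i = \alpha_{a-s_{i-1}-\XX_i}$ with $\XX_i \in \Fnei_a(s_{i-1})$ and $s_i \in \XX_i$; stripping the action labels yields an $a$-history $(s_0, \XX_1, s_1, \dots, \XX_n, s_n)$ in $\SNM$, and the reverse translation is the obvious one, with the degenerate one-point histories corresponding to themselves.

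Given this bijection, existence and uniqueness of $a$-histories from $r$ to $s$ in $\GAM$ follow at once from the tree-likeness of $\SNM$, and Theorem~\ref{theorem:three equivalent definitions tree-like action models} then gives that $\GAM$ is tree-like. The main bookkeeping obstacle I foresee is verifying that the $\Fout_a$ derived in the grand-coalition-first view of $\GAM$ from $\Fout_\FAG$ coincides with the $\Fout_a$ originally built into $\SAM$; this relies on $\SNM$ being clear (a consequence of its tree-likeness), so that each $\Fnei_a(s)$ partitions $\Fsuc(s)$ and the canonical union-of-intersections derivation collapses back onto a single cell of the partition. Once that compatibility is in place, the rest of the argument is routine.
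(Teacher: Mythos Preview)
Your approach is essentially the paper's: reuse the construction from Theorem~\ref{theorem:representation single-coalition-first neighborhood models TO single-coalition-first action models}, invoke Theorem~\ref{theorem:implication} to view $\SAM$ as a grand-coalition-first action model, and then verify tree-likeness (the paper leaves this as ``easy to check'' while you spell out the history bijection). Your bookkeeping worry is unnecessary, though: the coincidence of the single-agent $\Fout_a$ in $\SAM$ with the one derived from $\Fout_\FAG$ is exactly condition~(3a) of Theorem~\ref{theorem:equivalent condition}, which holds in every single-coalition-first action model and does not require clearness.
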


\begin{proof}~

Let $\SNM = (\FST, \{\Fnei_\FCC \mid \FCC \subseteq \FAG\}, \Flab)$ be a tree-like single-coalition-first neighborhood model.
Define a single-coalition-first action model $\SAM = (\FST, \FAC, \{\Fout_\FCC \mid \FCC \subseteq \FAG\}, \{\Faja_\FCC \mid \FCC \subseteq \FAG\}, \Flab)$ as in the proof for Theorem \ref{theorem:representation single-coalition-first neighborhood models TO single-coalition-first action models}. As shown there, $\SNM$ $z$-represents $\SAM$.
It is easy to check that $\SAM$ is tree-like. By Theorem \ref{theorem:implication}, $\SAM$ is a grand-coalition-first action model.

\end{proof}

The following result follows from Theorems \ref{theorem:representation tree-like grand-coalition-first action models TO tree-like single-coalition-first neighborhood models}, \ref{theorem:representation tree-like single-coalition-first neighborhood models TO tree-like single-coalition-first action models}, and \ref{theorem:X iff X}.

\begin{theorem}[Representation of the class of tree-like grand-coalition-first action $\FXX$-models by the class of tree-like single-coalition-first neighborhood $\FXX$-models]
\label{theorem:Representation of tree-like grand-coalition-first action models by tree-like single-coalition-first neighborhood models}

For every $\FXX \in \FES$, the class of tree-like grand-coalition-first action $\FXX$-models is $z$-representable by the class of tree-like single-coalition-first neighborhood $\FXX$-models.

\end{theorem}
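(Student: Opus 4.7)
The plan is to mirror the structure of the proof of Theorem~\ref{theorem:Representation of clear grand-coalition-first action models by clear single-coalition-first neighborhood models}, substituting ``tree-like'' for ``clear'' throughout and relying on the three cited theorems to glue the two directions together. The statement has two directions to check: (i) every tree-like grand-coalition-first action $\FXX$-model is $z$-represented by some tree-like single-coalition-first neighborhood $\FXX$-model, and (ii) every tree-like single-coalition-first neighborhood $\FXX$-model $z$-represents some tree-like grand-coalition-first action $\FXX$-model. The property-preservation step will go through Theorem~\ref{theorem:X iff X}, which is phrased for single-coalition-first action models, so I will first note that our tree-like grand-coalition-first action models fall into that class.

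For direction (i), I would fix $\FXX \in \FES$ and a tree-like grand-coalition-first action $\FXX$-model $\FGAM$. By Theorem~\ref{theorem:representation tree-like grand-coalition-first action models TO tree-like single-coalition-first neighborhood models}, there is a tree-like single-coalition-first neighborhood model $\SNM$ that $z$-represents $\FGAM$. To upgrade $\SNM$ to an $\FXX$-model, I would observe that $\FGAM$ is single-coalition-first: by Theorem~\ref{theorem:tree-like implies clear} it is clear, and by Theorem~\ref{theorem:clear grand-coalition-first action models are single-coalition-first action models} every clear grand-coalition-first action model is single-coalition-first. Theorem~\ref{theorem:X iff X} then applies and yields that $\SNM$ carries exactly the properties in $\FXX$.

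For direction (ii), I would fix a tree-like single-coalition-first neighborhood $\FXX$-model $\SNM$. Theorem~\ref{theorem:representation tree-like single-coalition-first neighborhood models TO tree-like single-coalition-first action models} produces a tree-like grand-coalition-first action model $\FGAM$ $z$-represented by $\SNM$. Again, $\FGAM$ is single-coalition-first via Theorems~\ref{theorem:tree-like implies clear} and~\ref{theorem:clear grand-coalition-first action models are single-coalition-first action models}, so Theorem~\ref{theorem:X iff X} transfers the $\FXX$-properties from $\SNM$ back to $\FGAM$.

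No step here should be genuinely hard: the construction work lives entirely inside the two representation theorems cited above (which reuse the action model built in the proof of Theorem~\ref{theorem:representation single-coalition-first neighborhood models TO single-coalition-first action models}), and the only subtle point is to remember that Theorem~\ref{theorem:X iff X} needs single-coalition-first structure on the action side, which we supply by chaining ``tree-like $\Rightarrow$ clear $\Rightarrow$ single-coalition-first.'' The main, if minor, obstacle to watch is thus the bookkeeping that every model we hand to Theorem~\ref{theorem:X iff X} is of the right shape, and that the tree-like property itself is preserved by the explicit $z$-representation construction used in the proofs of Theorems~\ref{theorem:representation tree-like grand-coalition-first action models TO tree-like single-coalition-first neighborhood models} and~\ref{theorem:representation tree-like single-coalition-first neighborhood models TO tree-like single-coalition-first action models}.
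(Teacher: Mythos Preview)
Your proposal is correct and matches the paper's approach exactly: the paper simply states that the result follows from Theorems~\ref{theorem:representation tree-like grand-coalition-first action models TO tree-like single-coalition-first neighborhood models}, \ref{theorem:representation tree-like single-coalition-first neighborhood models TO tree-like single-coalition-first action models}, and~\ref{theorem:X iff X}, and you have spelled out precisely how those pieces fit together, including the ``tree-like $\Rightarrow$ clear $\Rightarrow$ single-coalition-first'' chain needed to invoke Theorem~\ref{theorem:X iff X}.
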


\section{Each of those eight coalition logics is determined by these six kinds of models, too}
\label{section:Each of the eight coalition logics is determined by the six kinds of models, too}

In the following, for every finite nonempty sequence $\pi$, we use $\pi^l$ to denote the last element of $\pi$.

\begin{definition}[Unravelling of pointed grand-coalition-first action models]

Let $(\FGAM, s)$ be a pointed grand-coalition-first action model, where $\FGAM = (\FST, \FAC, \{\Fout_\FCC \mid \FCC \subseteq \FAG\}, \{\Fav_\FCC \mid \FCC \subseteq \FAG\}, \Flab)$.

Define a grand-coalition-first action model $\FGAM' = (\FST', \FAC, \{\Fout'_\FCC \mid \FCC \subseteq \FAG\}, \{\Fav'_\FCC \mid \FCC \subseteq \FAG\}, \Flab')$ as follows:
\begin{itemize}

\item

$\FST'= \FST_0' \cup \FST_1' \cup \FST_2' \cup \dots$, where:
\begin{itemize}

\item
$\FST_0'= \{s\}$;

\item

$\FST_{k+1}' = \{\pi-\sigma_\FAG-u \mid \pi \in \FST_{k}', \sigma_\FAG \in \FJA_\FAG, \text{and } u \in \Fout_\FAG(\pi^l,\sigma_\FAG)\} $.

\end{itemize}

\item

\begin{itemize}

\item

for every $\pi \in \FST'$ and $\sigma_\FAG \in \FJA_\FAG$,
$\Fout'_\FAG (\pi, \sigma_\FAG) = \{ \pi-\sigma_\FAG-u \mid u \in \Fout_\FAG ( \pi^l, \sigma_\FAG) \}$.

\item 

for all $\FCC \subseteq \FAG$, $\pi \in \FST'$ and $\sigma_\FCC \in \FJA_\FCC$: $\Fout'_\FCC (\pi, \ja{\FCC}) = \bigcup \{\Fout'_\FAG (\pi, \ja{\FAG}) \mid \ja{\FAG} \in \FJA_\FAG \text{ and } \ja{\FCC} \subseteq \ja{\FAG}\}$.

\end{itemize}

\item 

for every $\FCC \subseteq \FAG$ and $\pi \in \FST'$, $\Faja'_\FCC (\pi) = \{\ja{\FCC} \in \FJA_\FCC \mid \Fout'_\FCC (\pi, \ja{\FCC}) \neq \emptyset\}$.

\item

for every $\pi \in \FST'$, $\Flab'(\pi) = \{ p \in \FAP |$ $p \in \Flab(\pi^l)\}$.

\end{itemize}

$(\FGAM', s)$ is called the \Fdefs{unravelling} of $(\FGAM, s)$.

\end{definition}

It is easy to check that $(\FGAM', s)$ is tree-like.
This notion is a variation of the notion of unravelling used throughout modal logic. The slight difference is that, in the unravelled pointed model defined above, states are sequences consisting of states (of the original model) and action profiles, rather than just states (of the original model).

\begin{lemma}

Let $(\FGAM, s)$ be a pointed grand-coalition-first action model and $(\FGAM', s)$ be its \Fdefs{unravelling}, where $\FGAM = (\FST, \FAC, \{\Fout_\FCC \mid \FCC \subseteq \FAG\}, \{\Fav_\FCC \mid \FCC \subseteq \FAG\}, \Flab)$ and $\FGAM' = (\FST', \FAC, \{\Fout'_\FCC \mid \FCC \subseteq \FAG\}, \{\Fav'_\FCC \mid \FCC \subseteq \FAG\}, \Flab)$.

Then:
\begin{enumerate}[label=(\arabic*),leftmargin=3.33em]

\item 

for every $\pi \in \FST'$, $\FCC \subseteq \FAG$, $\sigma_\FCC \in \FJA_\FCC$, and $\pi' \in \FST'$, if $\pi' \in \Fout'_\FCC (\pi,\sigma_\FCC)$, then $\pi'^l \in \Fout_\FCC (\pi^l,\sigma_\FCC)$;

\item 

for every $\pi \in \FST'$, $\FCC \subseteq \FAG$, $\sigma_\FCC \in \FJA_\FCC$, $t \in \FST$, if $t \in \Fout_\FCC (\pi^l,\sigma_\FCC)$, then there is $\pi' \in \FST'$ such that $\pi'^l = t$ and $\pi' \in \Fout'_\FCC (\pi,\sigma_\FCC)$.

\end{enumerate}

\end{lemma}

\begin{proof}~

\begin{enumerate}[label=(\arabic*),leftmargin=3.33em]

\item 

Let $\pi \in \FST'$, $\FCC \subseteq \FAG$, $\sigma_\FCC \in \FJA_\FCC$, and $\pi' \in \FST'$. Assume $\pi' \in \Fout'_\FCC (\pi, \sigma_\FCC)$. Then, $\pi' \in \Fout'_\FAG (\pi, \sigma_\FAG)$ for some $\sigma_\FAG \in \FJA_\FAG$ such that $\sigma_\FCC \subseteq \sigma_\FAG$.
Then, $\pi' = \pi-\sigma_\FAG-u$ for some $u \in \Fout_\FAG(\pi^l, \sigma_\FAG)$. Note $\Fout_\FAG(\pi^l, \sigma_\FAG) \subseteq \Fout_\FCC(\pi^l, \sigma_\FCC)$. Then,  $u \in \Fout_\FCC (\pi^l, \sigma_\FCC)$, that is, $\pi'^l \in \Fout_\FCC (\pi^l, \sigma_\FCC)$.

\item

Let $\pi \in \FST'$, $\FCC \subseteq \FAG$, $\sigma_\FCC \in \FJA_\FCC$, and $t \in \FST$. Assume $t \in \Fout_\FCC (\pi^l, \sigma_\FCC)$. Then, $t \in \Fout_\FAG (\pi^l, \sigma_\FAG)$ for some $\sigma_\FAG \in \FJA_\FAG$ such that $\sigma_\FCC \subseteq \sigma_\FAG$. 
Then, there is $\pi' \in \FST'$ such that $\pi' = \pi-\sigma_\FAG-t$ and $\pi' \in \Fout'_\FAG (\pi,\sigma_\FAG)$. Note $\Fout'_\FAG(\pi, \sigma_\FAG) \subseteq \Fout'_\FCC(\pi, \sigma_\FCC)$. Then, $\pi' \in \Fout'_\FCC (\pi, \sigma_\FCC)$.

\end{enumerate}

\end{proof}

\begin{lemma}
\label{lemma:unravelling X model}

Let $\FXX \in \FES$, $(\FGAM, s)$ be a pointed grand-coalition-first action $\FXX$-model, and $(\FGAM', s)$ be the \Fdefs{unravelling}, where $\FGAM = (\FST, \FAC, \{\Fout_\FCC \mid \FCC \subseteq \FAG\}, \{\Fav_\FCC \mid \FCC \subseteq \FAG\}, \Flab)$ and $\FGAM' = (\FST', \FAC, \{\Fout'_\FCC \mid \FCC \subseteq \FAG\}, \{\Fav'_\FCC \mid \FCC \subseteq \FAG\}, \Flab)$.

Then $(\FGAM', s)$ is a pointed grand-coalition-first action $\FXX$-model.

\end{lemma}

\begin{proof}~

Assume $\FGAM$ is serial. Let $\pi \in \FST'$. Note there is $\sigma_\FAG$ such that $\Fout_\FAG (\pi^l, \sigma_\FAG)$ $ \neq \emptyset$. Let $u \in \Fout_\FAG (\pi^l, \sigma_\FAG)$. Then, $\pi - \sigma_\FAG - u \in \Fout'_\FAG (\pi, \sigma_\FAG)$. Then, $\Fout'_\FAG (\pi, \sigma_\FAG) \neq \emptyset$. Then, $\FGAM'$ is serial.

Assume $\FGAM$ is independent. Let $\pi \in \FST'$, $\FCC, \FDD \subseteq \FAG$ such that $\FCC \cap \FDD = \emptyset$, $\sigma_\FCC \in \Faja'_\FCC(\pi)$, and $\sigma_\FDD \in \Faja'_\FDD(\pi)$.
To show $\FGAM'$ is independent, it suffices to show $\sigma_\FCC \cup \sigma_\FDD \in \Faja'_{\FCC \cup \FDD}(\pi)$.
Note $\Fout'_{\FCC}(\pi, \sigma_\FCC) \neq \emptyset$ and $\Fout'_{\FDD}(\pi, \sigma_\FDD) \neq \emptyset$. By the claim (1), $\Fout_{\FCC}(\pi^l, \sigma_\FCC) \neq \emptyset$ and $\Fout_{\FDD}(\pi^l, \sigma_\FDD) \neq \emptyset$. Then $\sigma_\FCC \in \Faja_\FCC(\pi^l)$ and $\sigma_\FDD \in \Faja_\FDD (\pi^l)$.
Then, $\sigma_\FCC \cup \sigma_\FDD \in \Faja_{\FCC \cup \FDD}(\pi^l)$. Then, $\Fout_{\FCC \cup \FDD}(\pi^l, \sigma_\FCC \cup \sigma_\FDD) \neq \emptyset$. By the claim (2), $\Fout'_{\FCC \cup \FDD}(\pi, \sigma_\FCC \cup \sigma_\FDD) \neq \emptyset$.
Then, $\sigma_\FCC \cup \sigma_\FDD \in \Faja'_{\FCC \cup \FDD}(\pi)$.

Assume $\FGAM$ is deterministic. Assume $\FGAM'$ is not deterministic. Then, there is $\pi \in \FST'$ and $\sigma_\FAG \in \Faja'_\FAG(\pi)$ such that $\Fout'_\FAG(\pi, \sigma_\FAG)$ is not a singleton. Let $\pi', \pi'' \in \Fout'_\FAG (\pi, \sigma_\FAG)$ such that $\pi' \neq \pi''$.
Let $\pi' = \pi - \sigma_\FAG - t'$ and $\pi'' = \pi - \sigma_\FAG - t''$. Note $t' \neq t''$. Then, $t', t'' \in \Fout_\FAG(\pi^l, \sigma_\FAG)$. We have a contradiction.

\end{proof}

\begin{lemma}
\label{lemma:unravelling invariant}

Let $\FXX \in \FES$, $(\FGAM, s)$ be a pointed grand-coalition-first action model, and $(\FGAM', s)$ be the \Fdefs{unravelling} of $(\FGAM, s)$, where $\FGAM = (\FST, \FAC, \{\Fout_\FCC \mid \FCC \subseteq \FAG\}, \{\Fav_\FCC \mid \FCC \subseteq \FAG\}, \Flab)$ and $\FGAM' = (\FST', \FAC, \{\Fout'_\FCC \mid \FCC \subseteq \FAG\}, \{\Fav'_\FCC \mid \FCC \subseteq \FAG\}, \Flab)$.

Then for all $\pi \in \FST'$ and $\phi \in \Phi$, $\FGAM', \pi \Vdash \phi$ if and only if $\FGAM, \pi^l \Vdash \phi$.

\end{lemma}

\begin{proof}~

We put an induction on $\phi$. We consider only the case $\phi = \Fclo{\FCC} \psi$ and skip others.

Assume $\FGAM', \pi \Vdash \Fclo{\FCC} \psi$. Then, there is $\sigma_\FCC \in \Fav'_\FCC (\pi)$ such that for all $\pi' \in \Fout'_\FCC(\pi, \sigma_\FCC)$, $\FGAM', \pi' \Vdash \psi$.
We want to show $\sigma_\FCC \in \Fav_\FCC (\pi^l)$ and for all $t \in\Fout_\FCC(\pi^l, \sigma_\FCC)$, $\FGAM, t \Vdash \psi$.
Note $\Fout'_\FCC (\pi, \sigma_\FCC) \neq \emptyset$.
By the claim (1), $\Fout_{\FCC}(\pi^l,\sigma_\FCC) \neq \emptyset$. Then, $\sigma_\FCC \in \Faja_\FCC (\pi^l)$.
Let $t \in \Fout_\FCC(\pi^l, \sigma_\FCC)$. By the claim (2), there is a $\pi' \in \FST'$ such that $\pi'^l = t$ and $\pi' \in \Fout'_\FCC (\pi, \sigma_\FCC)$.
Note $\FGAM', \pi' \Vdash \psi$. By the inductive hypothesis, $\FGAM, t \Vdash \psi$.

Assume $\FGAM, \pi^l \Vdash \Fclo{\FCC} \psi$. Then, there is $\sigma_\FCC \in \Fav_\FCC (\pi^l) $ such that for all $t \in \Fout_\FCC(\pi^l, \sigma_\FCC)$, $\FGAM, t \Vdash \psi$.
We want to show $\sigma_\FCC \in \Fav'_\FCC (\pi)$ and for all $\pi' \in \Fout'_\FCC(\pi, \sigma_\FCC)$, $\FGAM', \pi' \Vdash \psi$.
Note $\Fout_\FCC (\pi^l, \sigma_\FCC) \neq \emptyset$. By the claim (2), $\Fout'_\FCC (\pi, \sigma_\FCC) \neq \emptyset$. Then $\sigma_\FCC \in \Fav'_\FCC (\pi)$.
Let $\pi' \in \Fout'_\FCC(\pi, \sigma_\FCC)$. By the claim (1), $\pi'^l \in \Fout_\FCC(\pi^l, \sigma_\FCC)$. Note $\GAM, \pi'^l \Vdash \psi$. By the inductive hypothesis, $\FGAM', \pi' \Vdash \psi$.

\end{proof}

From Lemmas \ref{lemma:unravelling X model} and \ref{lemma:unravelling invariant}, the following result follows:

\begin{theorem}
\label{theorem:vdash}

For every $\FXX \in \FES$ and pointed grand-coalition-first action $\FXX$-model $(\FGAM, s)$, there is a pointed tree-like grand-coalition-first action $\FXX$-model $(\FGAM', s')$ such that for all $\phi \in \Phi$, $\FGAM, s \Vdash \phi$ if and only if $\FGAM', s' \Vdash \phi$.

\end{theorem}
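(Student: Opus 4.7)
The plan is to prove Theorem \ref{theorem:vdash} by a standard unraveling construction at $s$. Take $\FST'$ to be the set of all $\FAG$-histories in $\FGAM$ starting at $s$, let $s'$ be the trivial one-node history $s$, keep $\FAC$ unchanged, and for any $\theta \in \FST'$ ending at a state $t$ and any $\sigma_\FAG \in \FJA_\FAG$ set
\[
\Fout'_\FAG(\theta, \sigma_\FAG) = \{(\theta, \sigma_\FAG, u) \mid u \in \Fout_\FAG(t, \sigma_\FAG)\},
\]
and $\Flab'(\theta) = \Flab(t)$. This gives a grand-coalition-first action model $\FGAM' = (\FST', \FAC, \Fout'_\FAG, \Flab')$ whose subcoalition outcome functions $\Fout'_\FCC$ and availability functions $\Faja'_\FCC$ are then derived by the recipe of the corresponding earlier definitions.

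First I would verify that $\FGAM'$ is tree-like with root $s'$: every state $\theta = (s, \sigma_\FAG^1, s_1, \dots, \sigma_\FAG^n, s_n)$ is reached from $s'$ by reading its own initial segments step by step, and since $\Fout'_\FAG$ only produces one-step extensions of the current history, any $\FAG$-history from $s'$ to $\theta$ is forced to traverse exactly those initial segments with the same action profiles, giving uniqueness. Next I would show that the $\FXX$-properties transfer: because $\Fout'_\FAG(\theta, \sigma_\FAG) \neq \emptyset$ iff $\Fout_\FAG(t, \sigma_\FAG) \neq \emptyset$, and because the derived $\Fout'_\FCC, \Faja'_\FCC$ use the same formulas in $\FGAM'$ as $\Fout_\FCC, \Faja_\FCC$ use in $\FGAM$, seriality, independence, and determinism pass pointwise from the endpoint $t$ of $\theta$ to $\theta$ itself.

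Finally, I would define $Z \subseteq \FST \times \FST'$ by $t Z \theta$ iff $\theta$ ends at $t$, and prove by induction on $\phi \in \Phi$ that $t Z \theta$ implies $\FGAM, t \Vdash \phi$ iff $\FGAM', \theta \Vdash \phi$; applied at $s Z s'$, this yields the theorem. The atomic and Boolean cases are immediate from $\Flab'(\theta) = \Flab(t)$. For $\Fclo{\FCC}\phi$, the key observation is that $\Fout'_\FCC(\theta, \sigma_\FCC)$ consists exactly of histories of the form $(\theta, \sigma_\FAG, u)$ with $\sigma_\FCC \subseteq \sigma_\FAG$ and $u \in \Fout_\FAG(t, \sigma_\FAG)$, so the map $(\theta, \sigma_\FAG, u) \mapsto u$ is a $Z$-respecting surjection of $\Fout'_\FCC(\theta, \sigma_\FCC)$ onto $\Fout_\FCC(t, \sigma_\FCC)$, which lets each witness $\sigma_\FCC$ for $\Fclo{\FCC}\phi$ on one side serve as a witness on the other. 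The main obstacle is nothing deep but the bookkeeping of checking that the subcoalition outcome functions in the tree unfolding align with those in $\FGAM$ via $Z$ and that each of the three optional properties survives the unfolding; both reduce to the fact that every derived quantity depends only on $\Fout_\FAG$ and is insensitive to whether the relevant outcome sits over the state $t$ or over a history ending at $t$.
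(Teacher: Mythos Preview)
Your proposal is correct and follows essentially the same approach as the paper: both unravel $\FGAM$ at $s$ into $\FAG$-histories, verify tree-likeness and preservation of the $\FXX$-properties pointwise via the endpoint map, and prove the modal equivalence by induction on $\phi$ using the correspondence between $\Fout'_\FCC(\theta,\sigma_\FCC)$ and $\Fout_\FCC(t,\sigma_\FCC)$ (the paper records this correspondence as two explicit claims, which amount exactly to the back-and-forth conditions for your relation $Z$).
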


\medskip

\begin{theorem}

For all $\FXX \in \FES$ and $\phi \in \Phi$, the following statements are equivalent:
\begin{enumerate}[label=(\arabic*),leftmargin=3.33em,start=0]

\item

$\phi$ is valid with respect to the class of \emph{grand-coalition-first action $\FXX$-models};

\item 

$\phi$ is valid with respect to the class of \emph{single-coalition-first action $\FXX$-models};

\item 

$\phi$ is valid with respect to the class of \emph{single-coalition-first neighborhood $\FXX$-models};

\item 

$\phi$ is valid with respect to the class of \emph{clear grand-coalition-first action $\FXX$-models};

\item 

$\phi$ is valid with respect to the class of \emph{clear single-coalition-first neighborhood $\FXX$-models};

\item 

$\phi$ is valid with respect to the class of \emph{tree-like grand-coalition-first action $\FXX$-models};

\item 

$\phi$ is valid with respect to the class of \emph{tree-like single-coalition-first neighborhood $\FXX$-models}.

\end{enumerate}

\end{theorem}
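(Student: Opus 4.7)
The plan is to organize the seven classes into a cycle and close it using results already proved. By Theorem \ref{theorem:Representation of single-coalition-first action models by single-coalition-first neighborhood models} combined with Theorem \ref{theorem:classes representable}, statements (1) and (2) are equivalent; similarly, (3) and (4) are equivalent by Theorem \ref{theorem:Representation of clear grand-coalition-first action models by clear single-coalition-first neighborhood models}, and (5) and (6) are equivalent by Theorem \ref{theorem:Representation of tree-like grand-coalition-first action models by tree-like single-coalition-first neighborhood models}. So the job reduces to showing $(0) \Leftrightarrow (1) \Leftrightarrow (3) \Leftrightarrow (5)$ for action-model validity.

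Next I would exploit the model-class inclusion chain. By Theorem \ref{theorem:tree-like implies clear}, every tree-like grand-coalition-first action model is a clear grand-coalition-first action model; by Theorem \ref{theorem:clear grand-coalition-first action models are single-coalition-first action models}, every clear one is a single-coalition-first action model; and by Theorem \ref{theorem:implication}, every single-coalition-first action model is a grand-coalition-first action model. Because seriality, independence, and determinism are defined uniformly in terms of availability and outcome functions of arbitrary coalitions, each inclusion restricts cleanly to the $\FXX$-subclass. Validity on a larger class implies validity on any subclass, so we obtain the implications $(0) \Rightarrow (1) \Rightarrow (3) \Rightarrow (5)$ at once.

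To close the cycle, I would invoke Theorem \ref{theorem:vdash}: every pointed grand-coalition-first action $\FXX$-model has a pointed tree-like grand-coalition-first action $\FXX$-model that satisfies exactly the same formulas of $\Phi$. So if $\phi$ fails at some point of a grand-coalition-first action $\FXX$-model, it fails at some point of a tree-like grand-coalition-first action $\FXX$-model, giving $(5) \Rightarrow (0)$ by contraposition. Combining with the previous chain yields $(0) \Leftrightarrow (1) \Leftrightarrow (3) \Leftrightarrow (5)$, and together with the three representation-driven equivalences above, all seven statements are equivalent.

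There is no genuine obstacle here; the proof is bookkeeping. The only place where one must be careful is that the $\FXX$-property is transported correctly in each step: across the representation theorems this is guaranteed by Theorem \ref{theorem:X iff X}, across the model-class inclusions it is immediate from the uniform definitions of $\mathtt{S}$, $\mathtt{I}$, $\mathtt{D}$, and across the tree-unfolding it is already part of the conclusion of Theorem \ref{theorem:vdash}. The slightly delicate point worth stating explicitly is that the unfolding in Theorem \ref{theorem:vdash} preserves each of the three properties simultaneously, so the same unfolding works for every $\FXX \in \FES$.
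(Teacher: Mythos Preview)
Your proposal is correct and follows essentially the same route as the paper: pair up (1)--(2), (3)--(4), (5)--(6) via the three representation theorems together with Theorem~\ref{theorem:classes representable}, obtain $(0)\Rightarrow(1)\Rightarrow(3)\Rightarrow(5)$ from the inclusion chain, and close with Theorem~\ref{theorem:vdash}. Your closure $(5)\Rightarrow(0)$ is in fact the cleaner statement, since Theorem~\ref{theorem:vdash} applies to arbitrary grand-coalition-first action $\FXX$-models; the paper records this step as $(5)\Rightarrow(1)$, which leaves the link back to $(0)$ implicit.
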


\begin{proof}~

By Theorem \ref{theorem:Representation of single-coalition-first action models by single-coalition-first neighborhood models} and Theorem \ref{theorem:classes representable}, (1) and (2) are equivalent.
By Theorem \ref{theorem:Representation of clear grand-coalition-first action models by clear single-coalition-first neighborhood models} and Theorem \ref{theorem:classes representable}, (3) and (4) are equivalent.
By Theorem \ref{theorem:Representation of tree-like grand-coalition-first action models by tree-like single-coalition-first neighborhood models} and Theorem \ref{theorem:classes representable}, (5) and (6) are equivalent.
It suffices to show that (0), (1), (3), and (5) are equivalent.
By Theorem \ref{theorem:implication}, (0) implies (1).
By Theorem \ref{theorem:clear grand-coalition-first action models are single-coalition-first action models}, (1) implies (3).
By Theorem \ref{theorem:tree-like implies clear}, (3) implies (5).
By Theorem \ref{theorem:vdash}, (5) implies (0).

\end{proof}

\section{Conclusion}
\label{section:Conclusion}

Li and Ju \cite{li_minimal_2025,li_completeness_2024} proposed eight coalition logics based on grand-coalition-first action models. In this work, we show that each of them is also determined by six other kinds of models: single-coalition-first action models, single-coalition-first actual neighborhood models, clear grand-coalition-first action models, clear single-coalition-first actual neighborhood models, tree-like grand-coalition-first action models, and tree-like single-coalition-first actual neighborhood models.

Among the six kinds of models, except for tree-like grand-coalition-first action models and tree-like single-coalition-first actual neighborhood models, which mostly make technical sense, the other four kinds of models make good sense for representing coalitional powers. The language of coalition logic is not very expressive and cannot distinguish between them. Further work worth doing is to consider richer languages.

The operator $\Fclo{\FCC} \phi$ of coalition logic is a bundled operator: it implicitly has an existential quantifier over actions, a universal operator over possible outcomes of actions, and a next-time operator. What follows are three ideas.
The first is to consider the language of Alternating-time Temporal Logic $\FATL$ and to see whether we could have similar results. The language of $\FATL$ is an extension of the language of coalition logic with the operators $\Fclo{\FCC} \mathtt{G} \phi$ (\emph{$\FCC$ has a strategy to ensure $\phi$ forever}) and $\Fclo{\FCC} \phi \mathtt{U} \psi$ (\emph{$\FCC$ has a strategy to ensure $\phi$ until $\psi$}).
The second idea is to separate the quantifier over actions in the meaning of $\Fclo{\FCC} \phi$. Logics following this idea would have connections to logics of ``bring-it-about'' \cite{porn_logic_1970, elgesem_action_1993}.
The third idea is to consider languages of STIT (see to it that) logics \cite{belnap_seeing_1988}, which are temporal extensions of logics of ``bring-it-about''.

\subsection*{Acknowledgments}

Thanks go to Thomas \r{A}gotnes, Valentin Goranko, the three anonymous reviewers and the audience of the 6th International Conference on Logic and Argumentation, and the audience of a logic seminar at Beijing Normal University.

\bibliography{Strategy-reasoning,Strategy-reasoning-special}
\bibliographystyle{alpha}

\end{document}